\def\BibTeX{{\rm B\kern-.05em{\sc i\kern-.025em b}\kern-.08em
    T\kern-.1667em\lower.7ex\hbox{E}\kern-.125emX}}
\newcommand{\edit}[1]  {{{#1}}}
\newtheorem{theorem}{\textbf{Theorem}}
\newtheorem{definition}{\textbf{Definition}}
\newtheorem{proposition}{\textbf{Proposition}}
\newtheorem{lemma}{\textbf{Lemma}}
\begin{document}

\title{Differentially Private Numerical Vector Analyses\\ in the Local and Shuffle Model}




\author{
Shaowei~Wang,
Jin~Li,
Yuntong~Li,
Jin~Li,
Wei~Yang,
and Hongyang Yan

\IEEEcompsocitemizethanks{
\IEEEcompsocthanksitem Shaowei~Wang, Jin~Li, Yuntong~Li, Jin~Li, Hongyang Yan are with the Guangzhou University.\protect\\
E-mail: \{wangsw,lijin\}@gzhu.edu.cn
\IEEEcompsocthanksitem Wei~Yang is with the University of Science and Technology of China.
}
}



\IEEEtitleabstractindextext{%

\begin{abstract}
\justifying{Numerical vector aggregation plays a crucial role in privacy-sensitive applications, such as distributed gradient estimation in federated learning and statistical analysis of key-value data. In the context of local differential privacy, this study provides a tight minimax error bound of $O(\frac{ds}{n\epsilon^2})$, where $d$ represents the dimension of the numerical vector and $s$ denotes the number of non-zero entries. By converting the conditional/unconditional numerical mean estimation problem into a frequency estimation problem, we develop an optimal and efficient mechanism called Collision. In contrast, existing methods exhibit sub-optimal error rates of $O(\frac{d^2}{n\epsilon^2})$ or $O(\frac{ds^2}{n\epsilon^2})$. Specifically, for unconditional mean estimation, we leverage the negative correlation between two frequencies in each dimension and propose the CoCo mechanism, which further reduces estimation errors for mean values compared to Collision. Moreover, to surpass the error barrier in local privacy, we examine privacy amplification in the shuffle model for the proposed mechanisms and derive precisely tight amplification bounds. Our experiments validate and compare our mechanisms with existing approaches, demonstrating significant error reductions for frequency estimation and mean estimation on numerical vectors.}
\end{abstract}


\begin{IEEEkeywords}
data aggregation, local differential privacy, shuffle model, minimax error bound, mean estimation.
\end{IEEEkeywords}}

\maketitle

\section{Introduction}
\label{sec:intro}   
With increasingly stringent data privacy regulations being enacted (e.g., the General Data Protection Regulation \cite{voigt2017eu} in the European Union, the California Consumer Privacy Act, and the Civil Code of the People's Republic of China), local differential privacy (LDP) has emerged as the \textit{de facto} standard for preserving data privacy in decentralizd settings. Stemming from the classical notion of differential privacy in the database community \cite{dwork2008differential}, LDP operates without trusting data aggregators or other third parties. It enables users/agents to sanitize their personal data locally (e.g., on mobile devices, or IoT sensors) and offers information-theoretically rigorous privacy protection. In comparison to cryptography-based privacy preservation approaches (e.g., homomorphic encryption \cite{naehrig2011can}, secure multi-party computation \cite{goldreich1998secure}), LDP is highly efficient and scalable for data aggregation involving millions or billions of users. Currently, many large internet service providers (such as Apple \cite{greenberg2016apple}, Google \cite{erlingsson2014rappor}, and Microsoft \cite{ding2017collecting}) are implementing LDP for regulatory compliance during user data collection and analysis. 

Additionally, to address the unacceptably high error barriers resulting from stringent LDP constraints, researchers have recently introduced the shuffle model \cite{bittau2017prochlo, erlingsson2019amplification} of differential privacy. In this model, messages from users are randomly permuted (by a shuffler, e.g., anonymous channels, trusted hardwares, and edge servers) before being sent to the aggregator/analyzer. This breaks the linkage between users and their messages, allowing messages to be concealed among others. Privacy is thus amplified after shuffling, enabling a lower local privacy level to satisfy a relatively higher privacy level (from the aggregator's perspective).

Numerical vectors are commonly found in user data for various applications, such as gradient estimation in federated learning \cite{konevcny2016federated,kairouz2021advances}, sensor readings \cite{cornelius2008anonysense}, and service usage histories \cite{mcsherry2009differentially} for user profile and usage analysis in web services. This study focuses on numerical vector analysis within the local and shuffle models of differential privacy. For clarity, we assume that the numerical vector $\mathbf{x}_i$ for user $i$ is a $d$-dimensional, $s$-sparse ternary vector \cite{wen2017terngrad,Ye2019PrivKVKD,sun2019conditional,gu2019pckv}, belonging to the set $\mathcal{X}^s$, defined as follows:
$$\mathcal{X}^s := \{\mathbf{x}\ |\ \mathbf{x} \in \{-1,0,1\}^d\ \text{and}\ \|\mathbf{x}\|_0=s\}.$$
This problem is pertinent to numerous real-world data aggregation tasks, including gradient estimation in federated learning and sensitive key-value data aggregation for user profile and usage analyses in web services.

\subsection{Federated Gradient Estimation}
Federated learning \cite{konevcny2016federated} investigates machine learning systems in distributed settings, enabling each party to maintain the privacy of their raw data. During each gradient descent iteration for training or updating a machine learning model, locally computed gradients $\mathbf{x}_i$ from participating parties (e.g., $n$ mobile users) are averaged by the federation server (e.g., a parameter server):
\begin{equation}\label{eq:nve}
\overline{\mathbf{x}} :=\frac{1}{n}\sum\nolimits_{i=1}^n \mathbf{x}_i.
\end{equation}
To enhance communication efficiency, local gradients are often discretized and sparsified \cite{wen2017terngrad,wangni2018gradient}.

The original work \cite{konevcny2016federated} considers sharing gradients to be more privacy-resistant than sharing raw data. However, recent studies show that the gradient $\mathbf{x}_i$ still poses privacy risks, as local raw data may be inferred with confidence from several transmitted gradients \cite{zhu19deep}. This highlights the need for rigorous privacy protection for local gradients.

\subsection{Key-value Data Aggregation}
We refer to key-value data as paired (key, value) mappings, where the key $j\in [d]$ represents an index, and the value $\mathbf{x}_j$ is numerical. Note that a value is considered $0$ if and only if the corresponding key is missing from or not defined in the key-value data; for any existing or defined keys, their corresponding values are binary as ${-1,1}$. For instance, a user might represent preferences for watched movies as key-value data, assigning a value of $1$ to movies they like and a value of $-1$ to movies they dislike.

Common analyses of key-value data involve estimating unconditional and conditional mean statistics. The unconditional mean statistic for the key $j$ is $\overline{\mathbf{x}}_j=\frac{1}{n}\sum_{i=1}^n \mathbf{x}_{i,j},$ the non-missing frequency of the key $j$ is: 
\begin{equation}\label{eq:frequency}
\underline{\mathbf{x}}_j :=\frac{1}{n}\#\{\mathbf{x}_{i,j}\ |\ \mathbf{x}_{i,j}\ for\ i\in[n]\ and\ \mathbf{x}_{i,j}\neq 0\},
\end{equation} 
and the conditional mean statistic is
$\overline{\mathbf{x}}_{\underline{j}} :=\overline{\mathbf{x}}_j/\underline{\mathbf{x}}_j.$

\subsection{Existing Results}
\label{subsec:existing}
Within the framework of $\epsilon$-LDP, theoretical minimax lower bounds for various statistical estimation problems have been established, including multinomial distribution estimation \cite{duchi2013local}, logistic regression/generalized linear model estimation \cite{duchi2018minimax}, and sparse covariance matrix estimation \cite{wang2019lower}. Specifically, \cite{duchi2018minimax} derives minimax lower bounds for multi-dimensional mean estimation in numerical vectors with bounded $\ell_1$-norm or $\ell_2$-norm. However, an $s$-sparse numerical vector is a special case of $\ell_1$-norm or $\ell_2$-norm bounded vector with identical absolute non-zero entries. It remains an open question whether it holds the same bounds as the general case or has tighter bounds. Recently, for a broad family of $\epsilon$-LDP estimation problems that can be cast as mean estimation problems, \cite{blasiok2019towards} studies sample complexity lower bounds under certain error tolerance $\alpha$, but their sample complexity results for $s$-sparse numerical vectors exhibit at least a $1/\alpha$ gap compared to our minimax optimal sample complexity results.

In practice, numerous $\epsilon$-LDP mechanisms have been proposed for statistical estimation, such as multinomial distribution estimation on categorical data \cite{duchi2013local,erlingsson2014rappor,kairouz2016discrete,wang2020locally} and one-dimensional mean estimation on numerical values \cite{wang2019collecting,sun2020bisample}. For $\epsilon$-LDP numerical vector or key-value data aggregation, existing approaches handle both dense numerical vectors (e.g., in \cite{nguyen2016collecting,duchi2018minimax,wang2022analyzing}) and sparse numerical vectors (e.g., in \cite{Ye2019PrivKVKD,sun2019conditional,gu2019pckv,zhou2022locally}). Specifically, \cite{Ye2019PrivKVKD,sun2019conditional} uniformly and randomly select one dimension from $[d]$ and transform the multi-dimensional estimation problem to a one-dimensional numerical/categorical problem. The work of \cite{gu2019pckv} follows a similar paradigm, but randomly selects one non-empty dimension from $s$ dimensions. However, as we will show in Section \ref{sec:related}, these mechanisms are sub-optimal.

To mitigate the high noise needed for LDP, \cite{erlingsson2019amplification,bittau2017prochlo} introduce a (semi-trusted) shuffler to hide private views in the crowd. The seminal work \cite{erlingsson2019amplification} shows that $n$ shuffled $\epsilon$-LDP views can preserve $(O({\epsilon}\sqrt{{\log(1/\delta)}/{n}}),\delta)$-differential privacy. The work \cite{cheu2019distributed} derives a similar conclusion specifically for binary randomized response messages. A later work \cite{balle2019privacy} considers private views from other users as a "privacy blanket" and derives tighter privacy amplification bounds $(O(\min\{\epsilon_0,1\}e^{\epsilon_0}\sqrt{\log(1/\delta)/{n}}),\delta)$. Recent works \cite{feldman2021hiding,feldman2023stronger} analyze the mixture property of arbitrary $\epsilon$-LDP randomizers and derive an asymptotically optimal bound of $(O((e^{\epsilon_0/2}-e^{-\epsilon_0/2})\sqrt{\log(1/\delta)/{n}}),\delta)$. This work shows that for specific $\epsilon$-LDP mechanisms, such as the proposed Collision and CoCo, it is possible to obtain tighter privacy amplification bounds.

\subsection{Our Contributions}
The contributions of this work are summarized as follows:
\begin{itemize}
\item{\textbf{Minimax lower bounds.\ } The squared error (or total variation error) lower bound of $\epsilon$-LDP $s$-sparse numerical vector mean estimation is $O(\frac{ds}{n\epsilon^2})$ (or $O(d\sqrt{\frac{s}{n\epsilon^2}})$). Our proof considers $s$-sparse numerical vectors that are decomposable, thus reducing the bounding procedure to the case of multiple multinomial distributions.}
\item{\textbf{An optimal mechanism via frequency estimation.\ } Since existing approaches are sub-optimal, we design a new mechanism: Collision, which matches the minimax lower bound. It has computational complexity $O(s)$ and communication complexity $O(\log s)$.}
\item{\textbf{An optimized mean estimation mechanism.\ } Exploiting the negative correlation between two frequencies for each dimension, we design an optimized mechanism, \emph{CoCo}, specifically for mean estimation, which further reduces estimation error by $15\%$.}
\item{\textbf{Tight privacy amplification in the shuffle model.\ } In the shuffle model of differential privacy, we derive exactly tight privacy amplification bounds for both Collision and CoCo. The amplification bounds are independent of dimension $d$ and sparsity parameter $s$, thus are favorable for high dimensional or even dense numerical vectors. When compared with existing results, our tight bounds save about $25\%$ privacy budget.}
\end{itemize}

The structure of the remaining paper is as follows: Section \ref{sec:prelim} provides background knowledge. Section \ref{sec:related} reviews the design of existing mechanisms and highlights their sub-optimality. In Section \ref{sec:bound}, we establish the minimax lower bounds.
Next, in Section \ref{sec:mechanism}, we propose the new mechanism, Collision, that matches the established lower bound.
In Section \ref{sec:shuffle}, we derive privacy amplification upper and lower bounds for the proposed mechanism in the shuffle model. Later, in Section \ref{sec:meanmechanism}, we propose an optimized mechanism for mean estimation. Section \ref{sec:experiments} presents experimental results. Finally, in Section \ref{sec:conclusion}, we conclude this work.
\section{Preliminaries}\label{sec:prelim}
In this section, we introduce the definition of numerical vectors, differential privacy, and the minimax risks of private estimation. Commonly used notations are listed in Table \ref{tab:notations}.

\begin{table}
\renewcommand{\arraystretch}{1.17}
\caption{List of notations.}
\label{tab:notations}
\centering
\begin{tabular}{c|l}
\hline
\bfseries Notation & \bfseries Description\\
\hline
$[i]$ & $\{1,2,...,i\}$ \\
$[i:j]$ & $\{i,i+1,...,j\}$ \\
$\llbracket\ \  \rrbracket$ & Iverson bracket \\
\hline
$n$ & the number of users (data owners)\\
$d$ & the dimension of numerical vectors\\
$s$ & the sparsity parameter of numerical vectors\\
$\mathcal{X}^s$ & the domain of $s$-sparse numerical vector.\\
$\mathbf{x}_{i}$ & the data of user $i$\\

$\overline{\mathbf{x}}_{j}$ & the population mean of $j$-th dimension\\

$\underline{\mathbf{x}}_{j}$ & the non-missing frequency of $j$-th dimension\\

$\epsilon$ & the (local) privacy budget\\
$t$ & the outputting domain size\\
\hline
$\mathcal{D}$ & a distance measure over distributions\\
$\mathcal{S}$ & the shuffling algorithm in the shuffle model\\
$\epsilon_c$ & amplified privacy level in the shuffle model\\

\hline
\end{tabular}
\end{table}

\subsection{Numerical Vector}\label{subsec:nv}
We define a numerical vector $\mathbf{x}_i$ from every user $i$ as a $d$-dimensional, $s$-sparse ternary vector, with the domain defined as follows:
$$\mathcal{X}^s := \{\mathbf{x}\ |\ \mathbf{x} \in \{-1,0,1\}^d\ \text{and}\ \|\mathbf{x}\|_0=s\}.$$
Here, $s$ is the sparsity parameter: the number of non-zero elements in the numerical vector $\mathbf{x}_i$. Real-world real-valued numerical vectors can be transformed to $s$-sparse ternary vectors with limited precision loss, such as by max-min normalization and stochastic ternary discretization.

Additionally, we use the set form representation for the $s$-sparse vector. Let $j_-$ and $j_+$ denote events where the $j$-th element of $\mathbf{x}_i$ (i.e., $\mathbf{x}_{i,j}$) equals $-1$ and $1$, respectively. A numerical vector $\mathbf{x}$ can be represented in the set form as:
\small
$$\mathbf{Y}_{\mathbf{x}_{i}} :=\{j_-\ |\ j\in[d],\ \mathbf{x}_{i,j}=-1\} \bigcup \{j_+\ |\ j\in[d],\ \mathbf{x}_{i,j}=1 \}.$$
\normalsize

\subsection{Differential Privacy}\label{subsec:dp} 
One common tool for measuring distance between two distributions is hockey-stick divergence (see Definition \ref{def:hsd}), which satisfies data processing inequality \cite{sason2016f}.
\begin{definition}[Hockey-stick divergence]\label{def:hsd} The hockey-stick divergence between two random variables $P$ and $Q$ is:
\[\mathcal{D}_{e^{\epsilon}}(P \| Q) := \int\max\{0,P(x)-e^\epsilon Q(x)\} \mathsf{d}x,\]
where we use the notation $P$ and $Q$ to refer to both the random variables and their probability density functions.
\end{definition}

Two variables $P$ and $Q$ are $(\epsilon, \delta)$-indistinguishable if $\max{\mathcal{D}_{e^\epsilon}(P\|Q), \mathcal{D}_{e^\epsilon}(Q\| P)}\leq \delta$. For datasets $D$, $D'$ that are of the same size and differ only in one element, they are called \emph{neighboring datasets}. The definition of differential privacy with budget/level $(\epsilon,\delta)$ is as follows.

\begin{definition}[$(\epsilon,\delta)$-DP \cite{dwork2008differential}]\label{def:cdp}
Let $\mathcal{D}_K$ denote the output domain, a randomized mechanism $K$ satisfies $(\epsilon, \delta)$-differential privacy if, for any neighboring datasets $D, D'$, the $K(D)$ and $K(D')$ are $(\epsilon, \delta)$-indistinguishable.
\end{definition}

Let $K$ denote a randomized mechanism for sanitizing a single user's data. The differential privacy in the local model with privacy budget $\epsilon$ is as follows.

\begin{definition}[$\epsilon$-LDP \cite{duchi2013local}]\label{def:ldp}
Let $\mathcal{D}_K$ denote the output domain. A randomized mechanism $K$ satisfies local $\epsilon$-differential privacy if, for any data pair $\mathbf{x},\mathbf{x}' \in \mathcal{X}^s$, the $K(\mathbf{x})$ and $K(\mathbf{x}')$ are $(\epsilon, 0)$-indistinguishable.
\end{definition}

\subsubsection{The Shuffle Model of Differential Privacy}
In the shuffle model, a semi-trustable shuffler lies between the users and the data collector (e.g., the server/statistician) and uniform-randomly permutes randomized messages from users. We denote the randomization algorithm on the user side as $\mathcal{R}$ and the shuffling algorithm as $\mathcal{S}$. The privacy goal of the shuffle model is to ensure the shuffled messages $\mathcal{S}\circ\mathcal{R}(D)=\mathcal{S}(\mathcal{R}(x_1),...,\mathcal{R}(x_n))$ satisfy $(\epsilon_c,\delta)$-DP for all neighboring datasets:
\begin{definition}[$(\epsilon,\delta)$-DP in the shuffle model]\label{def:sdp}
A protocol $(\mathcal{R}, \mathcal{S})$ satisfies $(\epsilon, \delta)$-differential privacy in the shuffle model if, for any neighboring datasets $D, D'$, the $\mathcal{S}\circ\mathcal{R}(D)$ and $\mathcal{S}\circ\mathcal{R}(D')$ are $(\epsilon, \delta)$-indistinguishable.
\end{definition}

When the randomization algorithm $\mathcal{R}$ is an $\epsilon$-LDP mechanism, the seminal work shows $\mathcal{S}\circ\mathcal{R}$ actually preserves $({\epsilon}\sqrt{{144\log(1/\delta)}/{n}}, \delta)$-DP, which decreases with the number of users $n$. This phenomenon is known as \emph{privacy amplification via shuffling}. A very recent work \cite{feldman2023stronger} improves the bound to near-optimal $(O((e^{\epsilon/2}-e^{-\epsilon/2})\sqrt{\log(1/\delta)/{n}}),\delta)$. Specifically, when $\mathcal{R}$ satisfies several mixture properties, \cite{feldman2023stronger} shows the divergence between $\mathcal{S}\circ\mathcal{R}(D)$ and $\mathcal{S}\circ\mathcal{R}(D')$ is bounded by the divergence between a pair of two-dimension variables as in Theorem \ref{lemma:strongerclone}.

\begin{theorem}[Stronger clone reduction \cite{feldman2023stronger}]\label{lemma:strongerclone}
Given any $n+1$ inputs $x_1, x_1', x_2,..., x_n\in \mathcal{X}$, consider an algorithm $\mathcal{R}$ such that the output domain is finite and
\begin{align*}
&& \mathcal{R}(x_1)&=e^{\epsilon} \alpha \mathcal{Q}_1 + \alpha \mathcal{Q}_1'+(1-\alpha-e^{\epsilon} \alpha)\mathcal{Q}_1^*, &\\
&& \mathcal{R}(x_1')&=\alpha \mathcal{Q}_1 + e^{\epsilon} \alpha  \mathcal{Q}_1'+(1-\alpha-e^{\epsilon}\alpha)\mathcal{Q}_1^*, &\\
&& \forall i\in [2,n],\  \mathcal{R}(x_i)&=\alpha \mathcal{Q}_1 + \alpha \mathcal{Q}_1'+(1-2\alpha)\mathcal{Q}_i&
\end{align*}
holds for some $\epsilon\geq 0, \alpha\in [0,(e^{\epsilon}-1)/(e^{\epsilon}+1)]$ and some probability distributions $\mathcal{Q}_1,\mathcal{Q}_1',\mathcal{Q}_1^*,\mathcal{Q}_2,...,\mathcal{Q}_n$. Let $C\sim Binomial(n-1, 2\alpha)$, $A\sim Binomial(C, 1/2)$, and $\Delta_1=Bernoulli(e^{\epsilon} \alpha)$ and $\Delta_2=Bernoulli(1-\Delta_1, \alpha/(1-e^{\epsilon} \alpha))$; let $P_\alpha=(A+\Delta_1, C-A+\Delta_2)$ and $Q_\alpha=(A+\Delta_2, C-A+\Delta_1)$. Then for any distance measure $\mathcal{D}$ that satisfies the data processing inequality,
\begin{align*}
&\mathcal{D}(\mathcal{S}(\mathcal{R}(x_1),..,\mathcal{R}(x_n)) \| \mathcal{S}(\mathcal{R}(x_1'),..,\mathcal{R}(x_n)))
\leq \mathcal{D}(P_\alpha \| Q_\alpha). &
\end{align*}
\end{theorem}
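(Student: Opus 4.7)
The strategy is to reveal enough auxiliary latent information that, once it is conditioned on, the shuffled output becomes a post-processing whose conditional law does not depend on the input; the divergence then reduces to a two-dimensional sufficient statistic. For each $i\in[2,n]$, I would sample $\mathcal{R}(x_i)$ by first drawing a component label $Z_i\in\{\mathsf{Q}_1,\mathsf{Q}_1',\mathsf{own}\}$ with probabilities $(\alpha,\alpha,1-2\alpha)$ and then drawing the message from the corresponding component; this two-stage sampler is identical under $D$ and $D'$ because $x_2,\dots,x_n$ are common to both. For the focus user, draw $Z_1\in\{\mathsf{Q}_1,\mathsf{Q}_1',\mathsf{Q}_1^*\}$ with probabilities $(e^{\epsilon}\alpha,\alpha,1-\alpha-e^{\epsilon}\alpha)$ under input $x_1$ and with the first two probabilities swapped under input $x_1'$; setting $\Delta_1=\mathbf{1}\{Z_1=\mathsf{Q}_1\}$ and $\Delta_2=\mathbf{1}\{Z_1=\mathsf{Q}_1'\}$ reproduces the Bernoulli pair in the theorem statement, with $\Delta_2\mid\Delta_1=1\equiv 0$ and $\Delta_2\mid\Delta_1=0\sim\mathrm{Bernoulli}(\alpha/(1-e^{\epsilon}\alpha))$.

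Let $V:=(\#\{i\in[n]:Z_i=\mathsf{Q}_1\},\ \#\{i\in[n]:Z_i=\mathsf{Q}_1'\})$. By construction $\#\{i\geq 2:Z_i\in\{\mathsf{Q}_1,\mathsf{Q}_1'\}\}=C\sim\mathrm{Bin}(n-1,2\alpha)$ and $\#\{i\geq 2:Z_i=\mathsf{Q}_1\}=A\sim\mathrm{Bin}(C,1/2)$, so $V$ has the law of $(A+\Delta_1,C-A+\Delta_2)=P_\alpha$ under $D$ and the law of $(A+\Delta_2,C-A+\Delta_1)=Q_\alpha$ under $D'$. Next, observe that the shuffled multiset $\mathcal{S}\circ\mathcal{R}$ can be reconstructed from $V$ using only input-free randomness: draw $V_1$ i.i.d.\ samples from $\mathcal{Q}_1$, $V_2$ i.i.d.\ samples from $\mathcal{Q}_1'$, one independent sample from each $\mathcal{Q}_i$ (used whenever $Z_i=\mathsf{own}$), one sample from $\mathcal{Q}_1^*$ (used if $Z_1=\mathsf{Q}_1^*$), and shuffle the resulting multiset. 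Because the shuffle erases user indices and because the $\mathsf{own}$/$\mathsf{Q}_1^*$ draws and their component distributions are input-free, the conditional law of this reconstruction given $V$ does not depend on $D$ vs $D'$. The data-processing inequality then yields
\begin{equation*}
\mathcal{D}\bigl(\mathcal{S}\circ\mathcal{R}(D)\,\|\,\mathcal{S}\circ\mathcal{R}(D')\bigr)\leq \mathcal{D}(V_D\|V_{D'})=\mathcal{D}(P_\alpha\|Q_\alpha).
\end{equation*}

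The main obstacle is the last data-processing step: one must be precise that the non-blanket output of each user $i\geq 2$ is drawn from $\mathcal{Q}_i$ with no dependence on which coordinate of $D,D'$ differs, and that after shuffling these user-specific draws merge into a multiset whose law depends only on the $Z$-label vector, and aggregates into $V$ together with the blanket-slot counts without leaking any further information about the input. This forces the reconstruction map to be defined so that the $\mathsf{own}$ draws for users $2,\dots,n$ and the $\mathcal{Q}_1^*$ draw for user~$1$ are generated from common randomness independent of the input pair, so that conditioning on $V$ truly removes all input dependence before the shuffle is applied. The remaining binomial convolution identities and the symmetry between the two slots are then routine.
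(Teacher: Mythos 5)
The paper states this result as an import from \cite{feldman2023stronger} and gives no proof of it, so your argument can only be judged on its own terms. The first half is fine: the two-stage sampling of labels $Z_i$, the identification of $\Delta_1=\mathbf{1}\{Z_1=\mathsf{Q}_1\}$ and $\Delta_2=\mathbf{1}\{Z_1=\mathsf{Q}_1'\}$, and the conclusion that the count vector $V$ has law $P_\alpha$ under $D$ and $Q_\alpha$ under $D'$ are all correct. The gap sits exactly where you flag the ``main obstacle'': the claim that the shuffled multiset can be reconstructed from $V$ using input-free randomness is false. Knowing $V=(v_1,v_2)$ does not determine whether user $1$ took the $\mathcal{Q}_1^*$ branch, and that branch changes the multiset law: it decides whether a $\mathcal{Q}_1^*$ draw is present and whether there are $n-1-v_1-v_2$ or $n-v_1-v_2$ own-draws. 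Writing $m=v_1+v_2$, the conditional weight of the $\mathcal{Q}_1^*$ branch given $V=(v_1,v_2)$ is proportional to $(1-\alpha-e^{\epsilon}\alpha)\binom{m}{v_1}\alpha\,$, while the blanket branches contribute weight proportional to $\bigl(q_1\binom{m-1}{v_1-1}+q_2\binom{m-1}{v_2-1}\bigr)(1-2\alpha)$ with $(q_1,q_2)=(e^{\epsilon}\alpha,\alpha)$ under $D$ and $(\alpha,e^{\epsilon}\alpha)$ under $D'$; since $\binom{m-1}{v_1-1}\neq\binom{m-1}{v_2-1}$ whenever $v_1\neq v_2$, the conditional law of the output given $V$ genuinely depends on the input, and your reconstruction map $F$ is not well defined as a single channel applied to both $P_\alpha$ and $Q_\alpha$.

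This is not a presentational issue that more care can repair under the hypotheses as you (and the paper) state them. Take $n=2$, $e^{\epsilon}=2$, $\alpha=0.2$, and point masses $\mathcal{Q}_1=\delta_1$, $\mathcal{Q}_1'=\delta_2$, $\mathcal{Q}_1^*=\delta_3$, $\mathcal{Q}_2=\delta_4$; all mixture identities hold with $1-\alpha-e^{\epsilon}\alpha=0.4>0$. The shuffled outputs assign the union of the events $\{1,4\}$ and $\{1,1\}$ probability $e^{\epsilon}\alpha(1-\alpha)$ under $D$ and $\alpha(1-\alpha)$ under $D'$ (likelihood ratio exactly $e^{\epsilon}$, i.e.\ no amplification on this event), so that
\begin{equation*}
\mathcal{D}_{e^{\gamma}}\bigl(\mathcal{S}\circ\mathcal{R}(D)\,\|\,\mathcal{S}\circ\mathcal{R}(D')\bigr)\geq \alpha(1-\alpha)(e^{\epsilon}-e^{\gamma}),
\end{equation*}
whereas a direct computation gives $\mathcal{D}_{e^{\gamma}}(P_\alpha\|Q_\alpha)=\alpha(1-\alpha)(e^{\epsilon}-e^{\gamma})-\alpha(1-\alpha-e^{\epsilon}\alpha)(e^{\gamma}-1)$ for $\gamma$ in a neighbourhood of $(0,\epsilon)$ (numerically, $0.08$ versus $0.04$ at $e^{\gamma}=1.5$). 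By the Blackwell criterion this rules out the existence of \emph{any} post-processing carrying $(P_\alpha,Q_\alpha)$ to the two shuffled outputs, so your strategy cannot be completed from the three mixture identities alone: the culprit is precisely that an adversary can sometimes certify from the messages themselves that user $1$ was a blanket contributor rather than a $\mathcal{Q}_1^*$ user. A correct proof must exploit additional structure tying $\mathcal{Q}_1^*$ and the $\mathcal{Q}_i$ to the blanket components (as the actual randomizers in Lemma \ref{lemma:mixture} do, and as the full hypotheses of \cite{feldman2023stronger} require); the statement as transcribed here omits that structure, and your proof inherits the omission.
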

For any $\epsilon_0$-LDP mechanism $\mathcal{R}$, it satisfies the mixture properties with parameters $\epsilon=\epsilon_0$ and $\alpha=(e^{\epsilon_0}-1)/(e^{\epsilon_0}+1)$. Furthermore, the distance $\mathcal{D}(P_\alpha \| Q_\alpha)$ increases with $\alpha$ when $\epsilon$ is fixed. Owing to the simplicity of the formulas for $P$ and $Q$, their hockey-stick divergence can be numerically computed in $\tilde{O}(n)$ time \cite{koskela2021tight} with a specified precision.

\subsection{Local Private Minimax Risks}
Assuming samples ${x_1, x_2, ..., x_n}$ are $n$ i.i.d. drawn from a distribution $P \in \mathcal{P}$. Let $\mathcal{K}_\epsilon$ denote the set of all possible mechanisms $\mathbf{K} = \{K_1, ..., K_n\}$ that each satisfies $\epsilon$-LDP. Taking the samples as input, a serial of (adaptive or non-adaptive) mechanisms $\mathbf{K} \in \mathcal{K}_\epsilon$ produce a list of sanitized views $\{z_1, z_2, ..., z_n\}$. If the parameter estimator: $$\widehat{\theta}=\widehat{\theta}(\{z_1,z_2,...,z_n\})$$ is derived from these private views while having no access to input samples $\{x_j\}_{j=1}^n$, the minimax MSE risk (under privacy budget $\epsilon$) is then:

\begin{equation*}
\begin{aligned}
&&&\mathfrak{M}_n(\theta(\mathcal{P}), \|\cdot\|_2^2,  \epsilon)&&\\
&&&:=\inf_{\mathbf{K}\in \mathcal{K}_\epsilon} \inf_{\widehat{\theta}}\ \sup_{P \in \mathcal{P}} \mathbb{E}_{P,\mathbf{K}}[\|\widehat{\theta}(z_1,z_2,...,z_n)-\theta(P)\|_2^2].&&\\
\end{aligned}
\end{equation*}

\section{Closely Related Works}\label{sec:related}
Due to its broad applications, numerical vector aggregation with local and shuffle DP has been attracting increasing research attention. In addition to the literature reviewed in Section \ref{subsec:existing}, we focus here on the most closely related works from \cite{Ye2019PrivKVKD, sun2019conditional, gu2019pckv, ye2021privkvm, zhou2022locally, liu2020flame}.

\subsection{Numerical Vectors with Local DP}

Existing works on $\epsilon$-LDP numerical vector aggregation can mainly be categorized into two types: those that perform dimension sampling in a data-agnostic manner (e.g., PrivKV in \cite{Ye2019PrivKVKD, sun2019conditional}) and those that do so in a data-dependent manner (e.g., PCKV in \cite{gu2019pckv}).

\textbf{The PrivKV Mechanism \cite{Ye2019PrivKVKD}.} The seminal work by \cite{Ye2019PrivKVKD} on $\epsilon$-LDP key-value data suggests initially randomly sampling a dimension $j\in[d]$ from the key domain, followed by applying an $\epsilon$-LDP categorical mechanism to the corresponding (key, value) pair, which takes a value from ${(j,0),(j,1),(j,-1)}$. Here, $(j,0)$ indicates that the key is empty in the key-value data. In essence, the PrivKV mechanism is akin to dividing a population of $n$ into $d$ groups, where each group is used to estimate $\llbracket j_+\in \mathbf{Y}_{\mathbf{x}}\rrbracket$ and $\llbracket j_-\in \mathbf{Y}_{\mathbf{x}}\rrbracket$ for each $j\in[d]$ with a privacy budget of $\epsilon$. Given that the minimax lower error bound for estimating frequencies in a population of $n'$ with privacy budget $\epsilon$ and domain size $d'$ is $\Theta(\frac{d'}{n'\epsilon^2})$ \cite{duchi2018minimax}, the estimation error of $\llbracket j_+\in \mathbf{Y}_{\mathbf{x}}\rrbracket$ and $\llbracket j_-\in \mathbf{Y}_{\mathbf{x}}\rrbracket$ is $\Theta(\frac{d}{n\epsilon^2})$, since $n'=\frac{n}{d}$ and $d'=3$. Consequently, its total estimation error for frequencies or mean values of a $d$-dimensional vector is $O(\frac{d^2}{n\epsilon^2})$. This result exhibits a gap of ${d}/{s}$ from the optimal error rate in Theorem \ref{the:vectorminimax}. Analogous methodology and findings also apply to subsequent works in \cite{sun2019conditional,ye2021privkvm,liu2020flame}.

\textbf{The PCKV Mechanism \cite{gu2019pckv}.} The study by \cite{gu2019pckv} suggests sampling one key from the existing $s$ keys in key-value data. Subsequently, an $\epsilon$-LDP categorical mechanism is applied to the corresponding $1$-sparse numerical vector, which is equivalent to categorical data with a domain size of approximately $2d$. Considering that the minimax lower error bound for estimating frequencies in a population of $n'$ with privacy budget $\epsilon$ and domain size $d'$ is $\Theta(\frac{d'}{n'\epsilon^2})$, the total estimation error for scaled $\llbracket j_+\in \mathbf{Y}_{\mathbf{x}}\rrbracket$ and $\llbracket j_-\in \mathbf{Y}_{\mathbf{x}}\rrbracket$ in the PCKV mechanism is $\Theta(\frac{d}{n\epsilon^2})$, as $n'=n$ and $d'=2d$. Owing to the preceding sampling procedure, the scale factor is $s$, and the total variation error is amplified by $s^2$. Consequently, the total estimation error for $\llbracket j_+\in \mathbf{Y}_{\mathbf{x}}\rrbracket$ and $\llbracket j_-\in \mathbf{Y}_{\mathbf{x}}\rrbracket$ in the PCKV mechanism is $O(\frac{ds^2}{n\epsilon^2})$. This result presents a gap of $s$ from the optimal error rate in Theorem \ref{the:vectorminimax}.

\textbf{The Amplified PCKV-GRR Mechanism \cite{gu2019pckv}.}
In the PCKV mechanism, which employs the generalized randomized response (GRR \cite{wang2017locally}) as the base randomizer, privacy levels are enhanced through dimension sampling \cite{gu2019pckv}. Specifically, this mechanism can be applied with a privacy budget of $\epsilon'=\log(s(e^\epsilon-1)+1)$, where $\epsilon$ represents the original privacy budget. The mean squared estimation error for this case is given by $O(\frac{s e^\epsilon(s e^\epsilon+d-s-e^\epsilon)+(d-s)(s e^\epsilon+d-s-1)}{(e^\epsilon-1)^2})$, which equates to $O(\frac{d^2}{n\epsilon^2})$ when $\epsilon=O(1)$. It is important to note that the achieved estimation error exhibits a multiplicative gap of ${d}/{s}$ compared to the optimal error rate.

\textbf{The Succinct Mechanism \cite{zhou2022locally}.} Recently, \cite{zhou2022locally} proposes mapping pseudo-random ${+1,-1}$ values into a single bucket, clipping the bucket's summation to a norm of $\eta=O(\sqrt{s\log (n/\beta)})$, and adding Laplace noise with a scale of $2\eta/\epsilon$. The mean squared error of this approach is $O(\frac{ds\log n}{n\epsilon^2})$, resulting in a multiplicative gap of $\log n$ compared to the optimal rate. Moreover, the mechanism requires prior knowledge of the population size $n$, which may be impractical in certain scenarios (e.g., data collection in mobile/edge computing \cite{wang2023shuffle}). Although the mechanism is theoretically proven to be rate-optimal under the $\ell_{\infty}$ error (see Section \ref{subsec:maebound} for more details), its empirical $\ell_{\infty}$ errors lag behind our proposal by approximately $30\%$ in almost all settings (see Section \ref{subsec:meanexperiments}).

\subsection{Numerical Vectors in the Shuffle Model}
The shuffle model \cite{erlingsson2019amplification, cheu2019distributed, feldman2021hiding, feldman2023stronger,wang2023shuffle} and privacy amplification via shuffling has been successfully applied to numerical vectors, as demonstrated in recent studies (e.g., \cite{liu2020flame, sun2021ldp, girgis2021shuffled, scott2022aggregation}). Specifically, \cite{scott2022aggregation} independently sanitizes each dimension and transmits the sanitized vector to the shuffler; \cite{sun2021ldp} further proposes separately transmitting each dimension to the shuffler, breaking the linkage of $d$ dimensions for a single user. However, the local private mechanisms in \cite{sun2021ldp,scott2022aggregation} are sub-optimal due to budget splitting for each dimension. The work by \cite{girgis2021shuffled} addresses the sub-optimality issue through dimension sampling but does not exploit the sparsity in the gradient vector. The study \cite{liu2020flame} first selects $s$ significant dimensions from the gradient vector, then applies local private mechanisms and shuffle amplification. Nonetheless, the local privacy mechanism in \cite{liu2020flame} is also sub-optimal due to budget splitting for every selected dimension, and there is no rigorous privacy guarantee for selected dimensions. In contrast, our local randomizer ensures all messages are differentially private and is minimax optimal. Additionally, the privacy amplification bounds in this work are strictly tight.
\section{Minimax Lower Bounds}\label{sec:bound}


The Assouad's method \cite{yu1997assouad} is a widely used tool for lower bounding through multiple hypothesis testing. It defines a hypercube $\mathcal{V}=\{-1,1\}^d$ ($d\in \mathbb{N}^+$) and a family of distributions $\{P_\nu\}_{\nu\in \mathcal{V}}$ indexed by the hypercube. A distribution family is said to induce a $2\tau$-Hamming separation for the loss $\|\cdot\|_2^2$ if a vertex mapping (a function $\kappa:\theta(\mathcal{P}) \mapsto \{-1,1\}^d$) exists, satisfying:
$$\|\theta-\theta(P_\nu)\|_2^2 \geq 2\tau\sum_{j=1}^d \llbracket[\kappa(\theta)]_j\neq \nu_j\rrbracket.$$
Assuming that nature first uniformly selects a vector $V\in \mathcal{V}$, and the samples ${\mathbf{x}_1,...,\mathbf{x}_n}$ are drawn from the distribution $P_\nu$ with $V=\nu$, these samples are then used as input for $\epsilon$-LDP mechanisms $\mathbf{K}$. The literature \cite{duchi2018minimax} presents an $\epsilon$-LDP version of Assouad's method, as follows.

\begin{lemma}[Private Assouad bound \cite{duchi2018minimax}]\label{lemma:ldpassouad}
Let $P_{+j}=\frac{1}{2^{d-1}}\sum_{\nu:\nu_j=1} P_\nu$ and $P_{-j}=\frac{1}{2^{d-1}}\sum_{\nu:\nu_j=-1} P_\nu$, we have
$$\mathfrak{M}_n(\theta(\mathcal{P}), \|\cdot\|_2^2)\geq d\cdot\tau[1-(\frac{n(e^\epsilon-1)^2}{2d} F_{\mathbb{B}_\infty(\mathcal{X}^s),\mathcal{P}})^{\frac{1}{2}}],$$
where $\mathbb{B}_\infty(\mathcal{X}^s)$ denote the collection of function $\gamma$ with supremum norm bounded by $1$ as:
$$\mathbb{B}_\infty(\mathcal{X}^s):=\{\gamma:\mathcal{X}^s \mapsto \mathbb{R}\ \ |\ \ \|\gamma\|_\infty \leq 1\},$$
and maximum possible discrepancy $F_{\mathbb{B}_\infty(\mathcal{X}^s),\mathcal{P}}$ is defined as:
$$\sup_{\gamma\in \mathbb{B}_\infty(\mathcal{X}^s) }\sum_{i=1}^d\big(\int_{\mathcal{X}^s}\gamma(x)(\textsf{d} P_{+j}(x)-\textsf{d} P_{-j}(x))\big)^2.$$
\end{lemma}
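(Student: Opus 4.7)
My plan is to combine the classical Assouad reduction with two ingredients specific to the $\epsilon$-LDP setting: tensorization of KL divergence across independent samples, and a tight per-sample contraction bound that expresses the divergence between post-mechanism distributions through a supremum over bounded witness functions, which is exactly the quantity $F_{\mathbb{B}_\infty(\mathcal{X}^s),\mathcal{P}}$.

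First, the $2\tau$-Hamming separation assumption together with the classical (non-private) Assouad lemma gives
\begin{equation*}
\mathfrak{M}_n(\theta(\mathcal{P}),\|\cdot\|_2^2) \;\geq\; \tau \sum_{j=1}^d \inf_{\Psi_j}\bigl(\mathbb{P}_{+j}(\Psi_j(Z_{1:n})\neq +1) + \mathbb{P}_{-j}(\Psi_j(Z_{1:n})\neq -1)\bigr),
\end{equation*}
where $Z_{1:n}=(z_1,\dots,z_n)$ are the sanitized views and $\mathbb{P}_{\pm j}$ denotes their joint law under the mixture $P_{\pm j}$. Since the optimal likelihood-ratio test attains inner-infimum value $1-\|M^n_{+j}-M^n_{-j}\|_{TV}$, where $M^n_{\pm j}$ is the law of $Z_{1:n}$ induced by $P_{\pm j}^{\otimes n}$ through the $\epsilon$-LDP mechanisms $\mathbf{K}$, I conclude $\mathfrak{M}_n \geq \tau\bigl(d-\sum_j \|M^n_{+j}-M^n_{-j}\|_{TV}\bigr)$.

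Next I would apply Cauchy-Schwarz across $j$ followed by Pinsker's inequality in the symmetric form $\|P-Q\|_{TV}^2 \leq (D_{KL}(P\|Q)+D_{KL}(Q\|P))/4$, yielding
\begin{equation*}
\sum_{j=1}^d \|M^n_{+j}-M^n_{-j}\|_{TV} \;\leq\; \Bigl(\tfrac{d}{4}\sum_{j=1}^d \bigl[D_{KL}(M^n_{+j}\|M^n_{-j})+D_{KL}(M^n_{-j}\|M^n_{+j})\bigr]\Bigr)^{1/2}.
\end{equation*}
Tensorization of KL across the $n$ independent samples reduces the summed divergence on the right to $n$ times the one-sample analogue. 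The key step, and the main obstacle, is then the $\epsilon$-LDP contraction inequality at the single-sample level: writing any $\epsilon$-LDP kernel as $K(z\mid x)=K_0(z)\bigl(1+(e^\epsilon-1)g(x,z)\bigr)$ with $|g|\leq 1$, a second-order expansion of $\log$ around $1$ gives $\sum_j D_{KL}^{sym}(M_{+j},M_{-j}) \leq 2(e^\epsilon-1)^2 F_{\mathbb{B}_\infty(\mathcal{X}^s),\mathcal{P}}$, since integrating $g(\cdot,z)$ (with $z$ fixed) against $P_{+j}-P_{-j}$ produces linear functionals of the form $\int\gamma\,(\textsf{d}P_{+j}-\textsf{d}P_{-j})$ with $\gamma\in\mathbb{B}_\infty(\mathcal{X}^s)$, and the supremum of the squared sum over such $\gamma$ is precisely $F_{\mathbb{B}_\infty(\mathcal{X}^s),\mathcal{P}}$.

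Chaining the three bounds yields $\sum_j \|M^n_{+j}-M^n_{-j}\|_{TV} \leq (dn(e^\epsilon-1)^2 F_{\mathbb{B}_\infty(\mathcal{X}^s),\mathcal{P}}/2)^{1/2}$, and factoring $d$ out of the square root inside the Assouad bound produces $\mathfrak{M}_n \geq d\tau\bigl[1-(n(e^\epsilon-1)^2 F_{\mathbb{B}_\infty(\mathcal{X}^s),\mathcal{P}}/(2d))^{1/2}\bigr]$, as claimed. The most delicate part is the single-sample LDP divergence inequality: naively applying Pinsker pair-by-pair would give a looser $d$-dependent bound, and the gain comes from keeping the witness $\gamma$ shared across $j$ in the variational representation, which is exactly what the $\mathbb{B}_\infty$-supremum captures.
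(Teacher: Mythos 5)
The paper does not prove this lemma at all: it is imported verbatim from the cited reference \cite{duchi2018minimax}, so there is no in-paper proof to compare against. That said, your sketch is a faithful reconstruction of the argument in that reference: the classical Assouad reduction to $d$ binary tests, Cauchy--Schwarz plus symmetrized Pinsker to pass from total variation to KL, and the single-sample $\epsilon$-LDP contraction bound $\sum_j D_{KL}^{sym}(M_{+j},M_{-j})\leq 2(e^\epsilon-1)^2 F_{\mathbb{B}_\infty(\mathcal{X}^s),\mathcal{P}}$ obtained by writing the channel as $K(z\mid x)=K_0(z)(1+(e^\epsilon-1)g(x,z))$ and keeping the witness $\gamma=g(\cdot,z)$ shared across coordinates $j$ --- which, as you correctly emphasize, is exactly where the improvement over a pairwise Pinsker bound comes from; your constants also chain to the stated $d\tau[1-(\tfrac{n(e^\epsilon-1)^2}{2d}F)^{1/2}]$. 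The one step you gloss over is the reduction from $n$ samples to one: $M^n_{\pm j}$ is a mixture over $\nu$ of product laws (and the mechanisms may be sequentially interactive), so KL does not literally tensorize; the cited reference proves the $n$-sample inequality directly via a chain-rule/convexity argument rather than by naive tensorization. This is a standard repair and does not affect the validity of your overall approach.
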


We consider numerical vectors that can be decomposed into $s$ buckets, with each bucket containing $\frac{d}{s}$ indices and only one non-zero entry. We then define a hypercube of length $d$ and construct a class of $\frac{2\delta^2 s^2}{d^2}$-Hamming separated probability distributions. Following Lemma \ref{lemma:ldpassouad}, we bound the maximum possible marginal distance $F_{\mathbb{B}_\infty(\mathcal{X}^s),\mathcal{P}}$ under the value of $\frac{8\delta^2 s}{d}$. Theorem \ref{the:vectorminimax} provides the final lower bounds for the problem of local private numerical vector mean estimation.


\begin{theorem}\label{the:vectorminimax}
For the numerical vector aggregation problem, for any $\epsilon$-LDP mechanism, there exists a universal constant $c> 0$ such that for all $\epsilon\in (0,1]$,
$$\mathfrak{M}_n(\theta(\mathcal{P}), \|\cdot\|_2^2, \epsilon)\geq c\cdot\min\{\frac{s^2}{d},\frac{ds}{n\epsilon^2}\}.$$
\end{theorem}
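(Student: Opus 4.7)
I will instantiate the private Assouad bound of Lemma~\ref{lemma:ldpassouad} on a product sub-family of $P_\nu$'s that respects $s$-sparsity, designed so that the variational quantity $F_{\mathbb{B}_\infty,\mathcal{P}}$ decouples across $s$ independent ``buckets''.

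\emph{Construction.} Partition $[d]$ into $s$ disjoint buckets $B_1,\ldots,B_s$ of equal size $d/s$ (assume $s\mid d$; the remainder case is absorbed by constants). For each $\nu\in\{-1,1\}^d$ and a parameter $\delta\in(0,1]$ to be chosen later, define $P_\nu$ on $\mathcal{X}^s$ as the product law in which, independently across buckets, the non-zero entry of $B_k$ is placed at a uniformly random position $j\in B_k$ and given sign $\sigma$ with $\Pr[\sigma=\nu_j]=(1+\delta)/2$. Every realisation is then exactly $s$-sparse and a direct computation gives $\theta(P_\nu)_j=(s\delta/d)\,\nu_j$. Under the vertex map $\kappa(\theta)_j=\mathrm{sign}(\theta_j)$, any coordinate with sign disagreement contributes at least $(s\delta/d)^2$ to $\|\theta-\theta(P_\nu)\|_2^2$, so the family is $2\tau$-Hamming separated with $\tau = s^2\delta^2/(2d^2)$.

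\emph{Bounding $F$.} Averaging $P_\nu$ over all $\nu_i$ with $i\neq j$ makes the marginals of $P_{+j}$ and $P_{-j}$ on every bucket other than $B_{k(j)}$ coincide (both uniform), so the signed measure $\mathrm{d}P_{+j}-\mathrm{d}P_{-j}$ is supported only on the two atoms $(j,+1),(j,-1)$ in bucket $B_{k(j)}$, with mass $\pm s\delta/d$. For $\gamma\in\mathbb{B}_\infty(\mathcal{X}^s)$, setting $\bar{\gamma}_k(j,\sigma):=\mathbb{E}_{\mathbf{x}_{-k}\sim \mathrm{Unif}}[\gamma(\mathbf{x})\mid \mathbf{x}_k=(j,\sigma)]$ gives
\[\int \gamma\,(\mathrm{d}P_{+j}-\mathrm{d}P_{-j}) \;=\; \frac{s\delta}{d}\,\bigl(\bar{\gamma}_{k(j)}(j,+1)-\bar{\gamma}_{k(j)}(j,-1)\bigr).\]
The key step is to recognise the $\bar{\gamma}_k$'s as the order-one Efron--Stein/ANOVA components of $\gamma$, which are pairwise orthogonal in $L^2(\mathrm{Unif})$ and satisfy $\sum_k \|\bar{\gamma}_k-\mathbb{E}\gamma\|_2^2 \leq \|\gamma\|_2^2 \leq 1$. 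Expanding this inequality in terms of atomic values yields $\sum_{k,\,j\in B_k}(\bar{\gamma}_k(j,+1)-\bar{\gamma}_k(j,-1))^2 \leq 4d/s$, and hence $F \leq (s\delta/d)^2\cdot 4d/s = 4s\delta^2/d$.

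\emph{Assembly and choice of $\delta$.} Plugging $\tau$ and the bound on $F$ into Lemma~\ref{lemma:ldpassouad}, and using $(e^\epsilon-1)^2\leq 2\epsilon^2$ on $\epsilon\in(0,1]$,
\[\mathfrak{M}_n \;\geq\; \frac{s^2\delta^2}{2d}\,\Bigl[1-\delta\sqrt{\tfrac{2ns\epsilon^2}{d^2}}\Bigr].\]
Taking $\delta = \min\{1,\ c\,d/\sqrt{ns\epsilon^2}\}$ for a small absolute constant $c$ keeps the bracket at least $1/2$ and produces both regimes of the claim: the saturated case $\delta=1$ gives $\Omega(s^2/d)$, while the interior case gives $\Omega(ds/(n\epsilon^2))$. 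The only genuinely non-routine step is the bound on $F$: a per-coordinate triangle-inequality estimate loses a factor of $s$ and would give only $d/(n\epsilon^2)$. Recovering the full $ds/(n\epsilon^2)$ rate requires the Efron--Stein / product-structure control above, which uses that a single test function $\gamma$ cannot simultaneously saturate all $d$ linear functionals arising from the coordinates. I expect the bookkeeping around this step, together with ensuring the decomposable construction remains a valid probability measure as $\delta$ varies, to be the main technical obstacle.
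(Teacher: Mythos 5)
Your proposal is correct and follows essentially the same route as the paper's Appendix~\ref{app:minimax} proof: the identical decomposable bucket construction (uniform placement of the single non-zero entry per bucket with a $\delta$-biased sign, giving $\theta(P_\nu)_j=(s\delta/d)\nu_j$), the same Hamming separation $\tau=s^2\delta^2/(2d^2)$, the same invocation of Lemma~\ref{lemma:ldpassouad}, and the same choice of $\delta$. The only substantive difference is in bounding $F_{\mathbb{B}_\infty(\mathcal{X}^s),\mathcal{P}}$, where the paper merely asserts the bound $8s\delta^2/d$ via the ``interleaving structure,'' while your Efron--Stein/variance-subadditivity argument rigorously derives the same $O(s\delta^2/d)$ rate (note only that $(e^\epsilon-1)^2\leq 2\epsilon^2$ fails at $\epsilon=1$; use $(e^\epsilon-1)^2\leq 4\epsilon^2$ instead, which merely changes the universal constant).
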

\begin{proof}
See Appendix \ref{app:minimax}.
\end{proof}

To understand the minimax rate, we can consider the non-private error rate of decomposable numerical vector aggregation, which is $
\mathbb{E}[\|\widehat{\theta}-\theta\|_2^2]\leq\sum_{i=1}^d \mathbb{E}[\|\widehat{\theta}_i-\theta_i\|_2^2]\leq \frac{4s}{n}.$
Thus, the enforcement of local $\epsilon$-LDP causes the effective sample size to decrease from $n$ to $O(n\epsilon^2/d)$.

Now consider the $\ell_1$-norm error metric, the estimation error lower bounds can be derived as $O(\frac{d\sqrt{s}}{\sqrt{n\epsilon^2}})$ (see Theorem \ref{the:vectorminimax1}). Compared to the non-private error rate for decomposable numerical vector data:
\begin{equation*}\label{eq:nonprivaterisk1}
\mathbb{E}[\|\widehat{\theta}-\theta\|_1]\leq\sum_{a=1}^s\sum_{j=1}^{d/s} \mathbb{E}[|\widehat{\theta}_{a,j}-\theta_{a,j}|]\leq 2 s\sqrt{\frac{d/s}{n}},
\end{equation*}
this also demonstrates that the $\epsilon$-LDP reduces the effective sample size from $n$ to $O(n \epsilon^2/d)$.

\begin{theorem}\label{the:vectorminimax1}
For the numerical vector aggregation problem, for any $\epsilon$-LDP mechanism, there exists a universal constant $c> 0$ such that for all $\epsilon\in (0,1]$,
$$\mathfrak{M}_n(\theta(\mathcal{P}), \|\cdot\|_1, \epsilon)\geq c\cdot\min\{\frac{s}{2},\frac{d\sqrt{s}}{\epsilon^2 \sqrt{n}}\}.$$
\end{theorem}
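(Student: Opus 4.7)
The plan is to mirror the proof of Theorem \ref{the:vectorminimax} almost verbatim, reusing the decomposable hypercube construction and the marginal-discrepancy bound, but swapping in an $\ell_1$-appropriate Hamming separation. Concretely, I would still partition $[d]$ into $s$ disjoint buckets of size $d/s$, define a hypercube $\mathcal{V}=\{-1,1\}^d$ and build the same product family $\{P_\nu\}_{\nu\in \mathcal{V}}$ on $\mathcal{X}^s$ in which, independently in each bucket, the non-zero coordinate is placed in position $j$ and carries sign determined by $\nu_j$ with probability proportional to $\tfrac{1}{2}+\delta\nu_j$ for some parameter $\delta\in(0,1/2]$. This is precisely the family used for the squared-error bound, so the per-coordinate mean satisfies $|\theta_j(P_{+j})-\theta_j(P_{-j})|\ =\ \Theta(\delta s/d)$.

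The key conceptual change is the Hamming separation. For the $\ell_1$ loss one has, for any candidate $\theta$ and vertex mapping $\kappa$,
\[
\|\theta-\theta(P_\nu)\|_1\ \geq\ \tfrac{\delta s}{d}\sum_{j=1}^d \llbracket [\kappa(\theta)]_j\neq \nu_j\rrbracket,
\]
so the separation parameter is $\tau=\delta s/d$, linear rather than quadratic in $\delta$. The $\ell_1$ analogue of Lemma \ref{lemma:ldpassouad} then gives
\[
\mathfrak{M}_n(\theta(\mathcal{P}),\|\cdot\|_1,\epsilon)\ \geq\ \tfrac{d}{2}\cdot\tfrac{\delta s}{d}\Bigl[1-\Bigl(\tfrac{n(e^\epsilon-1)^2}{2d}F_{\mathbb{B}_\infty(\mathcal{X}^s),\mathcal{P}}\Bigr)^{1/2}\Bigr].
\]
Since the underlying distributions are unchanged, the bound $F_{\mathbb{B}_\infty(\mathcal{X}^s),\mathcal{P}}\leq 8\delta^2 s/d$ established inside the proof of Theorem \ref{the:vectorminimax} carries over unmodified, and I would cite it rather than redo the marginal-discrepancy calculation.

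The remaining step is a single-variable optimization. Using $(e^\epsilon-1)^2\leq 2\epsilon^2$ for $\epsilon\in(0,1]$ and setting $\delta$ so that the bracketed factor is at least $1/2$, i.e.\ $\delta^2=\Theta(d^2/(n\epsilon^2 s))$, yields $d\tau/2 = \Theta(\delta s)=\Theta(d\sqrt{s}/\sqrt{n\epsilon^2})$, matching the claimed rate. The ceiling $s/2$ covers the trivial regime where this optimal $\delta$ would exceed $1/2$; there one falls back to the no-information lower bound obtained by noting that $\|\theta(P_\nu)\|_1$ itself is at most $s$ and the hypothesis pair can be chosen to force constant total-variation error.

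I expect the main obstacle to be purely bookkeeping: stating the $\ell_1$ version of the private Assouad lemma with the correct constant (it is folklore, but the excerpt only displays the $\|\cdot\|_2^2$ form of Lemma \ref{lemma:ldpassouad}), and verifying that the Hamming-separation inequality above really holds uniformly over estimators $\theta\in\mathbb{R}^d$ and not only over $\theta\in\theta(\mathcal{P})$. Everything else, including the bound on $F_{\mathbb{B}_\infty(\mathcal{X}^s),\mathcal{P}}$ and the final tuning of $\delta$, is a direct reuse of the machinery already developed for Theorem \ref{the:vectorminimax}.
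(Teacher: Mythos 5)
Your proposal is correct and follows essentially the same route as the paper's own proof: the same decomposable bucket construction, the same linear-in-$\delta$ Hamming separation parameter $\tau=\delta s/d$, reuse of the bound $F_{\mathbb{B}_\infty(\mathcal{X}^s),\mathcal{P}}\leq 8\delta^2 s/d$, and the same final tuning $\delta^2=\Theta(d^2/(n s(e^\epsilon-1)^2))$ with the $s/2$ ceiling for the degenerate regime. The only cosmetic difference is your explicit use of $(e^\epsilon-1)^2=\Theta(\epsilon^2)$, which the paper leaves implicit in the universal constant.
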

\begin{proof}
The proof for the $\|\cdot\|_1$ error follows a similar procedure to the one for the $\|\cdot\|_2^2$ error, with some differences in multiplicative factors in steps 2 and 4.
In step 2, Equation (\ref{eq:separation}) now becomes:
\begin{equation*}
    \|\widehat{\theta}-\theta_\nu\|_1\geq \frac{\delta s}{d}\sum_{j=1}^{l}\sum_{a=1}^s\llbracket\widehat{\nu}_{a_j}\neq \nu_{a_j}\rrbracket.
    \end{equation*}
Consequently, the Hamming separation parameter with respect to the $\ell_1$-norm is $\frac{\delta s}{d}$. Later, in step 4, according to Lemma \ref{lemma:ldpassouad}, we obtain:
$$\max_{\nu \in \mathcal{V}} \mathbb{E}_{P_\nu}[\|\widehat{\theta}-\theta_\nu\|_1]\geq {\delta s}[1-(4n(e^\epsilon-1)^2\delta^2s/d^2)^\frac{1}{2}].$$
By choosing the parameter $\delta^2$ at $\min\{1, d^2/(16n s (e^\epsilon-1)^2)\}$, we establish the lower bound as:
$$\mathfrak{M}_n(\theta(\mathcal{P}), \|\cdot\|_1, \epsilon)\geq \min\{\frac{s}{2},\frac{d \sqrt{s}}{8(e^\epsilon-1)\sqrt{n}}\}.$$
\end{proof}

\section{Optimal Frequency Mechanism}\label{sec:mechanism}
In this section, we propose a frequency-based mechanism (i.e., Collision) for $\epsilon$-LDP numerical vector aggregation that matches minimax error lower bounds.

To mitigate the curse of dimension/density on the performance of numerical vector estimation, existing $\epsilon$-LDP mechanisms employ the paradigm of \textit{dimension/key sampling \& categorical randomization}, which, however, fails to achieve the optimal statistical rate. We propose to first condense the numerical vector to prevent interference from the original dimension, and then sample one element from the dense vector using the exponential mechanism \cite{mcsherry2007mechanism} to avoid splitting the privacy budget (in order to prevent dependence on $s^2$). We define an element domain: $$\mathcal{Y}=\{1_-,1_+,2_-,2_+,...,d_-,d_+\},$$ and represent an input $\mathbf{Y}——\mathbf{x}$ as a subset of $\mathcal{Y}$ with size $s$. We also define an output domain as $\mathcal{Z}=\{1,2,...,t\}$. The Collision mechanism probabilistically outputs one item $z \in \mathcal{Z}$, the probability of which corresponds to whether the item has a collision with hashed events in $\mathbf{Y}_{\mathbf{x}}$. Here, the hash function $H: \mathcal{Y}\mapsto \mathcal{Z}$ is uniformly chosen at random from a finite domain $\mathcal{H}$ by each user independently, with an identical (and often uniform) distribution $P_\mathcal{H}:\mathcal{H}\mapsto [0,1]$. We present the design of the Collision mechanism in Definition \ref{def:design}.


\begin{definition}[$(d,s,\epsilon,t)$-Collision Mechanism]\label{def:design}
Given a random-chosen hash function $H: \mathcal{Y}\mapsto \mathcal{Z}$ according to distribution $P_\mathcal{H}$,  \edit{taking} an $s$-sparse numerical vector $\mathbf{Y}_\mathbf{x} \subseteq \mathcal{Y}$ as input, the Collision mechanism randomly outputs an element $z \in \mathcal{Z}$ according to following probability design:
\begin{equation*}\label{eq:distance}
    \mathbb{P}[z | \mathbf{x}]= \left\{
    \begin{array}{@{}lr@{}}
        \frac{e^\epsilon}{\Omega},\ \ \ \ \ \ \ \ \ \ \ \ \ \ \ \ \ \ \ \ \ \ \ \ \text{if } \exists\ y \in \mathbf{Y}_\mathbf{x} \text{ that } z=H(y);\\
        \frac{\Omega-e^\epsilon\cdot\#\{H(y)\ |\ H(y)\ for\ y\ \in \mathbf{Y}_\mathbf{x}\}}{(t-\#\{H(y)\ |\ H(y)\ for\ y\ \in \mathbf{Y}_\mathbf{x}\})\cdot \Omega}.\ \ \ \ \ \ otherwise.\\
    \end{array}
    \right.
\end{equation*}
The normalization factor $\Omega=s\cdot e^\epsilon+t-s$. An unbiased estimator of indicator $\llbracket j_b\in \mathbf{Y}_{\mathbf{x}}\rrbracket $ for $b\in\{-1,1\}$ and $j\in[d]$ is:
$$\widehat{\llbracket j_b\in \mathbf{Y}_{\mathbf{x}}\rrbracket }=\frac{\llbracket H(j_b)= z\rrbracket -1/t}{{e^\epsilon}/{\Omega}-1/t}.$$
\end{definition}

Figure \ref{fig:collision} (b) and (a) demonstrate the probability design of the Collision mechanism on numerical vector $[0,0,1,0,-1,0]$ when hash values conflict with each other or not, respectively. It can be seen as an $s$-item generalization of the prevalent local hash \cite{wang2017locally} for $1$-item categorical data. When $s\geq 2$, the hashed values may coincide with each other; thus, simply restraining the proportional probability in $\{1, e^{\epsilon}\}$ as \cite{wang2017locally} will cause inconsistency in the normalization factor $\Omega'=\#\{H(y)\ |\ H(y)\ for\ y\ \in \mathbf{Y}_\mathbf{x}\}\cdot (e^\epsilon-1)+t$ for different input $\mathbf{Y}_\mathbf{x}$, and hence violate $\epsilon$-LDP. Therefore, we fix $\Omega$ at $s\cdot e^\epsilon+t-s$ and uniformly redistribute the probability of coincided hash values to the remaining output domain as $\frac{\Omega-e^\epsilon\cdot\#\{H(y)\ |\ H(y)\ for\ y\ \in \mathbf{Y}_\mathbf{x}\}}{(t-\#\{H(y)\ |\ H(y)\ for\ y\ \in \mathbf{Y}_\mathbf{x}\})\cdot \Omega}$. In the Collision mechanism, the proportional probability is relaxed to $[1, e^{\epsilon}]$. We note that sampling one item from $\mathbf{Y}_\mathbf{x}$ and then feeding it into local hash \cite{wang2017locally} leads to $O(\frac{d s^2}{n \epsilon^2})$ squared error (see a similar analysis in Section \ref{subsec:existing} for PCKV \cite{gu2019pckv}).

\begin{figure}
\vspace*{-1em}
\begin{center}
\centerline{\includegraphics[width=83mm]{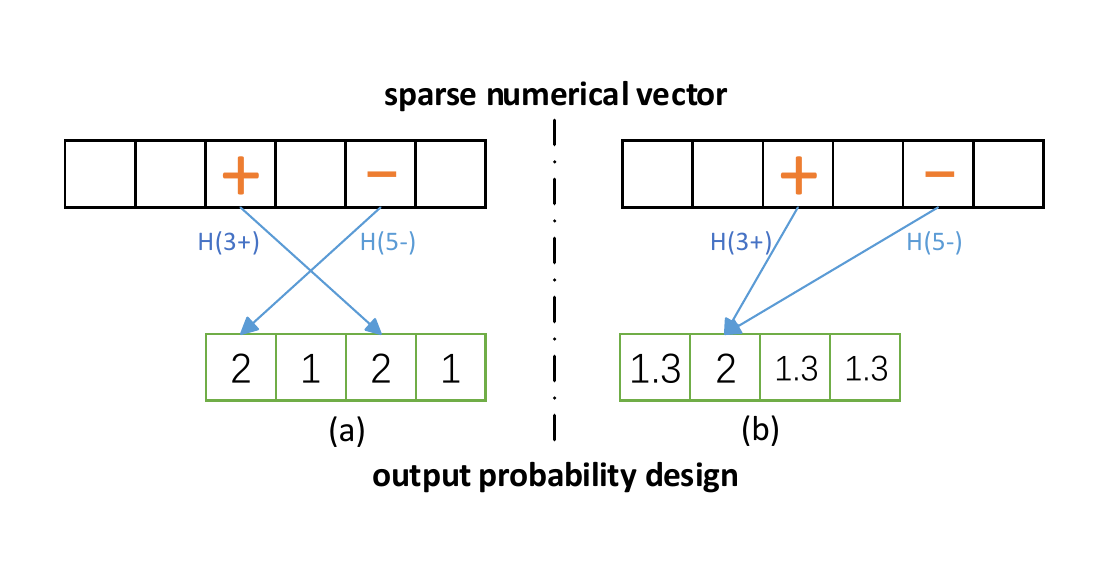}}
\vspace*{-2.0em}
\caption{An illustration of the Collision mechanism without hash conflict (a) and with hash conflicts (b), where $d=6$, $s=2$, $t=4$ and $\epsilon=\log(2)$.}
\label{fig:collision}
\end{center}
\vspace*{-0.0em}
\end{figure}

The local privacy guarantee of the mechanism is provided in Proposition \ref{the:ldp}, which is evident since $s\geq \#{H(y)\ |\ H(y)\ for\ y\ \in \mathbf{Y}_\mathbf{x}}$. The utility-optimality guarantee of the mechanism is presented in Theorem \ref{the:upperbound}. For $\epsilon=O(1)$, its computational complexity is bounded by $t^*\approx s+2s-1+s\cdot e^\epsilon=O(s)$, and communication complexity is $\log_2(2s-1+s\cdot e^\epsilon)=O(\log s)$.

\begin{proposition}\label{the:ldp}
The $(d,s,\epsilon,t)$-Collision mechanism in Definition \ref{def:design} satisfies $\epsilon$-LDP for numerical vector data.
\end{proposition}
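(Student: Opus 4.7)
The plan is to condition on the random hash function $H$ and show that the ratio $\mathbb{P}[z\mid\mathbf{x}]/\mathbb{P}[z\mid\mathbf{x}']$ is bounded by $e^{\epsilon}$ for every fixed $H\in\mathcal{H}$ and every pair $\mathbf{x},\mathbf{x}'\in\mathcal{X}^s$, every output $z\in\mathcal{Z}$. Because $H$ is drawn from a distribution $P_{\mathcal{H}}$ that is independent of the input, once the pointwise bound holds under every realization of $H$, the mechanism viewed as the joint output $(H,z)$ (or the marginal over $z$) inherits the same bound via $\mathbb{P}[(H,z)\mid\mathbf{x}]=P_{\mathcal{H}}(H)\cdot\mathbb{P}[z\mid H,\mathbf{x}]$.

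For fixed $H$, the design in Definition \ref{def:design} assigns only two possible values to $\mathbb{P}[z\mid\mathbf{x}]$: the ``hit'' probability $e^{\epsilon}/\Omega$ whenever $z=H(y)$ for some $y\in\mathbf{Y}_{\mathbf{x}}$, and the ``miss'' probability $(\Omega-e^{\epsilon}k)/((t-k)\Omega)$ otherwise, where $k:=\#\{H(y):y\in\mathbf{Y}_{\mathbf{x}}\}$ is the number of distinct hash images of $\mathbf{Y}_{\mathbf{x}}$. Since $|\mathbf{Y}_{\mathbf{x}}|=s$, we always have $1\le k\le s$. The core algebraic step is to show
\[
\frac{1}{\Omega}\ \le\ \frac{\Omega-e^{\epsilon}k}{(t-k)\Omega}\ \le\ \frac{e^{\epsilon}}{\Omega}
\qquad\text{for every }k\in\{1,\dots,s\},
\]
so that every output probability under every input lies in the interval $[\,1/\Omega,\ e^{\epsilon}/\Omega\,]$, which immediately yields the ratio bound $e^{\epsilon}$.

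For the lower inequality, substituting $\Omega=s\,e^{\epsilon}+t-s$ reduces it to $(s-k)(e^{\epsilon}-1)\ge 0$, which holds because $k\le s$ and $\epsilon\ge 0$; for the upper inequality, the same substitution reduces it to $(t-s)(e^{\epsilon}-1)\ge 0$, which holds because the construction implicitly assumes $t\ge s$ (otherwise the ``miss'' case is vacuous or $\Omega-e^{\epsilon}s$ could be negative). I also need to verify that the miss probability is genuinely non-negative, which is exactly the $k=s$ instance of the lower bound and reduces to $1/\Omega\ge 0$.

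The main obstacle, and the subtle point the proof must emphasize, is the reason the constant $\Omega$ was chosen as $s\,e^{\epsilon}+t-s$ rather than the apparently natural $k(e^{\epsilon}-1)+t$: the former is independent of $\mathbf{x}$, so the per-input normalisation no longer leaks information through the denominator, while the latter would produce an input-dependent normalizer and violate $\epsilon$-LDP. Once this is flagged and the two inequalities above are verified with the substitution $\Omega=s\,e^{\epsilon}+t-s$, the proposition follows.
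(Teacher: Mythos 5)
Your proposal is correct and follows essentially the same route as the paper, which simply asserts the claim is evident from $s \geq \#\{H(y) \mid y \in \mathbf{Y}_\mathbf{x}\}$; you have filled in the algebra that this remark leaves implicit, namely that conditioning on $H$ and using $k \leq s$ (together with $t \geq s$) places every output probability in $[1/\Omega,\, e^{\epsilon}/\Omega]$. Your observation about why $\Omega$ must be input-independent is also exactly the point the paper makes in the discussion following Definition~\ref{def:design}.
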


\begin{theorem}\label{the:upperbound}
Given privacy budget $\epsilon=O(1)$, with optimal choice of the output parameter $t^*$, the mean estimation error of $(d,s,\epsilon,t)$-Collision mechanism for numerical vector is $O(\frac{ds}{n\epsilon^2})$.
\end{theorem}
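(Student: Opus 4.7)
The plan is to upper bound the per-coordinate variance of the unbiased estimator in Definition \ref{def:design}, sum it over the $2d$ coordinates, and then optimize in $t$. First I would verify unbiasedness. Write $p := e^\epsilon/\Omega$ and $q := 1/t$, and condition on the random hash $H$. When $j_b \in \mathbf{Y}_\mathbf{x}$, the value $H(j_b)$ is one of the hashed ``high-probability'' outputs and so $\mathbb{E}[\llbracket H(j_b)=z\rrbracket \mid j_b \in \mathbf{Y}_\mathbf{x}] = p$ regardless of whether collisions occur. When $j_b \notin \mathbf{Y}_\mathbf{x}$, splitting on whether $H(j_b)$ collides with some $H(y)$, $y\in \mathbf{Y}_\mathbf{x}$, gives probability $c/t$ of a collision (mass $p$) and $(t-c)/t$ of no collision (mass $(\Omega-ce^\epsilon)/((t-c)\Omega)$); these combine to exactly $1/t$. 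This shows $\widehat{\llbracket j_b\in \mathbf{Y}_\mathbf{x}\rrbracket}$ has the desired affine form with denominator $p-q$.

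Next I would control the variance. Because $\llbracket H(j_b)=z\rrbracket$ is Bernoulli with mean at most $p$, its variance is at most $p(1-p)\le p$. Since users hash and randomize independently, averaging $n$ i.i.d.\ unbiased estimators yields
\begin{equation*}
\mathbb{E}\bigl[(\widehat{\bar{\mathbf{x}}}_j^{\,b} - \bar{\mathbf{x}}_j^{\,b})^2\bigr] \;\leq\; \frac{p}{n(p-q)^2}.
\end{equation*}
A direct computation gives $p - q = (t-s)(e^\epsilon-1)/(\Omega t)$, so summing the squared errors over all $2d$ coordinates $(j,b)$ (by linearity of expectation for $\|\cdot\|_2^2$) produces
\begin{equation*}
\mathbb{E}\bigl[\|\widehat{\bar{\mathbf{x}}} - \bar{\mathbf{x}}\|_2^2\bigr] \;\leq\; \frac{2d\,p}{n(p-q)^2} \;=\; \frac{2d\,e^\epsilon\,\Omega\,t^2}{n\,(t-s)^2\,(e^\epsilon-1)^2}.
\end{equation*}

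Finally I would optimize this over $t>s$. Substituting $\Omega = se^\epsilon + t - s$ and $u = t-s$, the $t$-dependent factor becomes $(se^\epsilon+u)(u+s)^2/u^2$; setting its logarithmic derivative to zero yields the quadratic $u^2 - s u - 2 s^2 e^\epsilon = 0$, whose positive root gives $t^* = s(3+\sqrt{1+8e^\epsilon})/2 = \Theta(s)$ whenever $\epsilon=O(1)$. At this $t^*$ we have $\Omega = \Theta(s)$ and $(t^*-s)^2 = \Theta(s^2)$, which makes the $t$-dependent factor collapse to $\Theta(s)$ and yields the total MSE bound $O(ds/(n(e^\epsilon-1)^2)) = O(ds/(n\epsilon^2))$ via $e^\epsilon-1 \geq \epsilon$. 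The main obstacle I expect is the collision handling in the expectation: the ``non-hashed'' output probability depends on the input-dependent quantity $c = \#\{H(y):y\in \mathbf{Y}_\mathbf{x}\}$, and one must verify that the two cases in the expectation cancel exactly to $1/t$ \emph{before} the rest of the variance/optimization analysis goes through cleanly; once this is done, everything else is a routine calculus exercise.
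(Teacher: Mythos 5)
Your proposal is correct and follows essentially the same route as the paper: bound the per-coordinate variance of the unbiased indicator estimator (Bernoulli with rates $e^\epsilon/\Omega$ and $1/t$), sum over the $2d$ coordinates, and choose $t^*=\Theta(s)$ to get $O(ds/(n\epsilon^2))$. Your explicit verification that the collision cases cancel to exactly $1/t$ is a worthwhile detail the paper only asserts, and your $t^*=s(3+\sqrt{1+8e^\epsilon})/2$ differs from the paper's $t^*\approx se^\epsilon+2s-1$ only because you optimize a slightly cruder bound; both are $\Theta(s)$ and yield the same rate.
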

\begin{proof}
Recall that the $j$-th mean value $\overline{\mathbf{x}}_j=\frac{1}{n}\sum_{i=1}^n \mathbf{x}_{i,j}$ equals to $\frac{1}{n}\sum_{i=1}^n (\llbracket j_{+}\in \mathbf{Y}_{\mathbf{x}}\rrbracket -\llbracket j_{-}\in \mathbf{Y}_{\mathbf{x}}\rrbracket )$. Assuming the uniform randomness and independence of hash functions in $\mathcal{H}$, we have each observed indicator $\llbracket H(j_b)=z\rrbracket$ as a Bernoulli random variable with a success rate $\frac{e^\epsilon}{\Omega}$ (when $j_b \in \mathbf{Y}_{\mathbf{x}}$) or a success rate $\frac{1}{t}$ (when $j_b \notin \mathbf{Y}_{\mathbf{x}}$). Consequently, the mean squared error of estimated frequencies is:
\begin{equation*}
\begin{aligned}
&&&{Var}[\widehat{\overline{\mathbf{x}}}]\leq 2\sum_{j=1}^d\sum_{b\in[-1,1]}{Var}\big[\widehat{\llbracket j_b\in \mathbf{Y}_{\mathbf{x}}\rrbracket}\big]&\\
&&&\leq\frac{2}{n}\cdot{\frac{s\cdot {e^\epsilon}/{\Omega}(1-{e^\epsilon}/{\Omega})+(2d-s)\cdot 1/t(1-1/t)}{({e^\epsilon}/{\Omega}-1/t)^2}}.
\end{aligned}
\end{equation*}
Taking the previous formula as a function of continuous $t$, the function is indeed convex when $d\geq t\geq s$. Choosing an approximate optimal $t^*$ at around ${2s-1+s\cdot e^\epsilon}$, we obtain:
\begin{equation*}
\begin{aligned}
&&{Var}[\widehat{\overline{\mathbf{x}}}]&\leq \frac{2d\cdot\Theta(s^3)+\epsilon\cdot\Theta(s^3)}{n\cdot\epsilon^2\cdot(-1 + (2 + \epsilon)\cdot s)^2}\leq O(\frac{ds}{n\epsilon^2}).&&\\
\end{aligned}
\end{equation*}
We note that a similar conclusion applies to non-missing frequency estimation (refer to the beginning of Section \ref{sec:meanmechanism}).
\end{proof}
\section{Privacy Amplification in Shuffle Model}\label{sec:shuffle}
When a semi-trusted shuffler is positioned between users and the aggregator, the aggregator only observes the shuffled private views $\mathcal{S}(z_1,z_2,...z_n)$, thereby amplifying the privacy level. This section aims to analyze the privacy amplification upper and lower bounds of $n$ shuffled private views from the Collision mechanism.

\subsection{Amplification Upper Bounds}
To prove the privacy amplification upper bounds based on Lemma \ref{lemma:strongerclone}, we begin by analyzing the mixture properties of the Collision mechanism. Given the hash function space $\mathcal{H}$ and distribution $P_{\mathcal{H}}$, with $\mathbb{P}[H]$ denoting the probability $P_{\mathcal{H}}[H]$ of selecting $H$, we demonstrate in the following lemma that the $(d,s,\epsilon,t)$-Collision mechanism has mixture parameter $\beta$.

\begin{lemma}[Mixture Properties]\label{lemma:mixture}

Let $x_1$, $x_1'$, $x_2$, $\ldots$, and $x_n$ be elements of the set $\mathcal{X}^s$. Let $H(\mathbf{Y}_{x_1})$ denote the set of hashed values of $\mathbf{Y}_{x_1}$, i.e., $\{H(y)\ |\ H(y)\ for\ y\ \in \mathbf{Y}_{x_1}\}$, and let $\mathcal{R}$ denote the $(d,s,\epsilon,t)$-Collision mechanism with $t>s$. We demonstrate the existence of distributions $\mathcal{Q}_1$, $\mathcal{Q}_1'$, $\mathcal{Q}_1^*$, $\mathcal{Q}_2$, $\ldots$, and $\mathcal{Q}_n$ that satisfy the following properties:
\begin{align}
\label{eq:mix1}&& \mathcal{R}(x_1^0)&=e^{\epsilon} \beta \mathcal{Q}_1 + \beta \mathcal{Q}_1'+(1-\beta-e^{\epsilon} \beta)\mathcal{Q}_1^* &\\
\label{eq:mix2}&& \mathcal{R}(x_1^1)&=\beta \mathcal{Q}_1 + e^{\epsilon} \beta  \mathcal{Q}_1'+(1-\beta-e^{\epsilon}\beta)\mathcal{Q}_1^* &\\
\label{eq:mix3}&& \forall i\in [2:n],\  \mathcal{R}(x_i)&=\beta \mathcal{Q}_1^0 + \beta \mathcal{Q}_1^1+(1-2\beta)\mathcal{Q}_i&
\end{align}
where $\beta=\sum_{H\in \mathcal{H}}\mathbb{P}[H]\cdot \frac{(e^\epsilon-1)(s-|H(\mathbf{Y}_{x_1})\bigcap H(\mathbf{Y}_{x_1'})|)}{s e^\epsilon+t-s}$.
\end{lemma}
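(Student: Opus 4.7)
The plan is to establish the decomposition first conditional on the hash function $H$ drawn from $P_\mathcal{H}$, and then lift it to the full Collision mechanism by averaging over $\mathcal{H}$. Given a fixed $H$, the conditional output distribution $\mathcal{R}_H(x)$ is two-valued on $\mathcal{Z}$: it takes the large value $e^\epsilon/\Omega$ on $H(\mathbf{Y}_x)$ and the smaller value $(\Omega-e^\epsilon|H(\mathbf{Y}_x)|)/((t-|H(\mathbf{Y}_x)|)\Omega)$ on its complement, which makes the per-hash analysis purely algebraic.

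For a fixed $H$, I would partition $\mathcal{Z}$ into the four regions $A_H=H(\mathbf{Y}_{x_1})\cap H(\mathbf{Y}_{x_1'})$, $B_H=H(\mathbf{Y}_{x_1})\setminus H(\mathbf{Y}_{x_1'})$, $B_H'=H(\mathbf{Y}_{x_1'})\setminus H(\mathbf{Y}_{x_1})$, and the complement $C_H$. Direct evaluation shows that on $A_H$ the two conditional distributions agree; on $B_H$ the distribution $\mathcal{R}_H(x_1^0)$ exceeds $\mathcal{R}_H(x_1^1)$ by a positive gap that is a multiple of $e^\epsilon-1$; on $B_H'$ the reverse holds; and on $C_H$ both distributions take their respective small values. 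Taking $\mathcal{Q}_{1,H}$ and $\mathcal{Q}_{1,H}'$ essentially uniform on $B_H$ and $B_H'$, I would solve the per-point matching equation $\mathcal{R}_H(x_1^0)[z]-\mathcal{R}_H(x_1^1)[z]=(e^\epsilon-1)\beta_H\mathcal{Q}_{1,H}[z]$ on $z\in B_H$; this pins down $\beta_H$ in closed form proportional to $(s-|H(\mathbf{Y}_{x_1})\cap H(\mathbf{Y}_{x_1'})|)/\Omega$, which rearranges into the claimed formula. The common residual $\mathcal{Q}_{1,H}^\ast$ collected from $A_H\cup C_H$ together with whatever remains on $B_H\cup B_H'$ is manifestly a sub-probability vector under the standing assumption $t>s$; mild extra care is needed when $|H(\mathbf{Y}_{x_1})|\neq|H(\mathbf{Y}_{x_1'})|$, in which case a small asymmetry on $C_H$ must be absorbed to keep the two residuals identical.

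Next I would view each private message as the pair $(H,z)$ with density $\mathbb{P}[H]\mathcal{R}_H(x)[z]$, set $\beta=\sum_H\mathbb{P}[H]\beta_H$, and define the lifted distributions $\mathcal{Q}_1[(H,z)]\propto\mathbb{P}[H]\beta_H\mathcal{Q}_{1,H}[z]$ on $\mathcal{H}\times\mathcal{Z}$, and analogously for $\mathcal{Q}_1'$ and $\mathcal{Q}_1^\ast$, renormalizing so each becomes a probability distribution. Linearity in $\mathbb{P}[H]$ makes equations (\ref{eq:mix1}) and (\ref{eq:mix2}) follow immediately from the per-hash decompositions.

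The main obstacle I expect is equation (\ref{eq:mix3}): constructing, for every remaining input $x_i\in\mathcal{X}^s$, a bona fide distribution $\mathcal{Q}_i$ with $\mathcal{R}(x_i)=\beta\mathcal{Q}_1+\beta\mathcal{Q}_1'+(1-2\beta)\mathcal{Q}_i$. This reduces to the pointwise domination $\mathcal{R}(x_i)[(H,z)]\geq\beta(\mathcal{Q}_1+\mathcal{Q}_1')[(H,z)]$ at every outcome. Since $\mathcal{Q}_1+\mathcal{Q}_1'$ concentrates on the hash-dependent set $B_H\cup B_H'$, where $\mathcal{R}(x_i)$ may only assign the smaller value $(\Omega-e^\epsilon|H(\mathbf{Y}_{x_i})|)/((t-|H(\mathbf{Y}_{x_i})|)\Omega)$ on much of that set, the argument needs a worst-case bound on the ratio $\beta_H\mathcal{Q}_{1,H}[z]/\mathcal{R}_H(x_i)[z]$ uniform over $x_i$. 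Using $|\mathbf{Y}_{x_i}|=s$, $|B_H|\leq s$, and $t>s$, I expect this to reduce to a clean algebraic inequality in $s$, $t$, and $\epsilon$ that holds whenever the Collision mechanism is well defined, thereby closing the proof.
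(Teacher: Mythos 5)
Your overall strategy --- building $\mathcal{Q}_1$ and $\mathcal{Q}_1'$ from the positive and negative parts of $\mathcal{R}(x_1)-\mathcal{R}(x_1')$, normalizing by a $\beta$ read off from that difference, and collecting the common residual into $\mathcal{Q}_1^*$ --- is the same as the paper's; whether you do it per fixed $H$ and then average over $P_\mathcal{H}$, or directly on the joint $(H,z)$ distribution as the paper does, is immaterial. The real problem is that you stop exactly where the proof gets nontrivial. For Equation~(\ref{eq:mix3}) you correctly reduce the claim to the pointwise domination $\mathcal{R}(x_i)[(H,z)]\geq\beta(\mathcal{Q}_1+\mathcal{Q}_1')[(H,z)]$, but you then only say you ``expect'' a worst-case ratio bound over all $x_i$ to ``reduce to a clean algebraic inequality.'' That is the step the lemma needs, and it is left unproven. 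Moreover, the brute-force route you sketch is unnecessarily fragile: the domination follows in two lines without any case analysis over $x_i$. Since $\mathcal{Q}_1$ and $\mathcal{Q}_1'$ have disjoint supports, $\beta(\mathcal{Q}_1+\mathcal{Q}_1')=\beta\max(\mathcal{Q}_1,\mathcal{Q}_1')$ pointwise; Equations~(\ref{eq:mix1})--(\ref{eq:mix2}) give $R_1\geq e^{\epsilon}\beta\,\mathcal{Q}_1$ and $R_1'\geq e^{\epsilon}\beta\,\mathcal{Q}_1'$, hence $\max(R_1,R_1')\geq e^{\epsilon}\beta(\mathcal{Q}_1+\mathcal{Q}_1')$; and $\epsilon$-LDP of the Collision mechanism gives $R_i\geq e^{-\epsilon}\max(R_1,R_1')$. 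Chaining these yields the domination uniformly in $x_i$, which is how the paper closes the argument.

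A second, smaller issue: you dismiss the case $|H(\mathbf{Y}_{x_1})|\neq|H(\mathbf{Y}_{x_1'})|$ as needing only ``mild extra care,'' but when the two inputs suffer different numbers of hash collisions the two ``low'' output probabilities differ, so $\mathcal{R}_H(x_1)-\mathcal{R}_H(x_1')$ is nonzero on the complement region $C_H$ as well, the positive part is no longer supported only on $B_H$, and the per-hash mass is no longer proportional to $s-|H(\mathbf{Y}_{x_1})\cap H(\mathbf{Y}_{x_1'})|$. Your uniform-on-$B_H$ ansatz for $\mathcal{Q}_{1,H}$ therefore does not match the actual positive part in that case, and the closed form for $\beta_H$ you plan to ``pin down'' from the matching equation on $B_H$ alone would not be the correct normalizer. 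If you instead define $\mathcal{Q}_1$ directly as the normalized positive part of $R_1-R_1'$ (as the paper does), this case needs no special treatment, and only the exact closed form of $\beta$ --- not the validity of the decomposition --- is affected.
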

\begin{proof}
See Appendix \ref{app:mixture}.
\end{proof}

Given Lemma \ref{lemma:mixture}, then combining the reduction in Lemma \ref{lemma:strongerclone}, the monotonic property of $\mathcal{D}(P_{\beta}\|Q_{\beta})$ \cite[Lemma 5.1]{feldman2023stronger}, and $\beta\leq \frac{s(e^\epsilon-1)}{s e^\epsilon+t-s}$ (equality holds when $H(\mathbf{Y}_{x_1})\bigcap H(\mathbf{Y}_{x_1'})=\Phi$), we arrive at the main theorem for privacy amplification upper bounds (in Theorem \ref{the:upper}).

\begin{figure*}[!htb]
\begin{center}
\centerline{\includegraphics[width=175mm]{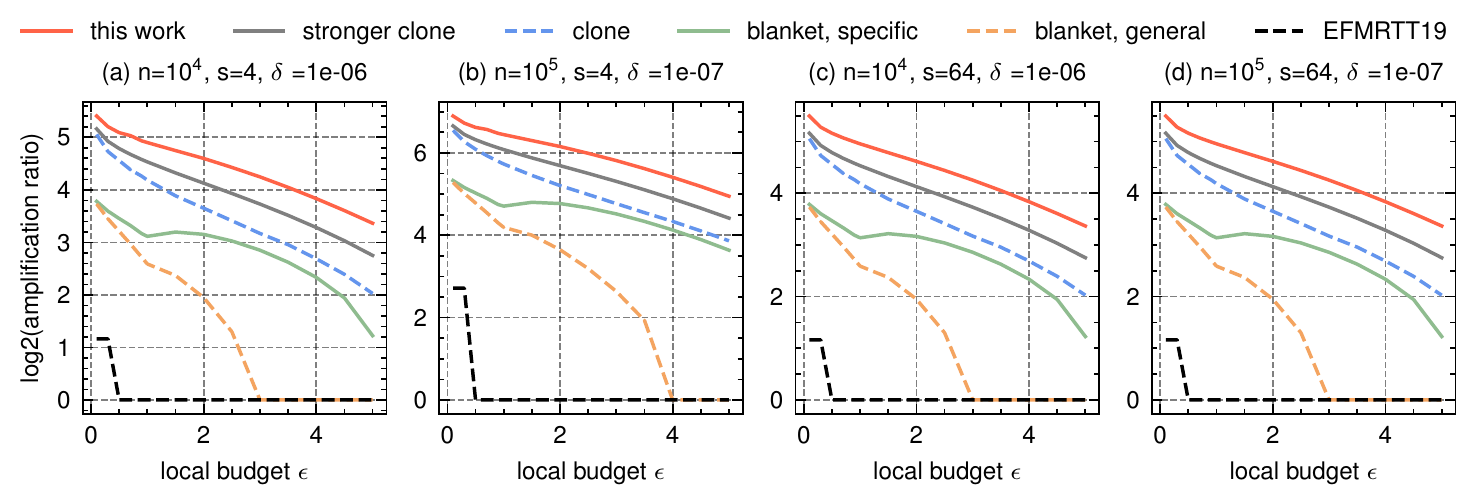}}
\vskip -0.15in
\caption{Comparison of amplification effects (base $2$ logarithm of amplification ratio $\frac{\epsilon}{\epsilon_c}$, the higher the better, where $\epsilon_c$ is the amplified privacy level in various amplification approaches) of Collision mechanism with $n=10^4$ or $10^5$, sparsity parameter $s=4$ or $64$, and varying local budget $\epsilon\in [0.1, 5.0]$. The hyperparameter $t$ is set to $t^*=\lfloor s e^\epsilon+2s-1\rfloor$ in the Collision mechanism.}
\label{fig:shuffle}
\end{center}
\vspace*{-1.5em}
\end{figure*}

\begin{theorem}[Amplification Upper Bounds]\label{the:upper}
Let $\mathcal{R}$ denote the $(d,s,\epsilon,t)$-Collision mechanism (assumed $t>s$), and let $\alpha=\frac{s(e^\epsilon-1)}{s e^\epsilon+t-s}$, then for any neighboring datasets $D,D'$, we have:
\begin{equation}\label{eq:upper}
\mathcal{D}(\mathcal{S}\circ\mathcal{R}(D)\| \mathcal{S}\circ\mathcal{R}(D'))\leq \mathcal{D}(P_{\alpha}\|Q_{\alpha}).
\end{equation}
\end{theorem}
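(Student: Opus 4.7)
The proof plan follows the road-map sketched immediately before the theorem statement: assemble (i) the mixture decomposition of Collision's output distribution from Lemma \ref{lemma:mixture}, (ii) the stronger clone reduction of Lemma \ref{lemma:strongerclone}, and (iii) the monotonicity of $\mathcal{D}(P_\beta\|Q_\beta)$ in $\beta$ cited from \cite[Lemma~5.1]{feldman2023stronger}, and then chain them together. I would fix arbitrary neighboring datasets $D=(x_1,x_2,\ldots,x_n)$ and $D'=(x_1',x_2,\ldots,x_n)$ differing only in the first record; the bound must then be established for this worst case and it automatically propagates to $\mathcal{S}\circ\mathcal{R}$ since shuffling is invariant under permutation of users.

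Next I would invoke Lemma \ref{lemma:mixture} to produce distributions $\mathcal{Q}_1,\mathcal{Q}_1',\mathcal{Q}_1^*,\mathcal{Q}_2,\ldots,\mathcal{Q}_n$ satisfying equations (\ref{eq:mix1})--(\ref{eq:mix3}) with the mechanism-specific mixture parameter
$$\beta=\sum_{H\in \mathcal{H}}\mathbb{P}[H]\cdot \frac{(e^\epsilon-1)\bigl(s-|H(\mathbf{Y}_{x_1})\cap H(\mathbf{Y}_{x_1'})|\bigr)}{s e^\epsilon+t-s}.$$
Since the hypothesis $t>s$ is inherited from Lemma \ref{lemma:mixture}, the decomposition is well-defined, and in particular $0\le \beta \le (e^\epsilon-1)/(e^\epsilon+1)$, so it is a valid input to the clone reduction.

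I would then apply Lemma \ref{lemma:strongerclone} with $\alpha$ in that lemma instantiated to our $\beta$, yielding
$$\mathcal{D}\bigl(\mathcal{S}\circ\mathcal{R}(D)\,\|\,\mathcal{S}\circ\mathcal{R}(D')\bigr)\leq \mathcal{D}(P_\beta\,\|\,Q_\beta).$$
To reach the claimed right-hand side $\mathcal{D}(P_\alpha\|Q_\alpha)$, I would observe that $|H(\mathbf{Y}_{x_1})\cap H(\mathbf{Y}_{x_1'})|\geq 0$ always, so $\beta \leq \alpha=\frac{s(e^\epsilon-1)}{s e^\epsilon+t-s}$, with equality attained when the two hashed sets are disjoint. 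Invoking the monotonicity of $\mathcal{D}(P_\beta\|Q_\beta)$ in $\beta$ (for fixed $\epsilon$) then upgrades the bound to $\mathcal{D}(P_\alpha\|Q_\alpha)$, completing the argument.

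The main obstacle is essentially offloaded to the companion Lemma \ref{lemma:mixture}: the delicate part there is exhibiting, for each fixed hash function $H$, explicit distributions on $\mathcal{Z}$ whose convex combinations reproduce both $\mathcal{R}(x_1)$ and $\mathcal{R}(x_1')$ with the matched coefficients $e^\epsilon\beta$ and $\beta$ simultaneously, handling the asymmetry that a hashed element may or may not lie in $H(\mathbf{Y}_{x_1})\cap H(\mathbf{Y}_{x_1'})$ and that the normalization factor $\Omega=s e^\epsilon + t - s$ must be preserved across inputs. Once that lemma is in hand, the present theorem reduces to plugging the three cited ingredients together in the order above.
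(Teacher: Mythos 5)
Your proposal is correct and follows essentially the same route as the paper's own proof: invoke Lemma \ref{lemma:mixture} to obtain the mixture parameter $\beta$, apply the stronger clone reduction of Lemma \ref{lemma:strongerclone} to get $\mathcal{D}(P_\beta\|Q_\beta)$, and then use $\beta\leq\alpha$ together with the monotonicity of $\mathcal{D}(P_\beta\|Q_\beta)$ in $\beta$ to conclude. The only substantive work, as you note, lives in the companion mixture lemma, which both you and the paper treat as a black box here.
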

\begin{proof}
Without loss of generality, we will consider two neighboring datasets, denoted as $D$ and $D'$, that differ only in the first datum. That is, $D=\{x_1,x_2,...,x_n\}$ and $D'=\{x_1',x_2,...,x_n\}$. Let $\beta$ be defined as $\sum_{H\in \mathcal{H}}\mathbb{P}[H]\cdot \frac{(e^\epsilon-1)(s-|H(\mathbf{Y}_{x_1})\bigcap H(\mathbf{Y}_{x_1'})|)}{s e^\epsilon+t-s}$. By invoking Lemma \ref{lemma:mixture} and Lemma \ref{lemma:strongerclone}, we obtain the following results:
$$\mathcal{D}(\mathcal{S}\circ\mathcal{R}(D)\| \mathcal{S}\circ\mathcal{R}(D'))\leq \mathcal{D}(P_{\beta}\|Q_{\beta}).$$

In addition, under the fixed value of $e^\epsilon$, the data processing inequality of distance measure $\mathcal{D}$ leads to the monotonically non-decreasing property of $\mathcal{D}(P_{\beta}\|Q_{\beta})$ with respect to $\beta$ \cite[Lemma 5.1]{feldman2023stronger}. By taking into account the inequality $\beta\leq \frac{s(e^\epsilon-1)}{s e^\epsilon+t-s}\leq \alpha$, we are able to draw the final conclusion.
\end{proof}

As the indistinguishable level between $\mathcal{S}\circ\mathcal{R}(D)$ and $\mathcal{S}\circ\mathcal{R}(D')$ is upper bounded by the indistinguishable level between $P_{\frac{s(e^\epsilon-1)}{s e^\epsilon+t-s}}$ and $Q_{\frac{s(e^\epsilon-1)}{s e^\epsilon+t-s}}$, we now focus on deriving the indistinguishable level of the latter pair. It is common in practice that $\delta\in (0,1]$ is fixed (e.g., $\delta=O(1/n)$), and one wants to know the minimum $\epsilon_c$ such that $P_{\frac{s(e^\epsilon-1)}{s e^\epsilon+t-s}}$ and $Q_{\frac{s(e^\epsilon-1)}{s e^\epsilon+t-s}}$ are $(\epsilon_c,\delta)$-indistinguishable. Directly solving the optimization problem is intractable; however, when $\epsilon_c$ is fixed, one can easily numerically compute $\mathcal{D}_{e^{\epsilon_c}}(P_{\frac{s(e^\epsilon-1)}{s e^\epsilon+t-s}}|Q_{\frac{s(e^\epsilon-1)}{s e^\epsilon+t-s}})$ and $\mathcal{D}_{e^{\epsilon_c}}(Q{\frac{s(e^\epsilon-1)}{s e^\epsilon+t-s}}|P_{\frac{s(e^\epsilon-1)}{s e^\epsilon+t-s}})$ (see reference \cite{koskela2021tight} for an $\tilde{O}(n)$ implementation). Finally, use the fact that the above hockey-stick divergence is monotonic w.r.t. $\epsilon_c\in [0, \epsilon]$, one can solve the minimization problem with satisfactory precision via binary search (e.g., in \cite{feldman2021hiding}).

 We compare our amplification upper bounds based on Theorem \ref{the:upper} with known bounds in the literature, including the closed-form amplification bound in \cite{erlingsson2019amplification} (denoted as \textit{EFMRTT19}), numerical bounds by privacy blanket \cite{balle2019privacy} (with both general parameter $1-e^{-\epsilon}$ and specific parameter $\frac{t}{s e^\epsilon+t-s}$ on total variation similarity), the numerical clone reduction \cite{feldman2021hiding}, and the numerical stronger clone reduction \cite{feldman2023stronger}. Some representative results are presented in Figure \ref{fig:shuffle}, which implies our bounds are tighter and save about $20\%$-$30\%$ privacy budget. This also indicates that existing bounds still overestimate privacy consumption, while our bounds match amplification lower bounds when local budget $\epsilon> \log(1+1/s)$ (see the next subsection).

\subsection{Amplification Lower Bounds}
In this section, we show that the amplification upper bounds in the former subsection are actually tight. Specifically, we provide worst-case scenarios where the quantity $\mathcal{D}(\mathcal{S}\circ\mathcal{R}(D)\| \mathcal{S}\circ\mathcal{R}(D'))$ is lower bounded by $\mathcal{D}(P_{\frac{s(e^\epsilon-1)}{s e^\epsilon+t-s}}\|Q_{\frac{s(e^\epsilon-1)}{s e^\epsilon+t-s}})$. The underlying idea is separately counting the observed elements $z$ in $[t]$ based on whether $z \in H(\mathbf{Y}_{x_1})$ or $z \in H(\mathbf{Y}_{x'_1})$, and subsequently summarizing them as Binomial counts.

\begin{theorem}[Amplification Lower Bounds]\label{the:lower}
Let $\mathcal{R}$ denote the $(d,s,\epsilon,t)$-Collision mechanism (assumed $t\geq 3s$), then there exists $\mathcal{H}$ and neighboring datasets $D,D'$ such that:
$$\mathcal{D}(\mathcal{S}\circ\mathcal{R}(D)\| \mathcal{S}\circ\mathcal{R}(D'))\geq \mathcal{D}(P_{\frac{s(e^\epsilon-1)}{s e^\epsilon+t-s}}\|Q_{\frac{s(e^\epsilon-1)}{s e^\epsilon+t-s}}).$$
\end{theorem}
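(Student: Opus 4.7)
The plan is to exhibit a specific worst-case hash family $\mathcal{H}$ and neighboring datasets $D,D'$, then apply the data processing inequality satisfied by $\mathcal{D}$ via a post-processing $\phi$ on the shuffled output whose image distribution coincides with $P_\alpha$ under $D$ and with $Q_\alpha$ under $D'$, where $\alpha:=s(e^\epsilon-1)/(se^\epsilon+t-s)$.

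I would take $\mathcal{H}=\{H_0\}$ for a single deterministic hash and pick $x_1,x_1'\in\mathcal{X}^s$ such that $S_1:=H_0(\mathbf{Y}_{x_1})$ and $S_2:=H_0(\mathbf{Y}_{x_1'})$ are disjoint size-$s$ subsets of $[t]$. This is exactly the saturation condition $|H_0(\mathbf{Y}_{x_1})\cap H_0(\mathbf{Y}_{x_1'})|=0$ that maximizes the mixture parameter of Lemma \ref{lemma:mixture}. For $i\geq 2$, the inputs $x_i$ are chosen so that $H_0(\mathbf{Y}_{x_i})$ avoids $S_1\cup S_2$, which is possible because $t\geq 3s$ leaves at least $s$ spare output codes. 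By Definition \ref{def:design}, user $1$'s output then falls in $S_1$ with probability $se^\epsilon/\Omega$ and in $S_2$ with probability $s/\Omega$ under $D$ (the two are swapped under $D'$), while each other user's output falls in $S_1$ and in $S_2$ with probability $s/\Omega$ apiece.

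Define $\phi$ to operate independently on each message $z$ of the shuffled multiset: if $z\in S_1$, relabel it $L$ with probability $p:=e^\epsilon-1$; if $z\in S_2$, relabel it $R$ with probability $p$; otherwise discard. Let $\phi$ return the pair of surviving-label counts $(N_L,N_R)$. Using $S_1\cap S_2=\emptyset$ and the independence of the $n$ user randomizations, the thinned Bernoulli marginals for user $1$ are $(se^\epsilon/\Omega)p=e^\epsilon\alpha$ and $(s/\Omega)p=\alpha$, matching the law of $(\Delta_1,\Delta_2)$ in Lemma \ref{lemma:strongerclone}, while each user $i\geq 2$ independently produces $L$ with probability $\alpha$ and $R$ with probability $\alpha$ (else neither). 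Summing over $i\geq 2$ gives a Binomial with rate $2\alpha$ for the total $L\cup R$ count and a conditional $\mathrm{Bin}(C,1/2)$ split for its partition into $(L,R)$, so $(N_L,N_R)\sim P_\alpha$ under $D$ and $(N_L,N_R)\sim Q_\alpha$ under $D'$. The data processing inequality then yields the theorem.

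The main obstacle is the regime $\epsilon>\log 2$, where $p=e^\epsilon-1>1$ and the above thinning is no longer a valid Bernoulli operation. In that regime the remedy is to omit the thinning and instead calibrate each $x_i$ ($i\geq 2$) so that its hash image partially overlaps both $S_1$ and $S_2$ with $|H_0(\mathbf{Y}_{x_i})\cap S_b|=s(e^\epsilon-2)/(e^\epsilon-1)$ for $b\in\{1,2\}$, making the user's \emph{unthinned} landing probability on each side already equal to $\alpha$. Because the target intersection size is typically non-integer, realizing it requires taking $\mathcal{H}$ to be a small symmetric randomized mixture of deterministic hashes whose weighted intersection patterns average to the desired fractional value while preserving $H(\mathbf{Y}_{x_1})\cap H(\mathbf{Y}_{x_1'})=\emptyset$ almost surely. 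The slack $t\geq 3s$ provides enough room to place these overlaps, and once the calibration is in place the same $(L,R)$-categorization exactly produces $P_\alpha$ and $Q_\alpha$.
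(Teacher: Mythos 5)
Your worst-case construction and the use of the data processing inequality are exactly the paper's: pick a hash family under which $H(\mathbf{Y}_{x_1})$, $H(\mathbf{Y}_{x_1'})$ and the images of the remaining users' inputs are pairwise disjoint (this is where $t\geq 3s$ is used), count how many shuffled messages fall in $H(\mathbf{Y}_{x_1})$ versus $H(\mathbf{Y}_{x_1'})$, and post-process. The paper stops there: it applies the unthinned indicator map $g(H,z)\in\{(1,0),(0,1),(0,0)\}$ to each message and identifies the resulting count pair with $(P_\alpha,Q_\alpha)$ directly. Your divergence is the Bernoulli thinning with retention probability $e^\epsilon-1$, introduced because the raw landing rates are $se^\epsilon/\Omega$ and $s/\Omega$ for user $1$ and $s/\Omega$ per side for every other user (with $\Omega=se^\epsilon+t-s$), which do not equal the stated $e^\epsilon\alpha$ and $\alpha=s(e^\epsilon-1)/\Omega$ unless $\epsilon=\log 2$. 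For $\epsilon\leq\log 2$ your thinned construction is correct and cleanly matches $P_\alpha,Q_\alpha$. But the patch you propose for $\epsilon>\log 2$ has two concrete failures. First, adjusting the overlaps of $x_i$ for $i\geq 2$ does nothing for user $1$: after shuffling, the post-processing cannot treat user $1$'s message differently from the blanket messages landing in the same bucket, and without thinning user $1$'s rates stay at $se^\epsilon/\Omega$ and $s/\Omega$ rather than the required $se^\epsilon(e^\epsilon-1)/\Omega$ and $s(e^\epsilon-1)/\Omega$; indeed, if you solve for a common thinning probability $q$ and overlap $k$ that fix both user $1$ and users $i\geq 2$ simultaneously, the unique solution is $k=0$, $q=e^\epsilon-1$, which is infeasible precisely when $\epsilon>\log 2$. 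Second, even ignoring user $1$, the required overlap $|H_0(\mathbf{Y}_{x_i})\cap S_1|=|H_0(\mathbf{Y}_{x_i})\cap S_2|=s(e^\epsilon-2)/(e^\epsilon-1)$ must satisfy $2k\leq s$, which fails once $e^\epsilon>3$; randomizing over hashes fixes only the marginal rate, not this capacity constraint.

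The difficulty you ran into is diagnostic of an inconsistency in the paper rather than a flaw in your core idea. The unthinned counts are distributed exactly as $P_{s/\Omega}$ and $Q_{s/\Omega}$, and $s/\Omega$ is almost certainly the intended amplification parameter: the factor $(e^\epsilon-1)$ in the $\beta$ of Lemma \ref{lemma:mixture} is incompatible with the appendix's own construction (there $\mathcal{Q}_1$ is normalized by $(e^\epsilon-1)\beta$, and it sums to one only if $\beta=\sum_H\mathbb{P}[H](s-|H(\mathbf{Y}_{x_1})\cap H(\mathbf{Y}_{x_1'})|)/\Omega$), and with the stated $\alpha$ the quantity $e^\epsilon\alpha$ can exceed $1$ for moderate $\epsilon$, so $P_\alpha$ is not even well defined. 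Reading $\alpha=s/\Omega$, your construction with the thinning step deleted --- which is precisely the paper's proof --- works for all $\epsilon$ and matches the corresponding upper bound. As written against the literal statement, however, your proof covers only $\epsilon\leq\log 2$, and the extension beyond that does not go through.
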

\begin{proof}
Considering hash functions $\mathcal{H}$ and two neighboring datasets $D=\{x_1,x_2=x^*,..., x_n=x^*\}$ and $D'=\{x_1',x_2=x^*,..., x_n=x^*\}$ such that for any $H\in \mathcal{H}$, all three following equations holds (achievable when $t\geq 3s$):
\begin{align*}
H(\mathbf{Y}_{x_1})\bigcap H(\mathbf{Y}_{x_1'})=\Phi,\\
H(\mathbf{Y}_{x_1})\bigcap H(\mathbf{Y}_{x^*})=\Phi,\\
H(\mathbf{Y}_{x_1'})\bigcap H(\mathbf{Y}_{x^*})=\Phi.
\end{align*}

Now consider shuffled messages $\mathcal{S}(\mathcal{R}(x_1),...,\mathcal{R}(x^*))$ and $\mathcal{S}(\mathcal{R}(x_1'),...,\mathcal{R}(x^*))$. We define a post-processing function $g:\mathcal{H}\times\mathcal{Z}\mapsto \mathbb{N}^2$ on each message as follows (for any output $H,z\in \mathcal{H}\times\mathcal{Z}$):
\begin{equation*}
    g(H,z) :=\left\{
    \begin{array}{@{}lr@{}}
        (1, 0), & \text{if } z\in H(\mathbf{Y}_{x_1});\\
        (0, 1), & \text{if } z\in H(\mathbf{Y}_{x_1'});\\
        (0, 0), & \text{else.}
    \end{array}
    \right. 
\end{equation*}
Let us define a function $g_n:(\mathcal{H}\times\mathcal{Z})^n\mapsto \mathbb{N}^2$, which maps a set of $n$ shuffled messages $S$ to the summation of $g(s)$ for all $s\in \mathcal{S}$. It can be observed that $g_n({\mathcal{R}(x_1),\mathcal{R}(x^*),...,\mathcal{R}(x^*)})\stackrel{d}{=} P_{\frac{s(e^\epsilon-1)}{s e^\epsilon+t-s}}$ and $g_n(\{\mathcal{R}(x_1),\mathcal{R}(x^),...,\mathcal{R}(x^*)\})\stackrel{d}{=} Q_{\frac{s(e^\epsilon-1)}{s e^\epsilon+t-s}}$. Here, the notation $\stackrel{d}{=}$ denotes that the two random variables have the same distribution. Finally, we use the data processing inequality of Hockey-stick divergence (or any other distance measure $\mathcal{D}$ satisfying data processing inequality) to arrive at the conclusion.
\end{proof}

We present the amplification lower bound in Theorem \ref{the:lower}. Since the upper bound in Theorem \ref{the:upper} matches the lower bound, we conclude that the privacy amplification results in the former subsection are precisely tight (when $t\geq 3s$).
\section{Optimized Mean Mechanism}\label{sec:meanmechanism}
Previous sections mainly consider frequency estimation over the event domain $\mathcal{Y}=\{1_-,1_+,2_-,2_+,...,d_-,d_+\}$, which acts as intermediate results for both mean estimation and conditional mean estimation of numerical vectors. Specifically, the $j$-th mean value $\overline{\mathbf{x}}_j=\frac{1}{n}\sum_{i=1}^n \mathbf{x}_{i,j}$ equals to $\frac{1}{n}\sum_{i=1}^n (\llbracket j_{+}\in \mathbf{Y}_{\mathbf{x}}\rrbracket -\llbracket j_{-}\in \mathbf{Y}_{\mathbf{x}}\rrbracket )$, the $j$-th non-missing frequency $\underline{\mathbf{x}}_j=\frac{\#\{\mathbf{x}_{i,j}\ |\ \mathbf{x}_{i,j}\ for\ i\in[n]\ and\ \mathbf{x}_{i,j}\neq 0\}}{n}$ equals to $\frac{1}{n}\sum_{i=1}^n (\llbracket j_{+}\in \mathbf{Y}_{\mathbf{x}}\rrbracket +\llbracket j_{-}\in \mathbf{Y}_{\mathbf{x}}\rrbracket )$. According to the variance bounds of the plus/minus of two random variables, we have: 
\small{$$Var\big[\llbracket\widehat{j_{+}\in \mathbf{Y}_{\mathbf{x}}}\rrbracket  \pm [\widehat{j_{-}\in \mathbf{Y}_{\mathbf{x}}}\rrbracket\big]\leq 2\cdot Var[\widehat{j_{+}\in \mathbf{Y}_{\mathbf{x}}}]+2\cdot Var[\widehat{j_{-}\in \mathbf{Y}_{\mathbf{x}}}].$$}\normalsize Consequently, both $Var[\overline{\mathbf{x}}]$ and $Var[\underline{\mathbf{x}}]$ are not greater than $2\cdot\mathbb{E}\big[\sum_{j\in[d],\ b\in\{-1,1\}}|\widehat{\llbracket j_b\in \mathbf{Y}_{\mathbf{x}}\rrbracket }-\llbracket j_b\in \mathbf{Y}_{\mathbf{x}}\rrbracket |^2\big] = O(\frac{ds}{n\epsilon^2}).$

In many scenarios (e.g., federated gradient averaging), statisticians pay more attention to the mean value $\overline{\mathbf{x}}$. In this section, we analyze the pitfalls of the Collision mechanism for mean estimation and propose the correlated Collision mechanism (termed as CoCo), which obeys the negative correlation between $\llbracket j_+ \in \mathbf{Y}_{\mathbf{x}}\rrbracket $ and $\llbracket j_- \in \mathbf{Y}_{\mathbf{x}}\rrbracket $ so as to reduce estimation error.

\subsection{True/False/Opposite Collision Rate}
Recall that in the Collision mechanism, when $j_b \in \mathbf{Y}_{\mathbf{x}}$ or $j_{b} \notin \mathbf{Y}_{\mathbf{x}}$ holds, we have $\mathrm{P}[ H(j_b)=z] = \frac{e^\epsilon}{\Omega}$ and $\mathrm{P}[H(j_b)=z] = \frac{1}{t}$ respectively. We denote such conditional collision probabilities over the outputting domain as true/false/opposite collision rate (for $j\in [d]$ and $b\in \{+,-\}$):
\begin{equation*}
\begin{aligned}
&& P_{t}& := \mathbb{P}[H(j_b)=z\ \ |\ \ j_b \in \mathbf{Y}_{\mathbf{x}}], &\\
&& P_{f}& := \mathbb{P}[ H(j_b)=z\ \ |\ \ j_b \notin \mathbf{Y}_{\mathbf{x}}\ and\ j_{-b} \notin \mathbf{Y}_{\mathbf{x}}], &\\
&& P_{o}& := \mathbb{P}[ H(j_b)=z\ \ |\ \  j_{-b} \in \mathbf{Y}_{\mathbf{x}}].&\\
\end{aligned}
\end{equation*}

The variance of the mean estimator can be expressed as $Var\big[\widehat{\llbracket j_+\in \mathbf{Y}_{\mathbf{x}}\rrbracket }-\widehat{\llbracket j_-\in \mathbf{Y}_{\mathbf{x}}\rrbracket }\big]=\frac{Var[\llbracket H(j_+)=z\rrbracket -\llbracket H(j_-)=z\rrbracket ]}{(P_t-P_o)^2}$, which mainly depends on the discrepancy between the true/opposite collision rate. Meanwhile, in the Collision mechanism, we have $P_o\equiv P_f$ and $\frac{P_t}{P_o} < e^{\epsilon}$.

\subsection{Mechanism Design}
To maximize the discrepancy between the true/opposite collision rate and thus reduce the variance of the mean estimator, the CoCo mechanism aims to achieve $\frac{P_t}{P_o}\geq \frac{P_t}{P_f}$.
To accomplish this goal, we enforce stronger negative correlation between $\llbracket H(j_+)=z\rrbracket $ and $\llbracket H(j_-)=z\rrbracket $. 

Assuming the size of the outputting domain $t$ is even, we use two hash functions: $H_1: [d] \mapsto [\frac{t}{2}]$ and $H_2: \mathcal{Y} \mapsto \{-1,+1\}$. For any $j_b \in \mathbf{Y}_{\mathbf{x}}$, the overall hash function $H: \mathcal{Y} \mapsto [t]$ on $j_b$ is defined as:
$$H(j_b) := H_1(j)+\frac{b\cdot H_2(j_+)+1}{2}\cdot\frac{t}{2}.$$
Then, we assign the entry $H(j_b)$ in the output domain with a high relative probability $e^{\epsilon}$ and the entry $2\cdot H_1(j)+\frac{t}{2}-H(j_b)$ with a low relative probability $1$. The overall procedure of CoCo for a single user is summarized in Algorithm \ref{alg:coco}. Here, the sub-procedure $RandomPermute$ uniformly randomizes the order of elements in the given list or set, while the sub-procedure $Sum$ calculates the summation of weights in the provided list.

When $s> 1$, the $H_1(j)$ may conflict with each other for non-zero entries $\{j\ |\ \mathbf{x}_{j} \neq 0\}$. For every $k, k+\frac{t}{2}$ bucket pair ($k\in [\frac{t}{2}]$), we simply overwrite relative probabilities when there are conflicts (at line $8$-$11$ in Algorithm \ref{alg:coco}). To ensure that true/false/opposite collision rate is the same for every $j\in [d]$, the order of non-zero entries in $\mathbf{x}$ is firstly randomly permuted (line $5$  in Algorithm \ref{alg:coco}). To ensure that the normalization factor $\Omega=s\cdot (\epsilon+1)+(t-2\cdot s)$ is consistent for all possible inputs and hash functions, as in the Collision mechanism, the extra probability related to conflicted entries is uniformly redistributed to the remaining unassigned bucket pairs (at line $14$-$20$). The final output $z$ is then sampled according to relative probabilities of each outputting entry.

\begin{algorithm}[t]
\renewcommand\baselinestretch{1.0}\selectfont
    \caption{CoCo Randomizer}
    \label{alg:coco}
    \begin{algorithmic}[1]
        \Require A numerical data $\mathbf{x}\in \{-1,0,1\}^d$ with $s$ non-zero entries, privacy budget $\epsilon$, outputting domain size $t \in \mathbb{Z}^+$ that $t \geq 2s+2$ and $t\ mod\ 2= 0$, hash functions $\mathcal{H}_1$ and $\mathcal{H}_2$.
        \Ensure A private view $z \in [t]$ that satisfies $\epsilon$-LDP.
        \State{\color{gray} $\rhd$ Initialization}
        \State{select hash function $H_1: [d]\mapsto [\frac{t}{2}]$ from  $\mathcal{H}_1$} \edit{uniformly at random}
        \State{select hash function $H_2: \mathcal{Y} \mapsto \{-1,1\}$ from  $\mathcal{H}_2$} \edit{uniformly at random}
        \State{$W = \{0\}^t$}
        \State{$\mathbf{Y}'_\mathbf{x}=RandomPermute(\mathbf{Y}_\mathbf{x})$}
        \State{\color{gray} $\rhd$ Assign relative weights}
		\For{$j_b \in \mathbf{Y}'_\mathbf{x}$}
			\State{$H(j_b)= H_1(j)+\frac{b\cdot H_2(j_+)+1}{2}\cdot\frac{t}{2}$}
			\State{$W_{H(j_b)}=e^\epsilon$}
			\State{$H'(j_b)=2\cdot H_1(j)+\frac{t}{2}-H(j_b)$}
			\State{$W_{H'(j_b)}=1$}
		\EndFor
		\State{$\Omega=(e^\epsilon+1)\cdot s+t-2\cdot s$}
		\State{$w = \frac{\Omega-Sum(W)}{t-2\cdot Sum(W)/(e^\epsilon+1)}$}
		\For{$k \in [t/2]$}
		    \If{$W_k = 0$ and $W_{k+t/2} = 0$}
		        \State{$W_k = w$}

                \State{$W_{k+t/2} = w$}
		    \EndIf
		\EndFor
        \State{\color{gray} $\rhd$ Sampling with relative weights}
        \State{sampling one element $z \in [t]$ with probability $\mathbb{P}[z=k]=\frac{W_k}{\Omega}$}
        \\\Return{$(H_1, H_2, z)$}
    \end{algorithmic}
\end{algorithm}

\begin{algorithm}
    \renewcommand\baselinestretch{1.0}\selectfont
    \caption{CoCo Estimator}
    \label{alg:cocoestimator}
    \begin{algorithmic}[1]
        \Require A private view \edit{$(H_1, H_2, z)$} of unknown numerical data $\mathbf{x}_i$.
        \Ensure Estimators of $\llbracket j_+ \in \mathbf{Y}_{\mathbf{x}_i}\rrbracket +\llbracket j_- \in \mathbf{Y}_{\mathbf{x}_i}\rrbracket $ and $\llbracket j_+ \in \mathbf{Y}_{\mathbf{x}_i}\rrbracket -\llbracket j_- \in \mathbf{Y}_{\mathbf{x}_i}\rrbracket $.
        \For{$j \in [d]$}
            \For{$b \in \{-1,+1\}$}
                \State{$H(j_b)= H_1(j)+\frac{b\cdot H_2(j_+)+1}{2}\cdot\frac{t}{2}$}
            \EndFor
            \State{\color{gray} $\rhd$ Estimator of $\llbracket j_+ \in \mathbf{Y}_\mathbf{x}\rrbracket +\llbracket j_- \in \mathbf{Y}_\mathbf{x}\rrbracket $}
            \State{$\widehat{\underline{\mathbf{x}}}_{i,j}=\frac{\llbracket H((j_+)=z\rrbracket +\llbracket H((j_-)=z\rrbracket -2\cdot P_f}{P_t+P_o-2\cdot P_f}$}
            \State{\color{gray} $\rhd$ Estimator of $\llbracket j_+ \in \mathbf{Y}_\mathbf{x}\rrbracket -\llbracket j_- \in \mathbf{Y}_\mathbf{x}\rrbracket $}
            \State{$\widehat{\overline{\mathbf{x}}}_{i,j}=\frac{\llbracket H(j_+)=z\rrbracket -\llbracket H(j_-)=z\rrbracket }{P_t-P_o}$}
        \EndFor
        \\\Return{$\{\widehat{\underline{\mathbf{x}}}_{i,j}, \widehat{\overline{\mathbf{x}}}_{i,j}\}_{j\in [d]}$}
    \end{algorithmic}
\end{algorithm}

For better illustration, we depict an example of applying the $(d=10,s=3,\epsilon=\log 2,t=8)$-CoCo mechanism on numerical data $\mathbf{x}=[0,0,1,0,-1,0,0,0,-1,0]$ in Figure \ref{fig:coco}. It shows a case when overwrite/conflict happens for hash functions $(H_1,H_2)$.

\begin{figure}
\vspace*{-1.0em}
\begin{center}
\centerline{\includegraphics[width=73mm]{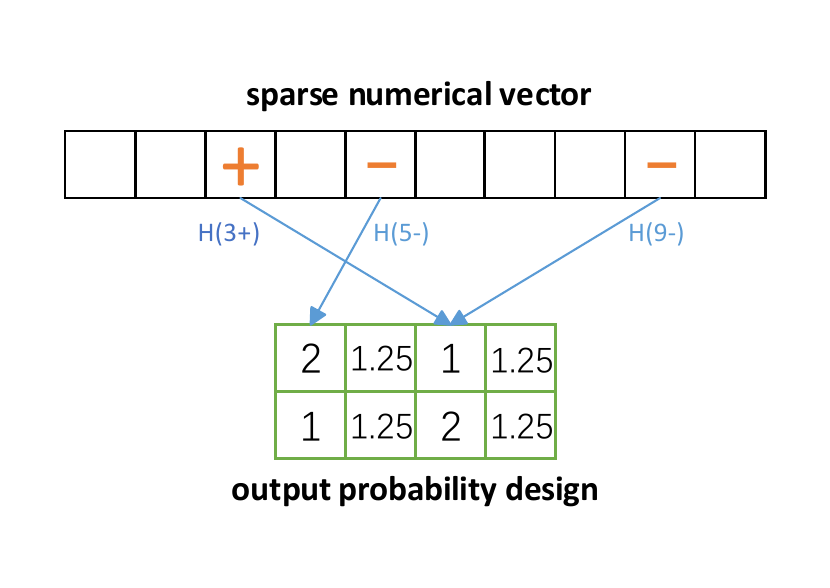}}
\vspace*{-2.0em}
\caption{An illustration of the CoCo mechanism with hash conflicts/overwrite, where $d=10$, $s=3$, $t=8$ and $\epsilon=\log(2)$.}
\label{fig:coco}
\end{center}
\vspace*{-0.0em}
\end{figure}

We will now proceed to examine the behavior of CoCo in terms of true, false, and opposite collision rates. Let $P_{ow}$ denote the probability that a non-zero entry $j$ is overwritten on the outputting domain by other entries. The true collision rate is then:
$$P_{t} = P_{ow}\cdot \frac{e^\epsilon+1}{2\cdot\Omega}+(1-P_{ow})\cdot \frac{e^\epsilon}{\Omega},$$
the false collision rate is:
$$P_{f} = \frac{1}{t},$$
and the opposite collision rate is:
$$P_{o} = P_{ow}\cdot \frac{e^\epsilon+1}{2\cdot\Omega}+(1-P_{ow})\cdot \frac{1}{\Omega}.$$

The formula of the $P_{ow}$ is close-formed. Separately considering the permuted order of a non-zero entry $j$, since there are exactly $s$ entries in $\mathbf{Y}_\mathbf{x}$, the probability that the entry $j$ ranks $k$ among $s$ entries is $\frac{1}{s}$ (for $k\in [s]$). When $j$ is the $k$-th entry, there are remaining $s-k$ entries that have not been hashed, thus the conflict/overwrite probability is $1-(\frac{t-2}{t})^{s-k}$. Therefore, we have:
\begin{equation}
P_{ow}=1-\frac{1}{s}\sum_{k=1}^s(\frac{t-2}{t})^{s-k}=1-\frac{t^{s}-(t-2)^{s}}{2 t^{s-1}\cdot s}.
\end{equation}
When $t\geq 2s+2$ and $\epsilon>0$, $P_{o}$ is always less than $P_{f}$, and thus provides opportunity for more accurate mean estimation. As a comparison, the original Collision mechanism has $P_f\equiv P_o$.

\textbf{Mean Estimator.\ \ }
We now proceed to derive an unbiased estimator of the $j$-th mean value $\frac{1}{n}\sum_{i\in [n]} \mathbf{x}_{i,j}$, which equals to $\frac{1}{n}\sum_{i\in [n]} \llbracket j_+\in \mathbf{Y}_\mathbf{x_i}\rrbracket -\llbracket j_-\in \mathbf{Y}_\mathbf{x_i}\rrbracket $. Observe that when some non-zero entry $j'_{b'}$ ($j'\neq j$) overwrites bucket pair $(H_1(j),H_1(j)+\frac{t}{2})$, since the hash function $H_2$ is uniform pseudo-randomly, we have $\mathbb{E}\big[\llbracket H(j_+)=z\rrbracket -\llbracket H(j_-)=z\rrbracket\big]=0$. Otherwise, when no overwrite happens to $(H_1(j),H_1(j)+\frac{t}{2})$, we have $\mathbb{E}\big[\llbracket H(j_+)=z\rrbracket -\llbracket H(j_-)=z\rrbracket \big]=\frac{\llbracket j_+\in \mathbf{Y}_\mathbf{x}\rrbracket -\llbracket j_-\in \mathbf{Y}_\mathbf{x}\rrbracket }{e^\epsilon /\Omega -1/\Omega}$. Combining two results together, we have $\mathbb{E}\big[\llbracket H(j_+)=z\rrbracket -\llbracket H(j_-)=z\rrbracket \big]=P_{ow}\cdot 0 +(1-P_{ow})\cdot \frac{\llbracket j_+\in \mathbf{Y}_\mathbf{x}\rrbracket -\llbracket j_-\in \mathbf{Y}_\mathbf{x}\rrbracket }{e^\epsilon /\Omega -1/\Omega}.$
Therefore, we arrived an unbiased estimator of the $j$-th mean value as (the $H^i$ is hash function used by user $i$):
$$\frac{1}{n}\sum_{i\in [n]}\frac{\llbracket H^i(j_+)=z^i\rrbracket -\llbracket H^i(j_-)=z^i\rrbracket }{P_{t}-P_{o}}.$$

\textbf{Non-missing Frequency Estimator.\ \ } In the key-value data aggregation, statisticians are also interested in the non-missing frequency of each key: $\underline{\mathbf{x}}_j= \frac{1}{n}\#\{\mathbf{x}_{i,j}\ |\ \mathbf{x}_{i,j}\ for\ i\in[n]\ and\ \mathbf{x}_{i,j}\neq 0\}.$
When $j$ is a non-missing entry in $\mathbf{x}$, since $H(j+)\neq H(j_-)$, it is obvious that $\mathbb{P}[z=H(j_+)\ or\ z=H(j_-)]=P_t+P_o=\frac{e^\epsilon+1}{2\cdot \Omega}$; When $j$ is a missing entry in $\mathbf{x}$, we have $\mathbb{P}[z=H(j_+)\ or\ z=H(j_-)] =2\cdot P_f$. Consequently, according to the transition matrix of the CoCo mechanism, we get:
$$\mathbb{E}\Big[\frac{\llbracket z=H(j_+)\rrbracket +\llbracket z=H(j_-)\rrbracket -2\cdot P_f}{\frac{e^\epsilon+1}{2\cdot \Omega}-2\cdot P_f}\Big]=\llbracket\mathbf{x}_{i,j}\neq 0\rrbracket .$$
An unbiased estimator of the $j$-th non-missing frequency $\underline{\mathbf{x}}_j$ is thus:
$$\frac{1}{n}\sum_{i\in [n]}\frac{\llbracket H^i(j_+)=z^i\rrbracket +\llbracket H^i(j_-)=z^i\rrbracket -2\cdot P_f}{P_{t}+P_{o}-2\cdot P_f}.$$
We summarize these estimators in Algorithm \ref{alg:cocoestimator}, which relies on the transition probability matrix in Table \ref{tab:transition} concerning various events on the outputs given conditions in the inputs.

We now analyze the complexities of the proposed CoCo mechanism. On the user side, the computational cost is $O(s)$, and the communication cost is $O(\log t)=O(\epsilon+\log s)$. On the server side, the naïve approach in Algorithm \ref{alg:cocoestimator} that derives estimators for each $(H_1,H_2,z)$ has a computational cost of $O(n\cdot d)$, and a memory cost of $O(\log t)$. Alternatively, one can first record frequencies of every $(H_1,H_2,z) \in \mathcal{H}_1\times \mathcal{H}_2 \times [t]$, and then summarize $\llbracket H(j_b)=z\rrbracket $ with the frequency weight. Assuming the domain size of $\mathcal{H}_1\times \mathcal{H}_2$ is constant, this approach has a computational cost of $n+t\cdot d=O(n+d s e^\epsilon)$ and a memory cost of $O(s e^\epsilon)$.

\begin{table}
\renewcommand{\arraystretch}{1.3}
\caption{Conditional probabilities about the input \& output for $j\in [d]$ and $b\in \{-1,+1\}$. The probability takes into account the randomness of selecting hash function, uniform pseudo-randomness of the hash functions, and the randomness of sampling $z$.}
\label{tab:transition}
\centering
\begin{tabular}{|c||c|c|}
\hline
 & $\llbracket H(j_b)=z\rrbracket $ & $\llbracket H(j_{-b})=z\rrbracket $ \\
\hline
\hline
$j_b \in \mathbf{Y}_\mathbf{x}$ & $P_t$ & $P_o$ \\
\hline
$j_{-b} \in \mathbf{Y}_\mathbf{x}$ & $P_o$  & $P_t$ \\
\hline
$j_b \notin \mathbf{Y}_\mathbf{x}\ and\ j_{-b} \notin \mathbf{Y}_\mathbf{x} $ & $P_f$ & $P_f$ \\
\hline
\end{tabular}
\end{table}

\subsection{Theoretical Analyses}
In this part, we provide privacy and accuracy guarantees of the CoCo mechanism. The $\epsilon$-LDP guarantee of the mechanism is given in Proposition \ref{pro:coldp}.

\begin{proposition}\label{pro:coldp}
The $(d,s,\epsilon,t)$-CoCo mechanism in Algorithm \ref{alg:coco} satisfies $\epsilon$-LDP for numerical vector data.
\end{proposition}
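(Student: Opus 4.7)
The plan is to show $\epsilon$-LDP by bounding the pointwise likelihood ratio of the output tuple $(H_1, H_2, z)$ for any two inputs $\mathbf{x}, \mathbf{x}' \in \mathcal{X}^s$. Since $H_1$ and $H_2$ are drawn uniformly at random independently of the input, the marginal probabilities of $(H_1, H_2)$ are identical under $\mathbf{x}$ and $\mathbf{x}'$, so the problem reduces to bounding the ratio $\mathbb{P}[z \mid \mathbf{x}, H_1, H_2] / \mathbb{P}[z \mid \mathbf{x}', H_1, H_2]$ pointwise by $e^\epsilon$. Because the sampling step outputs $z$ with probability $W_z / \Omega$, I would split this into (i) confirming $\Omega$ is input-independent, and (ii) showing every $W_z$ lies in $[1, e^\epsilon]$.

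For (i), I would examine the main loop (lines 7--12). For every non-zero entry $j_b$, a quick inspection of the definitions of $H(j_b)$ and $H'(j_b)$ yields $\{H(j_b), H'(j_b)\} = \{H_1(j), H_1(j)+t/2\}$, so the loop only modifies the two halves of the bucket pair indexed by $H_1(j)$ and always leaves them carrying weights $(1, e^\epsilon)$ in some order, even when overwriting happens. Letting $k := |\{H_1(j) : j \in \mathrm{supp}(\mathbf{x})\}| \leq s$ be the number of distinct hash values, exactly $k$ bucket pairs are then occupied and each sums to $e^\epsilon + 1$, so $\mathrm{Sum}(W) = k(e^\epsilon+1)$. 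The filler weight $w$ in line 14 is chosen precisely so that the final total weight equals $\Omega = s(e^\epsilon+1) + t - 2s$, independent of $\mathbf{x}$ and $(H_1, H_2)$.

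For (ii), the weights assigned in the main loop are already in $\{1, e^\epsilon\}$, so only $w$ needs checking. Rewriting,
\[
w \;=\; 1 + \frac{(s-k)(e^\epsilon - 1)}{t - 2k},
\]
so $w \geq 1$ is immediate from $k \leq s$ and $\epsilon \geq 0$. A one-line algebraic manipulation reduces $w \leq e^\epsilon$ to $s + k \leq t$, which holds because $k \leq s$ and the precondition $t \geq 2s + 2$ gives $s + k \leq 2s < t$. Combining the two steps,
\[
\frac{\mathbb{P}[z \mid \mathbf{x}, H_1, H_2]}{\mathbb{P}[z \mid \mathbf{x}', H_1, H_2]} \;=\; \frac{W_z^{\mathbf{x}}/\Omega}{W_z^{\mathbf{x}'}/\Omega} \;\leq\; \frac{e^\epsilon}{1} \;=\; e^\epsilon,
\]
and marginalizing over $(H_1, H_2)$ yields $\epsilon$-LDP.

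The main subtlety---and the step most prone to error---is the bookkeeping of the overwrite behaviour in the main loop: one must confirm that regardless of the random permutation in line 5 and despite possibly many $H_1$-collisions among the $s$ non-zero entries, the final weight profile depends on $\mathbf{x}$ and the hashes only through the integer $k$, which is then absorbed into the input-independent constant $\Omega$ by the filler weight $w$. Once this invariant is in place, the remaining verifications are elementary algebra driven by the precondition $t \geq 2s + 2$.
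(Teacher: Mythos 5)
Your proof is correct and follows essentially the same route as the paper's: condition on the (input-independent) hash functions, verify that the normalization factor $\Omega$ does not depend on the input, and check that every relative weight---including the filler weight $w$---lies in $[1,e^\epsilon]$ so that the pointwise likelihood ratio is at most $e^\epsilon$. Your treatment is somewhat more explicit than the paper's (which merely asserts $1/\Omega \le w/\Omega \le (e^\epsilon+1)/(2\Omega)$ from $t\ge 2s+2$), in particular your bookkeeping via the number $k$ of distinct $H_1$-values and the identity $w = 1 + (s-k)(e^\epsilon-1)/(t-2k)$, but the underlying argument is the same.
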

\begin{proof}
First, the normalization factor $\Omega$ in the CoCo mechanism is the same for any input $\mathbf{x}$ and any hash functions $H_1\in \mathcal{H}_1, H_2\in \mathcal{H}_2$. Second, due to the identicalness of selecting hash functions (i.e., follow the same distribution), we only need to consider the private view $(H_1,H_2,z)$ given fixed $H_1,H_2$. Third, given $H_1$ and $H_2$, the relative probabilities of every outputting entry range from $1.0$ to $e^\epsilon$. Since $t\geq 2s+2$ implies the $w$ at line $14$ is lower than $(e^\epsilon+1)/(2\Omega)$ but never lower than $1/\Omega$, then for any $a\in [t]$ and any inputs $\mathbf{x},\mathbf{x'}\in \mathcal{X}^s$, we have $\frac{\mathbb{P}[z=a | \mathbf{x},H_1,H_2]}{\mathbb{P}[z=a | \mathbf{x'},H_1,H_2]}\leq \frac{e^\epsilon/\Omega}{1.0/\Omega}\leq e^\epsilon$.
\end{proof}

\subsubsection{Mean Squared Error}
With the outputting domain size parameter $t$ fixed in the CoCo mechanism, its estimation errors of various estimators (see Algorithm \ref{alg:cocoestimator}) are presented in Lemma \ref{lemma:cocomse}.

\begin{lemma}\label{lemma:cocomse}
For the $(d,s,\epsilon,t)$-CoCo mechanism, the mean squared errors of estimators are:
\small
\begin{equation}\label{eq:nmfmse}
\begin{aligned}
\sum_{j=1}^d |\widehat{\underline{\mathbf{x}}}_{j}-{\underline{\mathbf{x}}}_{j}|^2 =\frac{s (P_t+P_o)(1-P_t-P_o)+(d-s)2 P_f (1-2 P_f)}{(P_t+P_o-2 P_f)^2},
\end{aligned}
\end{equation}
\begin{equation}\label{eq:meanmse}
\begin{aligned}
\sum_{j=1}^d |\widehat{\overline{\mathbf{x}}}_{j}-{\overline{\mathbf{x}}}_{j}|^2 =\frac{s((P_t+P_o)-(P_t-P_o)^2) +(d-s)(2P_f)}{(P_t-P_o)^2}.
\end{aligned}
\end{equation}
\normalsize
\end{lemma}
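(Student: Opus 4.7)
The plan is to reduce both formulas to per-dimension variance computations, exploit the deterministic disjointness of $H(j_+)$ and $H(j_-)$ that is built into the CoCo construction, and then sum over the $s$ occupied dimensions and the $d-s$ empty dimensions separately.

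First, both estimators in Algorithm \ref{alg:cocoestimator} are unbiased (this has already been derived in the text using the transition matrix in Table \ref{tab:transition}), so the squared-error $|\widehat{\mathbf{x}}_j-\mathbf{x}_j|^2$ equals $\mathrm{Var}(\widehat{\mathbf{x}}_j)$ in expectation. I would introduce the shorthand $A_j := \llbracket H(j_+)=z\rrbracket$ and $B_j := \llbracket H(j_-)=z\rrbracket$. For each dimension $j$, the joint law of $(A_j,B_j)$ depends only on whether one of $j_+, j_-$ lies in $\mathbf{Y}_{\mathbf{x}}$ (the non-zero case, exactly $s$ dimensions) or neither does (the zero case, exactly $d-s$ dimensions); the marginal success probabilities are precisely $P_t$, $P_o$, and $P_f$ read off from Table \ref{tab:transition}.

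The key structural observation is that by line $8$ of Algorithm \ref{alg:coco}, one has $H(j_+) = H_1(j) + \tfrac{H_2(j_+)+1}{2}\cdot\tfrac{t}{2}$ and $H(j_-) = H_1(j) + \tfrac{-H_2(j_+)+1}{2}\cdot\tfrac{t}{2}$, so these two hashed positions always differ by exactly $t/2$. Hence $A_j B_j = 0$ deterministically, which forces the pleasant identity $(A_j+B_j)^2 = (A_j-B_j)^2 = A_j+B_j$. With this in hand, for the mean estimator $\widehat{\overline{\mathbf{x}}}_j=(A_j-B_j)/(P_t-P_o)$ I compute $\mathrm{Var}(A_j-B_j) = \mathbb{E}[A_j+B_j]-(\mathbb{E}[A_j]-\mathbb{E}[B_j])^2$, which in the non-zero case equals $(P_t+P_o)-(P_t-P_o)^2$, and in the zero case equals $2P_f-0=2P_f$. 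Summing $s$ and $d-s$ copies respectively and dividing by $(P_t-P_o)^2$ produces (\ref{eq:meanmse}). For the non-missing frequency estimator $\widehat{\underline{\mathbf{x}}}_j=(A_j+B_j-2P_f)/(P_t+P_o-2P_f)$, the variable $A_j+B_j$ is a single Bernoulli with parameter $P_t+P_o$ in the non-zero case and $2P_f$ in the zero case, so its variance is $(P_t+P_o)(1-P_t-P_o)$ or $2P_f(1-2P_f)$; dividing by $(P_t+P_o-2P_f)^2$ and summing yields (\ref{eq:nmfmse}).

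The one place where some care is required, and which I expect to be the main obstacle, is justifying the per-dimension marginals, i.e., checking that $P_t$, $P_o$, and $P_f$ as defined at the start of Section \ref{sec:meanmechanism} (which already average over the overwrite event $P_{ow}$, the random choice of $H_1, H_2$, and the sampling of $z$) really are the entries of Table \ref{tab:transition} on a per-dimension basis; this uses the uniformity of $H_2$ to argue that the sign pattern on overwritten bucket pairs is symmetric. Once these marginals are in place, the variance manipulations are mechanical, and because we sum squared errors rather than computing the variance of a sum, we never need joint distributions across different $j$.
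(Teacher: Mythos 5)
Your proposal is correct and follows essentially the same route as the paper's proof: a per-dimension case split into the $s$ non-zero and $d-s$ zero coordinates, identification of $\llbracket H(j_+)=z\rrbracket+\llbracket H(j_-)=z\rrbracket$ as a Bernoulli variable (equivalently, of the difference as a three-valued variable), and summation of the resulting variances. Your explicit observation that $H(j_+)$ and $H(j_-)$ always differ by $t/2$, so the two indicators cannot both be one, is exactly the fact the paper uses implicitly when it treats the sum as a single Bernoulli with parameter $P_t+P_o$ (or $2P_f$).
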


\begin{proof}
See Appendix \ref{app:cocomseformula}
\end{proof}

Based on the error formulation, we further choose parameter $t$ in Theorem \ref{the:cocomse} (see Appendix \ref{proof:cocomse} for proof). Consequently, the mean squared errors are approximately minimized and reach the optimal $O(\frac{d s }{\epsilon^2})$ bound.

\begin{theorem}[Mean Squared Error Bounds]\label{the:cocomse}
When $\epsilon=O(1)$, takes as an input \edit{$\mathbf{x}$}, the $(d,s,\epsilon,t)$-CoCo mechanism with $t=\lceil e^\epsilon s+5s \rceil$ satisfies
\begin{equation}\label{eq:nmfmsebound}
    \begin{aligned}
    && &\sum\nolimits_{j=1}^d |\widehat{\underline{\mathbf{x}}}_{j}-{\underline{\mathbf{x}}}_{j}|^2 \leq O\big(\frac{d s }{\epsilon^2}\big);&&
    \end{aligned}
\end{equation}
the $(d,m,\epsilon,t)$-CoCo mechanism with $t=\lceil e^\epsilon s+s+2\rceil$ satisfies 
\begin{equation}\label{eq:meanmsebound}
    \begin{aligned}
    && &\sum\nolimits_{j=1}^d |\widehat{\overline{\mathbf{x}}}_{j}-{\overline{\mathbf{x}}}_{j}|^2 \leq O\big(\frac{d s}{\epsilon^2}\big).&&
    \end{aligned}
\end{equation}
\end{theorem}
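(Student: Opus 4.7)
The plan is to start from the exact MSE formulas in Lemma \ref{lemma:cocomse} and show that, under the prescribed choices of $t$, every ratio involved is of the right order. Since both bounds have the same target $O(ds/\epsilon^2)$, I will treat them in parallel and only separate when the specific choice of $t$ matters.

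First I would compute the basic quantities. Because $\Omega = s(e^{\epsilon}+1) + (t-2s)$ and the chosen $t$ is $(e^{\epsilon}+c)s + O(1)$ for a constant $c \in \{1,5\}$, we have $\Omega = \Theta(s)$ and $t = \Theta(s)$ whenever $\epsilon = O(1)$, so $P_f = 1/t = \Theta(1/s)$. Using the closed form for $P_{ow}$, I would write
\[
1 - P_{ow} \;=\; \frac{t^{s}-(t-2)^{s}}{2 t^{s-1} s} \;=\; \frac{1 - (1-2/t)^{s}}{2s/t},
\]
and observe that for $t = \Theta(s)$ the ratio $2s/t$ is bounded between two positive constants, so $(1-2/t)^{s}$ tends to a constant in $(0,1)$. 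This is the key quantitative step: it shows $1 - P_{ow}$ is bounded below by a constant, which in turn forces $P_t - P_o = (1-P_{ow})(e^{\epsilon}-1)/\Omega = \Theta((e^{\epsilon}-1)/s)$.

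Next I would treat the two estimators. For the mean estimator with $t = \lceil e^{\epsilon}s + s + 2 \rceil$, the denominator of (\ref{eq:meanmse}) is $(P_t - P_o)^2 = \Theta((e^{\epsilon}-1)^2/s^{2})$, which under $\epsilon = O(1)$ is $\Theta(\epsilon^{2}/s^{2})$. For the numerator, $P_t + P_o = (e^{\epsilon}+1)/\Omega = \Theta(1/s)$, so $s(P_t + P_o) = \Theta(1)$ and $(d-s)\cdot 2 P_f = \Theta(d/s)$; the $-(P_t-P_o)^{2}$ term is negligible. Dividing yields $O(ds/\epsilon^{2})$. For the non-missing frequency estimator with $t = \lceil e^{\epsilon}s+5s \rceil$, I would use the identity $P_t + P_o = (e^{\epsilon}+1)/\Omega$ (independent of $P_{ow}$) to derive
\[
P_t + P_o - 2 P_f \;=\; \frac{(t-2s)(e^{\epsilon}-1)}{t \, \Omega}.
\]
With $t = (e^{\epsilon}+5)s + O(1)$ we have $t - 2s = \Theta(s)$, so the squared denominator of (\ref{eq:nmfmse}) is again $\Theta((e^{\epsilon}-1)^{2}/s^{2}) = \Theta(\epsilon^{2}/s^{2})$. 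The numerator is at most $s(P_t + P_o) + 2(d-s)P_f = \Theta(1) + \Theta(d/s) = \Theta(d/s)$, and the ratio is $O(ds/\epsilon^{2})$.

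The main obstacle I anticipate is the overwrite probability: if $t$ were taken much larger than $\Theta(s)$ then $P_{ow} \to 0$ but $\Omega$ blows up, and if $t$ were taken close to $2s$ then $P_{ow} \to 1$ and $(P_t - P_o)$ collapses. Both choices of $t$ in the theorem are designed to keep $t/s$ bounded between small positive constants, so the delicate step is the sharp estimate of $(1-2/t)^{s}$ for $t = \Theta(s)$ — not an asymptotic limit in $s$, but a uniform two-sided bound that holds for every $s \geq 1$. Once that is established, the rest is substitution and the observation $(e^{\epsilon}-1)^{2} = \Theta(\epsilon^{2})$ under $\epsilon = O(1)$.
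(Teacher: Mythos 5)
Your proposal is correct and follows essentially the same route as the paper's Appendix proof: both start from the exact formulas of Lemma \ref{lemma:cocomse}, bound the $s$-term and $(d-s)$-term contributions separately, and rely on the fact that $t\geq 2s+2$ keeps the overwrite probability $P_{ow}$ bounded away from $1$ so that $P_t-P_o=\Theta((e^{\epsilon}-1)/s)$ and $P_t+P_o-2P_f=\Theta((e^{\epsilon}-1)/s)$. Your explicit identity $P_t+P_o-2P_f=(t-2s)(e^{\epsilon}-1)/(t\,\Omega)$ and the uniform two-sided bound on $(1-2/t)^{s}$ make the constant-tracking slightly more transparent than the paper's, but the underlying argument is the same.
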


\begin{figure}
\begin{center}
\centerline{\includegraphics[width=89mm]{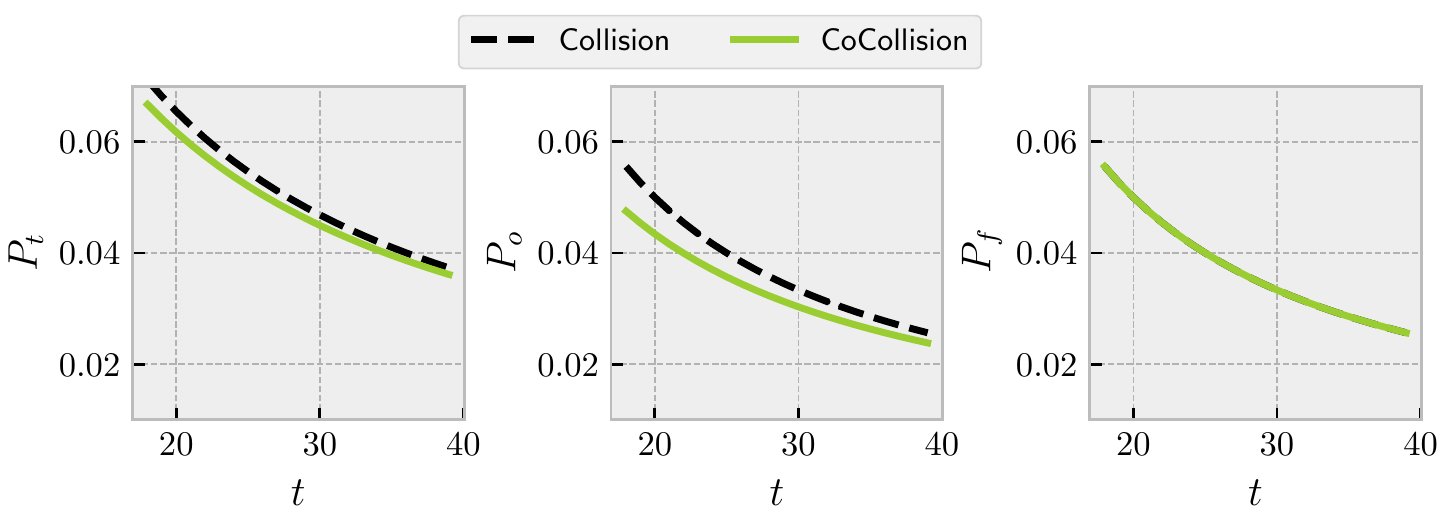}}
\vskip -0.1in
\caption{The $P_t$,$P_n$ and $P_f$ that varies with outputting domain size $t$ when $d=128$ and $s=8$. Compared to the original Collision mechanism, the opposite collision rate $P_o$ in CoCo is significantly lower and is smaller than the $P_f$.}
\label{fig:varyrates}
\end{center}
\vspace*{-0.5em}
\end{figure}

To illustrate the impact of parameter $t$, we plot the variation of $P_t$/$P_o$/$P_f$ and mean squared errors in Figure \ref{fig:varyrates} and \ref{fig:varyvars}, in comparison to the previously proposed Collision mechanism. According to the variance bounds of the sum and difference of two variables (see the beginning of Section \ref{sec:meanmechanism}), the Collision mechanism also satisfies the same error bound. By designing the CoCo mechanism to have $P_o < P_f$, the constant factor in its error is reduced.

\begin{figure}
\begin{center}
\centerline{\includegraphics[width=89mm]{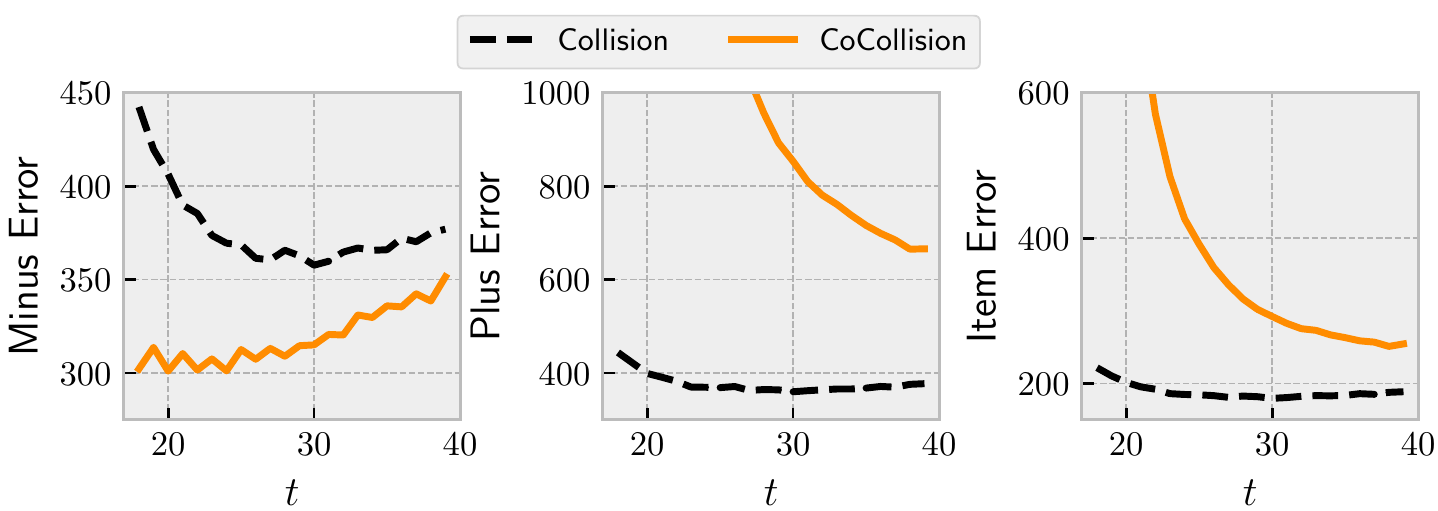}}
\vskip -0.1in
\caption{The estimation errors that varies with outputting domain size $t$ with $d=128$, $s=8$, $n=1$, and $\epsilon=0.5$. The \textit{Minus Error} denotes $\sum_{j=1}^d |\widehat{\overline{\mathbf{x}}}_{j}-{\overline{\mathbf{x}}}_{j}|^2$; the \textit{Plus Error} denote $\sum_{j=1}^d |\widehat{\underline{\mathbf{x}}}_{j}-{\underline{\mathbf{x}}}_{j}|^2$; the \textit{Item Error} denote $\sum_{j_b \in \mathcal{Y}}| \llbracket\widehat{j_b\in \mathbf{Y}_\mathbf{x}}\rrbracket -\llbracket j_b\in \mathbf{Y}_\mathbf{x}\rrbracket |^2$. All results are the average values of $10,000$ independent experiments. The CoCo has about $20\%$ lower MSE errors on the mean estimator.} 
\label{fig:varyvars}
\end{center}
\vspace*{-0.5em}
\end{figure}

\subsubsection{Maximum Absolute Error}\label{subsec:maebound}
In this section, we derive the expected maximum absolute error of the proposed CoCo mechanism for mean estimation and demonstrate that it is rate-optimal.
Based on the (discrete) probability distributions of the observed variable $\llbracket H(j_+)=z\rrbracket -\llbracket H(j_-)=z\rrbracket$, we present the maximum absolute error bounds of the CoCo mechanism in Theorem \ref{the:meanbound} (see Appendix \ref{proof:cocomae} for proof). This implies that the error is bounded by ${O}(\frac{1}{\epsilon}\sqrt{\frac{s \log d}{n}})$.

\begin{theorem}[Maximum Absolute Error of Mean Estimation in CoCo]\label{the:meanbound}
With privacy budget $\epsilon=O(1)$, for mean value estimation on $n$ users, the error due to Algorithm \ref{alg:coco} and \ref{alg:cocoestimator} is bounded by
$$\max\nolimits_{j=1}^{d}{|\widehat{\overline{\mathbf{x}}}_j-\overline{\mathbf{x}}_j|} \leq O\Big(\sqrt{\frac{s\log(d/\beta)}{\epsilon^2 n}}\Big)$$
with probability $1-\beta$ over the randomness of the user-specific hash functions and the randomization in Algorithm \ref{alg:coco}.
\end{theorem}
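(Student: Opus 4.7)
The plan is to fix a coordinate $j\in[d]$, write $\widehat{\overline{\mathbf{x}}}_j-\overline{\mathbf{x}}_j$ as a normalized sum of $n$ independent, zero-mean random variables (one per user), establish sharp bounds on their variance and range, apply Bernstein's inequality for each $j$, and finish with a union bound over the $d$ coordinates.

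More concretely, for user $i$ let
$$Y_{i,j}\;:=\;\frac{\llbracket H^i(j_+)=z^i\rrbracket-\llbracket H^i(j_-)=z^i\rrbracket}{P_t-P_o},$$
so that by the unbiasedness already established for the CoCo mean estimator, $\mathbb{E}[Y_{i,j}]=\mathbf{x}_{i,j}$, and therefore $\widehat{\overline{\mathbf{x}}}_j-\overline{\mathbf{x}}_j=\frac{1}{n}\sum_{i=1}^{n}(Y_{i,j}-\mathbf{x}_{i,j})$ is a sum of independent centered terms (independence across $i$ follows because each user draws her own hash pair and sampling noise independently). The next step is to quantify two scales. For the choice $t=\lceil e^{\epsilon}s+s+2\rceil$ and $\epsilon=O(1)$, we have $\Omega=\Theta(s)$ and a direct calculation (already implicit in the derivations of $P_t,P_o,P_f$) gives $P_t-P_o=\Theta(\epsilon/s)$, $P_t+P_o=\Theta(1/s)$. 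Hence the numerator is an $\{-1,0,1\}$-valued variable, so $|Y_{i,j}-\mathbf{x}_{i,j}|\leq b:=O(s/\epsilon)$, while its second moment is at most $(P_t+P_o)/(P_t-P_o)^2=O(s/\epsilon^2)$, giving a variance bound $\sigma^2\leq O(s/\epsilon^2)$.

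With these two quantities in hand, Bernstein's inequality applied to $\frac{1}{n}\sum_{i}(Y_{i,j}-\mathbf{x}_{i,j})$ yields, for any $u>0$,
$$\Pr\!\left[\,|\widehat{\overline{\mathbf{x}}}_j-\overline{\mathbf{x}}_j|\geq u\,\right]\leq 2\exp\!\left(-\frac{n u^{2}/2}{\sigma^{2}+b u/3}\right).$$
Setting the right-hand side equal to $\beta/d$ and solving for $u$ produces the two-term bound
$$u\;=\;O\!\Big(\sqrt{\tfrac{s\log(d/\beta)}{n\epsilon^{2}}}\Big)\;+\;O\!\Big(\tfrac{s\log(d/\beta)}{n\epsilon}\Big),$$
and the first (sub-Gaussian) term dominates whenever $n\epsilon^{2}\gtrsim s\log(d/\beta)$, which is the regime where the theorem's bound is non-trivial. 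A union bound over $j\in[d]$ then converts the per-coordinate tail into the stated uniform bound on $\max_{j}|\widehat{\overline{\mathbf{x}}}_j-\overline{\mathbf{x}}_j|$ with probability $1-\beta$.

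The main obstacle is the variance accounting for $Y_{i,j}-\mathbf{x}_{i,j}$, because the randomness couples the choice of $(H_1^{i},H_2^{i})$, the possibility of hash collisions/overwrites on the pair $(H_1^{i}(j),H_1^{i}(j)+t/2)$, and the final sampling of $z^{i}$. The cleanest way around it is to use the transition matrix already tabulated (the $P_t,P_o,P_f$ table), from which $\mathrm{Var}[\llbracket H^i(j_+)=z^i\rrbracket\pm\llbracket H^i(j_-)=z^i\rrbracket]\leq P_t+P_o+2P_f$ follows uniformly in $\mathbf{x}_i$; plugging in the asymptotics $P_t,P_o,P_f=\Theta(1/s)$ for $t=\Theta(s)$ then gives the required $\sigma^{2}=O(s/\epsilon^{2})$. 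Beyond that, the argument is a textbook Bernstein--plus--union-bound, and no finer concentration tools (e.g.\ matrix/self-bounding inequalities) are needed.
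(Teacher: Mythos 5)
Your proposal is correct and follows essentially the same route as the paper's own proof: express $\widehat{\overline{\mathbf{x}}}_j-\overline{\mathbf{x}}_j$ as an average of $n$ independent zero-mean variables bounded by $O(s/\epsilon)$ with variance $O(s/\epsilon^2)$ (read off from the $P_t,P_o,P_f$ transition table), apply Bernstein's inequality per coordinate, and take a union bound with tail $\beta/d$. The only cosmetic difference is that you state the two-term Bernstein bound and the regime $n\epsilon^2\gtrsim s\log(d/\beta)$ where the sub-Gaussian term dominates explicitly, whereas the paper encodes the same restriction as a side condition on the deviation parameter $a$.
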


Recently, for mean estimation of $s$-sparse numerical vectors, \cite{zhou2022locally} analyzed lower bounds on the maximum absolute error $\max_{j=1}^d |\widehat{\overline{\mathbf{x}}}_{j}-{\overline{\mathbf{x}}}_{j}|^2$ under $\epsilon$-LDP. We restate the minimax lower bound $O(\frac{1}{\epsilon}\sqrt{\frac{s \log{d/s}}{n}})$ in Theorem \ref{the:maeminimax}, which follows definitions in Section \ref{sec:bound}. Combining the upper error bounds in Theorem \ref{the:meanbound}, we can conclude that the CoCo mechanism is minimax optimal (when $s\leq \sqrt{d}$) under the measurement of maximum absolute error. 

\begin{theorem}[Lower Bounds of Mean Estimation \cite{zhou2022locally}]\label{the:maeminimax}
For the numerical vector mean estimation problem, for any $\epsilon$-LDP mechanism, there exists a universal constant $c> 0$ such that for all $\epsilon\in (0,1]$,
$$\mathfrak{M}_n(\underline{\theta}(\mathcal{P}), \|\cdot\|_{\infty}, \epsilon)\geq \min \Big\{c\cdot \frac{1}{\epsilon}\sqrt{\frac{s \log{d/s}}{n}}, 1\Big\}.$$ 
\end{theorem}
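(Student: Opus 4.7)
The plan is to establish this minimax lower bound via the private Fano method: reduce estimation to a multiple-hypothesis-testing problem against a carefully chosen packing of distributions in $\mathcal{P}$, and then control the information content of the sanitized views using the Duchi--Jordan--Wainwright (DJW) contraction inequality for $\epsilon$-LDP channels.

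Concretely, I would take as the packing index set $\mathcal{V}=\binom{[d]}{s}$, the collection of $s$-subsets of $[d]$, so that $\log|\mathcal{V}|=\Theta(s\log(d/s))$. For each $S\in\mathcal{V}$ and a tuning parameter $\tau\in(0,1]$, define the ``signal-plus-background'' mixture $P_S=(1-\tau)\,P_0+\tau\,\delta_{\mathbf{1}_S}$, where $P_0$ is any fixed zero-mean distribution on $\mathcal{X}^s$ (for instance, uniform over $\{-1,+1\}^s\times\{0\}^{d-s}$, supported on a fixed reference $s$-set) and $\delta_{\mathbf{1}_S}$ is the point mass at the $s$-sparse vector with support $S$ and all $+1$ entries. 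By construction $P_S\in\mathcal{P}$, its mean equals $\tau\,\mathbf{1}_S$, hence
\[
\|\underline{\theta}(P_S)-\underline{\theta}(P_{S'})\|_\infty=\tau\quad\text{for all }S\neq S',
\]
while the total-variation distance satisfies $\|P_S-P_{S'}\|_{\mathrm{TV}}\leq \tau$ because the mixtures share the $(1-\tau)$ background mass.

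With this packing, the DJW contraction yields $D_{\mathrm{KL}}(Q(P_S)\|Q(P_{S'}))\leq 4(e^\epsilon-1)^2\|P_S-P_{S'}\|_{\mathrm{TV}}^2=O(\epsilon^2\tau^2)$ for $\epsilon\in(0,1]$ and any $\epsilon$-LDP channel $Q$. Tensorizing over $n$ (possibly sequentially-interactive) samples and feeding the resulting mutual-information bound into Fano's inequality with a uniform prior on $\mathcal{V}$ gives
\[
\inf_{\widehat{S}}\sup_{S\in\mathcal{V}}\Pr[\widehat{S}\neq S]\;\geq\;1-\frac{O(n\epsilon^2\tau^2)+\log 2}{\log\binom{d}{s}}.
\]
A standard estimation-to-testing reduction completes the argument: if $\widehat{\theta}$ satisfies $\|\widehat{\theta}-\underline{\theta}(P_S)\|_\infty<\tau/2$, then thresholding its coordinates at level $\tau/2$ recovers $S$ exactly, so the minimax $\ell_\infty$ error is at least $\tau/2$ whenever the right-hand side above is bounded below by a positive constant. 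Setting $\tau$ equal to a small constant multiple of $\epsilon^{-1}\sqrt{s\log(d/s)/n}$ balances the two terms and yields the stated $\Omega\bigl(\epsilon^{-1}\sqrt{s\log(d/s)/n}\bigr)$ lower bound; the $\min$ with $1$ absorbs the degenerate regime in which this target $\tau$ would exceed $1$.

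The main obstacle is engineering the packing to satisfy three competing demands simultaneously: (a) every $P_S$ must be supported on $\mathcal{X}^s$ so that it lies in $\mathcal{P}$; (b) the $\ell_\infty$-separation between $\underline{\theta}(P_S)$ and $\underline{\theta}(P_{S'})$ must be $\Omega(\tau)$ for \emph{every} distinct pair so that Hamming separation converts cleanly to $\ell_\infty$ separation; and (c) the TV distance must remain $O(\tau)$, which is critical for DJW to deliver a $\tau^2$ (not $\tau$) per-sample KL bound, the source of the $\sqrt{\cdot}$ rate. The signal-plus-background mixture meets all three because both mixture components already lie in $\mathcal{X}^s$, the means differ exactly by $\tau$ on $S\triangle S'$, and the shared background forces TV $\leq \tau$. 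A secondary technical point is the tensorization of KL in the sequentially-adaptive LDP setting, which follows from the standard Duchi--Wainwright information-processing inequality for interactive private protocols.
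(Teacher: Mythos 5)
Your argument is essentially correct, but note that the paper itself offers no proof of this statement: Theorem~\ref{the:maeminimax} is imported verbatim from \cite{zhou2022locally} and merely restated (the surrounding text says ``we restate the minimax lower bound''), so there is no in-paper derivation to compare against. Your private-Fano route --- a $\binom{[d]}{s}$ packing of supports, the signal-plus-background mixtures $P_S=(1-\tau)P_0+\tau\delta_{\mathbf{1}_S}$ giving $\ell_\infty$-separation exactly $\tau$ and total-variation distance at most $\tau$, the Duchi--Jordan--Wainwright contraction to get a per-sample KL of $O((e^\epsilon-1)^2\tau^2)$, Fano with $\log\binom{d}{s}=\Theta(s\log(d/s))$, and recovery of $S$ by thresholding at $\tau/2$ --- is the standard and correct way to obtain this $\ell_\infty$ bound, and it is genuinely different in flavor from the Assouad-based machinery the paper uses for its own $\ell_2^2$ and $\ell_1$ lower bounds (Theorems~\ref{the:vectorminimax} and \ref{the:vectorminimax1}): Assouad rewards coordinatewise Hamming separation and yields additive (squared-$\ell_2$ or $\ell_1$) risks, whereas the sup-norm target needs a Fano packing precisely because the separation does not accumulate over the symmetric difference $S\triangle S'$; your parenthetical appeal to ``Hamming separation'' is therefore a slight misnomer but harmless. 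Two small caveats worth recording: the equivalence $\log\binom{d}{s}\asymp s\log(d/s)$ requires $s\le d/2$ (consistent with the sparse regime the paper cares about, $s\le\sqrt{d}$), and the tensorization of the KL bound in the sequentially interactive setting should be stated via the chain rule applied to the conditional channels $Q_i(\cdot\mid z_{<i})$, each of which is $\epsilon$-LDP, exactly as you indicate.
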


\subsection{Privacy Amplification in the Shuffle Model}
In this section, we consider privacy amplification of the CoCo mechanism in the shuffle model. Since CoCo has a similar probability design as the Collision, let $\mathcal{R}$ denote the $(d,s,\epsilon,t)$-CoCo mechanism (assuming $t>s$), and let $\alpha=\frac{s(e^\epsilon-1)}{s e^\epsilon+t-s}$, then for any neighboring datasets $D,D'$, we also have:
\begin{equation*}
\mathcal{D}(\mathcal{S}\circ\mathcal{R}(D)\| \mathcal{S}\circ\mathcal{R}(D'))\leq \mathcal{D}(P_{\alpha}\|Q_{\alpha}).
\end{equation*}
The equality holds when there are no hash collisions for all user data in $D$ and $D'$ (requires $t\geq 4s$).

\begin{figure}
\begin{center}
\centerline{\includegraphics[width=82mm]{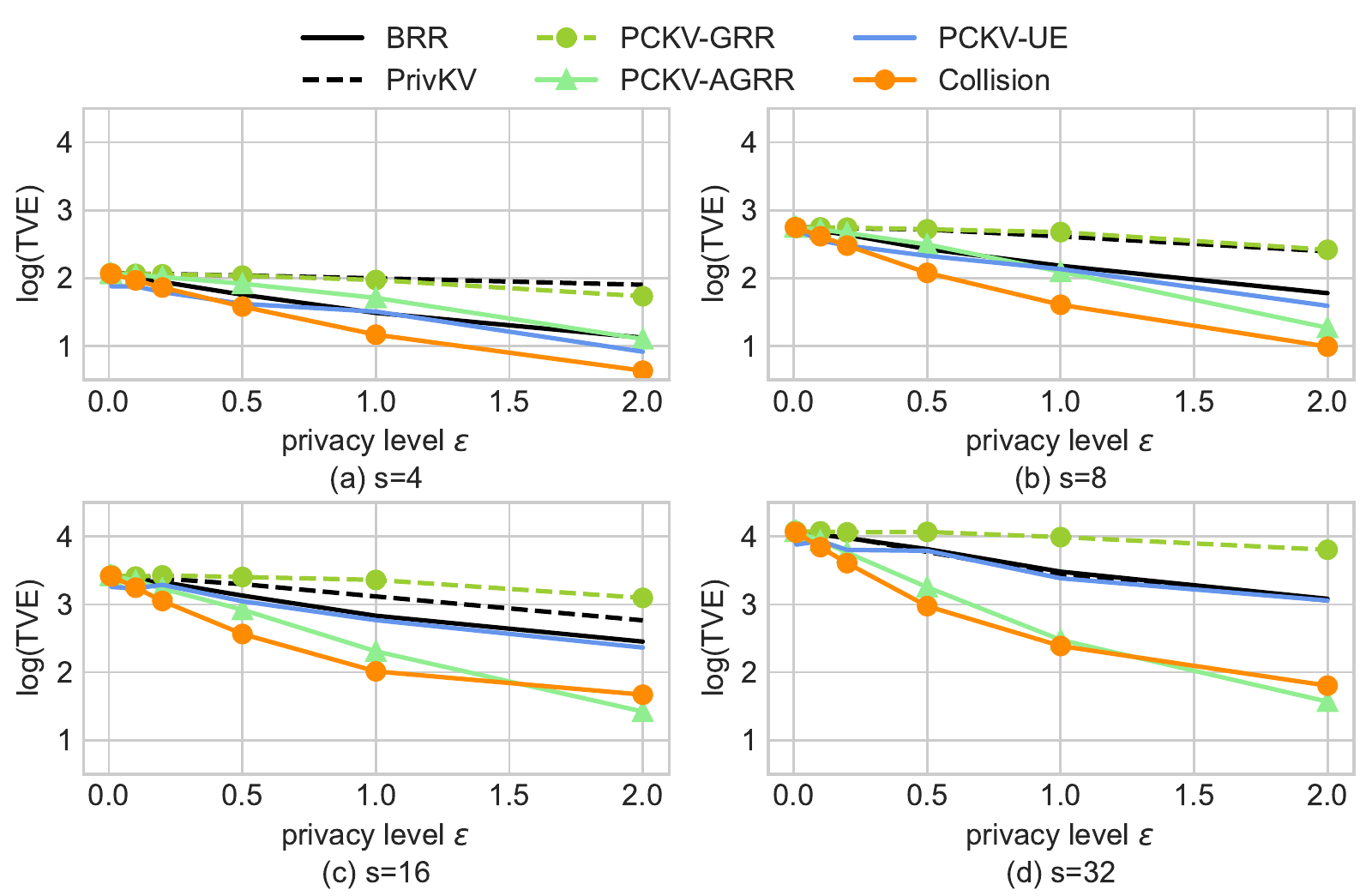}}
\vskip -0.1in
\caption{Frequency estimation TVE results on $n=100,000$ users with dimension $d=256$ when sparsity $s$ ranges from $4$ to $32$.}
\label{fig:tve256n100000}
\end{center}
\vspace*{-1.5em}
\end{figure}

\section{Experiments}\label{sec:experiments}
In this section, we mainly evaluate the statistical efficiency of the proposed Collision/CoCo mechanism for $\epsilon$-LDP numerical vector aggregation. Competing mechanisms include the PCKV mechanism with unary encoding as the base randomizer \cite{gu2019pckv} (denoted as PCKV-UE), the PrivKV mechanism \cite{Ye2019PrivKVKD}, the PCKV mechanism with generalized randomized response as the base randomizer (denoted as PCKV-GRR), its privacy amplified version (denoted as PCKV-AGRR), and the succinct mean estimation protocol \cite{zhou2022locally} (denoted as SUCCINCT). Since the performances of all these mechanisms are data-independent, it is sufficient to utilize synthetic datasets for fair evaluation. The parameters of synthetic datasets are listed as follows (default values are in \textbf{bold} form), covering most cases encountered in real-world applications:
\begin{enumerate}
\item[i.] Number of users $n$: 10,000 and \textbf{100,000}.
\item[ii.] Dimension $d$: 256 and \textbf{512}.
\item[iii.] Sparsity parameter $s$: 4, 8, 16, and 32.
\item[iv.] Privacy budget $\epsilon$: 0.001, 0.01, 0.1, 0.2, 0.4, 0.8, 1.0, 1.5, and 2.0.
\end{enumerate}
Since competing mechanisms are data-independent (i.e., estimation errors are irrelevant of true values), during each simulation, the numerical vector of each user is independently and randomly generated, the non-zero entries are uniformly and randomly selected from $d$ dimensions, and each dimension has an equal probability of being $-1$ or $1$.

\begin{figure}[t]
\vskip 0.0in
\begin{center}
\centerline{\includegraphics[width=82mm]{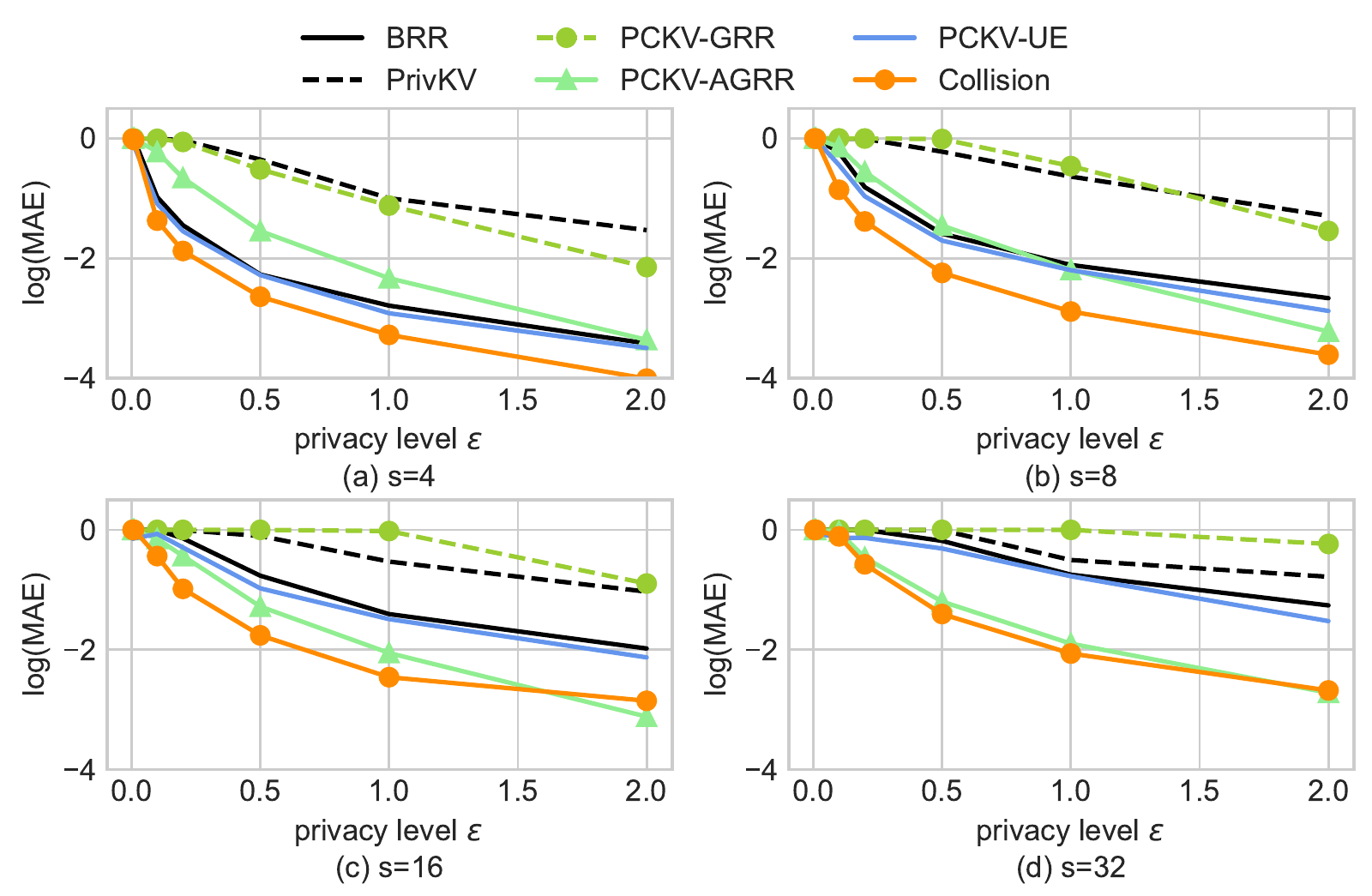}}
\vskip -0.1in
\caption{Frequency estimation MAE results on $n=100,000$ users with dimension $d=256$ when sparsity $s$ ranges from $4$ to $32$.}
\label{fig:mae256n100000}
\end{center}
\vspace*{-1.5em}
\end{figure}

\subsection{Evaluation Metric}
As frequency estimators are basic statistics for both the non-missing frequency estimation and mean estimation, we evaluate mechanisms with metrics TVE and MAE on $\llbracket j_b\in \mathbf{Y}_{\mathbf{X}}\rrbracket$. The total variation error (TVE) of frequency estimation is defined as:
\small
$$\text{TVE}=\sum_{j\in[d],\ b\in\{-1,1\}}\big|\widehat{\llbracket j_b\in \mathbf{Y}_{\mathbf{X}}\rrbracket}-\llbracket j_b\in \mathbf{Y}_{\mathbf{X}}\rrbracket\big|,$$
\normalsize
and the maximum absolute error (MAE) is defined as:
\small
$$\text{MAE}=\max_{j\in[d],\ b\in\{-1,1\}}\big|\widehat{\llbracket j_b\in \mathbf{Y}_{\mathbf{X}}\rrbracket}-\llbracket j_b\in \mathbf{Y}_{\mathbf{X}}\rrbracket\big|.$$
\normalsize
For mean estimation, we use TVE and MAE metrics in the similar way.

Since the $\frac{1}{s}$-scaled frequencies lie in the $2d$-dimensional probability simplex, the estimated frequencies are post-processed by projecting them into the $\Delta_{2d}$-simplex \cite{wang2015projection}. All experimental results are the mean natural logarithm value of 100 repeated simulations.

\subsection{Frequency Estimation}
In this section, we measure the performance of frequency estimation under various settings, such as varying sparsity, dimension, and number of users.
\subsubsection{Effects of sparsity $s$}
Assuming that there are $n=100,000$ users and the dimension is $d=256$. When the number of non-zero entries in numerical vectors varies from $4$ to $32$, the TVE/MAE error results are presented in Figure \ref{fig:tve256n100000} and Figure \ref{fig:mae256n100000}, respectively. The PCKV-UE mechanism improves upon the PrivKV in extremely sparse cases, but for other cases (e.g., $s=32$), the PCKV-UE and PrivKV mechanisms have similar performances. The Collision mechanism outperforms all competing mechanisms in almost all cases significantly, and on average reduces more than $60\%$ errors. When the sparsity parameter and privacy budget are large (e.g., $s\geq 16$, and $\epsilon=2$), the performance gap between PCKV-AGRR and Collision decreases.

\subsubsection{Effects of dimension $d$}
Assuming that there are $n=100,000$ users, but the dimension now increases to $d=512$. When the number of non-zero entries in numerical vectors still varies from $4$ to $32$, the TVE and MAE results are shown in Figure \ref{fig:tve512n100000} and Figure \ref{fig:mae512n100000}, respectively. Compared to cases with $d=256$ (i.e., TVE results in Figure \ref{fig:tve256n100000} and MAE results in Figure \ref{fig:mae256n100000}), it is evident that the TVE/MAE value grows with approximately $\sqrt{d}$.

\begin{figure}[t]
\begin{center}
\centerline{\includegraphics[width=82mm]{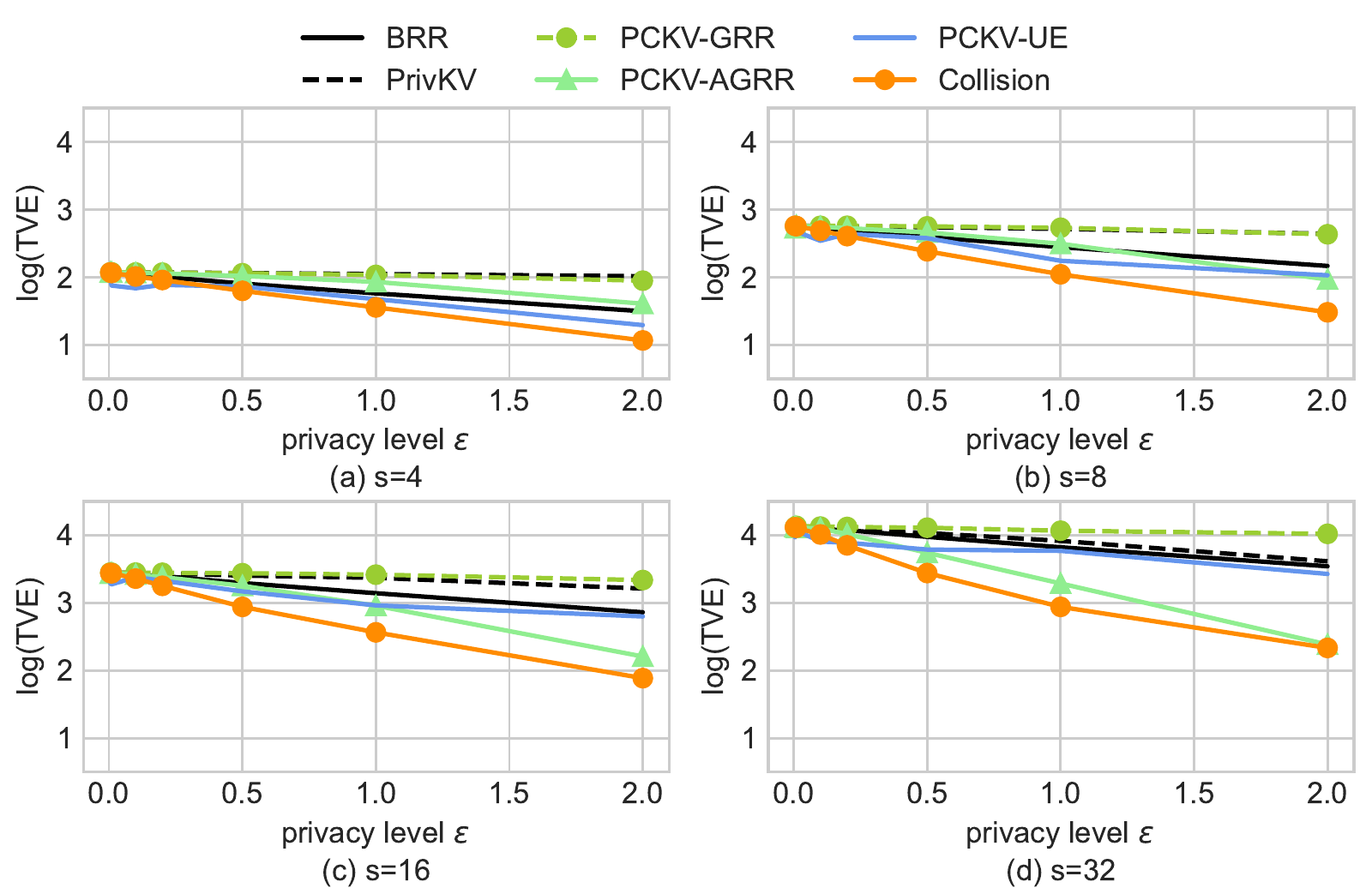}}
\vskip -0.1in
\caption{Performance evaluation of frequency estimation in terms of TVE on $n=100,000$ users with dimension $d=512$, when sparsity $s$ ranges from $4$ to $32$.}
\label{fig:tve512n100000}
\end{center}
\vspace*{-1.5em}
\end{figure}

\begin{figure}[t]
\vskip 0.0in
\begin{center}
\centerline{\includegraphics[width=82mm]{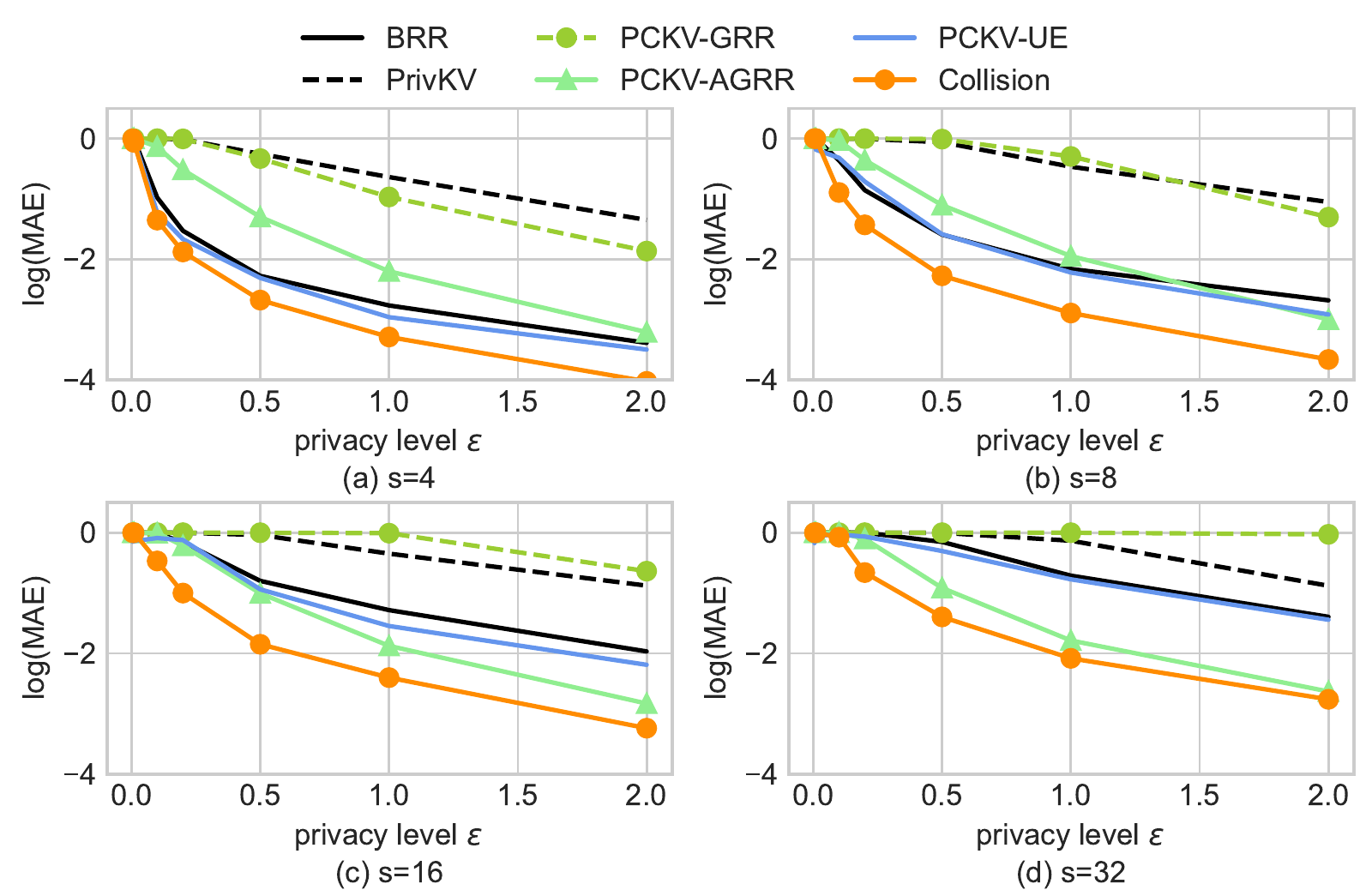}}
\vskip -0.1in
\caption{Performance evaluation of frequency estimation in terms of MAE on $n=100,000$ users with dimension $d=512$, when sparsity $s$ ranges from $4$ to $32$.}
\label{fig:mae512n100000}
\end{center}
\vspace*{-1.5em}
\end{figure}

\subsubsection{Effects of dimension $d$}
Assuming that there are $n=100,000$ users, but the dimension now increases to $d=512$. When the number of non-zero entries in numerical vectors still varies from $4$ to $32$, the TVE and MAE results are shown in Figure \ref{fig:tve512n100000} and Figure \ref{fig:mae512n100000}, respectively. Compared to cases with $d=256$ (i.e., TVE results in Figure \ref{fig:tve256n100000} and MAE results in Figure \ref{fig:mae256n100000}), it is evident that the TVE/MAE value grows with approximately $\sqrt{d}$.

\subsubsection{Effects of Number of Users $n$}
Assuming that there are only $n=10,000$ users and the dimension is $d=256$. When the number of non-zero entries in numerical vectors varies from $4$ to $32$, the TVE and MAE results are listed in Figure \ref{fig:tve256n10000} and Figure \ref{fig:mae256n10000}, respectively. Compared to the case with $n=100,000$ (i.e., Figure \ref{fig:tve256n100000} and Figure \ref{fig:mae256n100000}), the TVE/MAE value is about $\sqrt{100000/10000}$ times larger (i.e., decreases with approximately $\sqrt{n}$).

\begin{figure}[t]
\begin{center}
\centerline{\includegraphics[width=82mm]{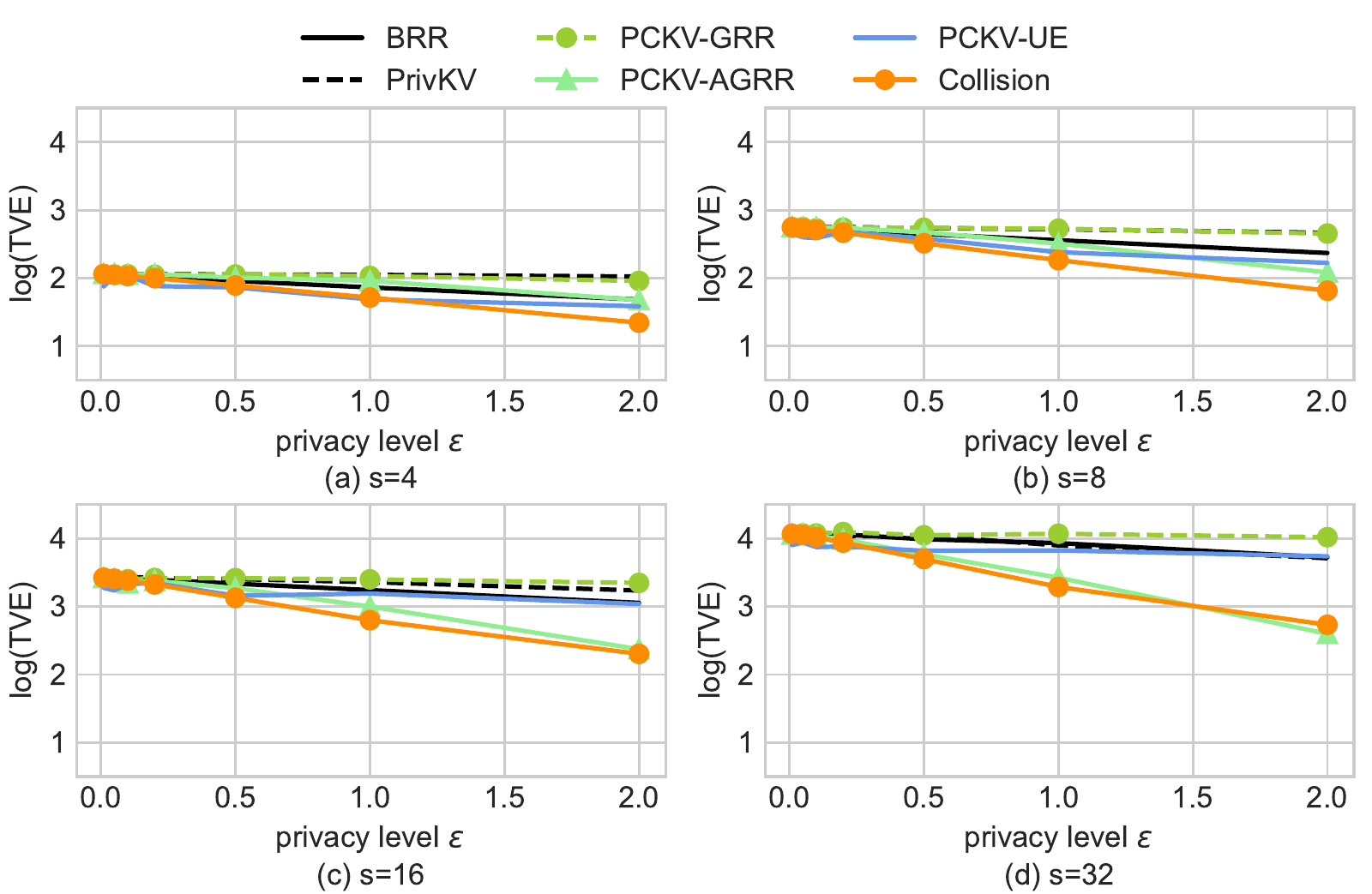}}
\vskip -0.1in
\caption{Frequency estimation TVE results on $n=10,000$ users with dimension $d=256$ when sparsity $s$ ranges from $4$ to $32$.}
\label{fig:tve256n10000}
\end{center}
\vspace*{-1.5em}
\end{figure}

\begin{figure}[t]
\begin{center}
\centerline{\includegraphics[width=82mm]{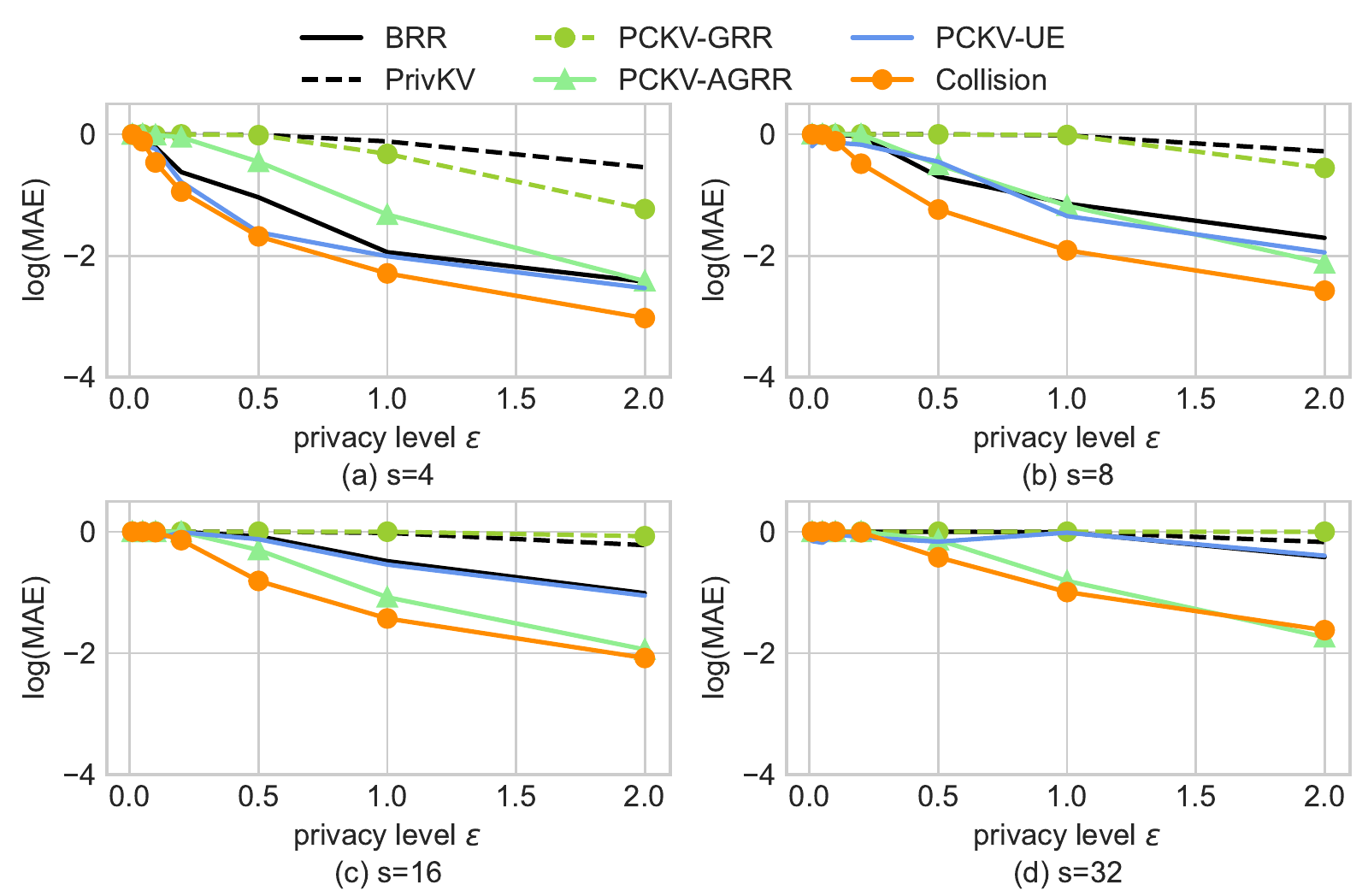}}
\vskip -0.1in
\caption{Frequency estimation MAE results on $n=10,000$ users with dimension $d=256$ when sparsity $s$ ranges from $4$ to $32$.}
\label{fig:mae256n10000}
\end{center}
\vspace*{-1.5em}
\end{figure}
\subsubsection{After shuffling}
In the shuffle model, given a global privacy goal $(\epsilon_c, \delta)$, the local privacy budget approximately scales with $\tilde{O}(\epsilon_c\sqrt{n/\log(1/\delta)})$. It is observed that the Collision mechanism outperforms existing approaches across all privacy regions. By combining the theoretical results that provide precisely tight privacy accounting for the Collision (see Theorem \ref{the:upper} and Figure \ref{fig:shuffle}), its performance in the shuffle model is assured.

\subsection{Mean Estimation}\label{subsec:meanexperiments}
This section presents an evaluation of the performances of competing mechanisms for mean estimation.

\subsection{Effects of post-processing}
Except the SUCCINCT \cite{zhou2022locally} that is designed only for mean estimation, other mechanisms' frequency estimators can be post-processed to the $\Delta_{2d}$-simplex, and then be utilized for deriving the corresponding mean estimators. For fair comparison, we present experimental results both without and with post-processing in Figure \ref{fig:mae256n100000meanraw} and \ref{fig:mae256n100000mean} respectively. It is observed that the Collision mechanism outperforms existing approaches by about $30\%$ in almost all settings, and the CoCo mechanism further reduce more than $15\%$ error compared to the Collision. This confirms our theoretical analyses on the CoCo, which forces the opposite collision rate $P_o$ to be smaller than the false collision rate $P_f$ by design. Even without post-processing, the SUCCINCT mechanism is less competitive when privacy budget $\epsilon$ or sparsity $s$ is relatively large, as its estimation error due to clip bias grows.

\begin{figure}[t]
\begin{center}
\centerline{\includegraphics[width=82mm]{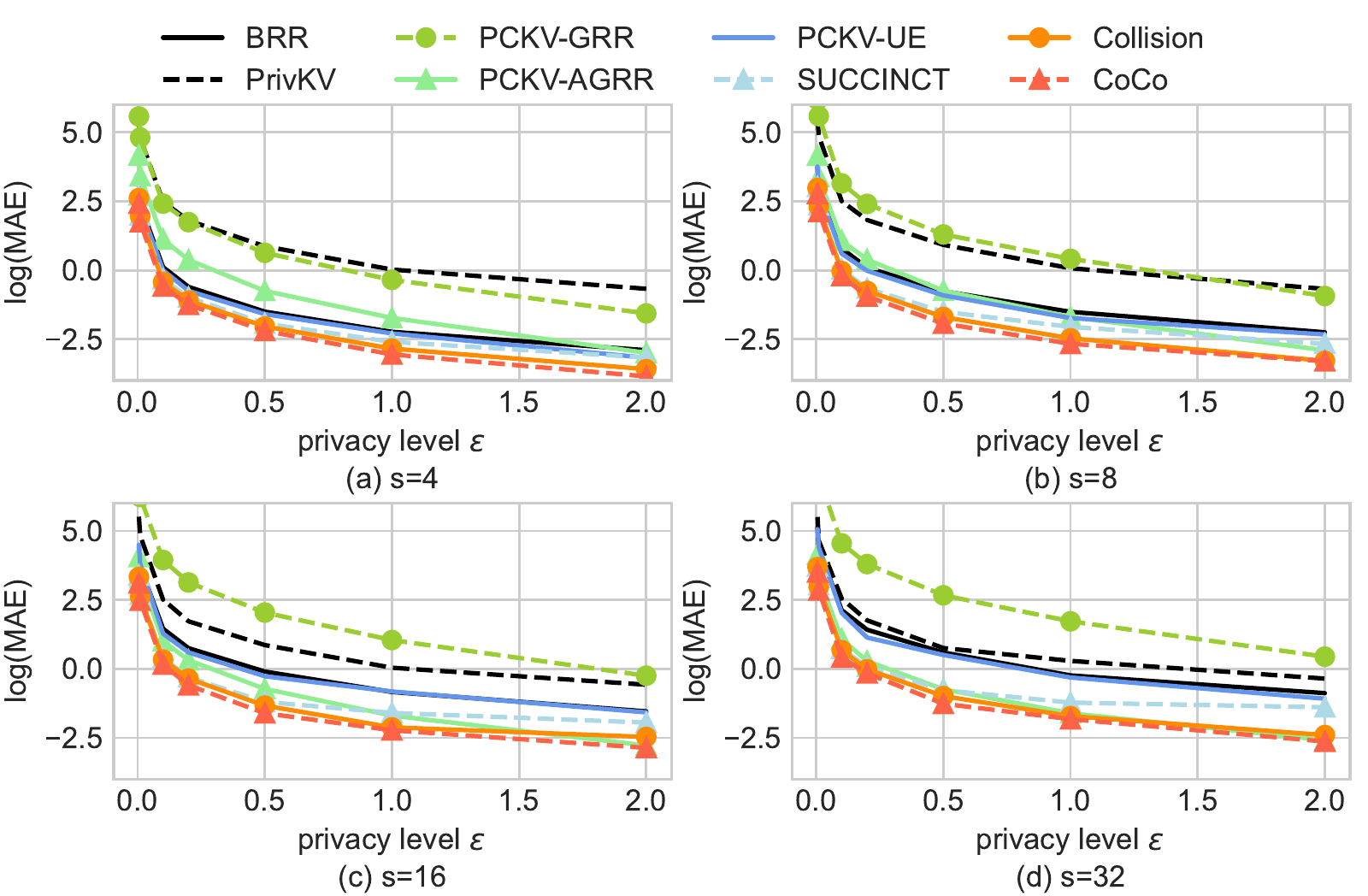}}
\vskip -0.1in
\caption{Mean estimation MAE results without post-processing on $n=100,000$ users with dimension $d=256$ when sparsity $s$ ranges from $4$ to $32$.}
\label{fig:mae256n100000meanraw}
\end{center}
\vspace*{-1.5em}
\end{figure}

\begin{figure}[t]
\begin{center}
\centerline{\includegraphics[width=82mm]{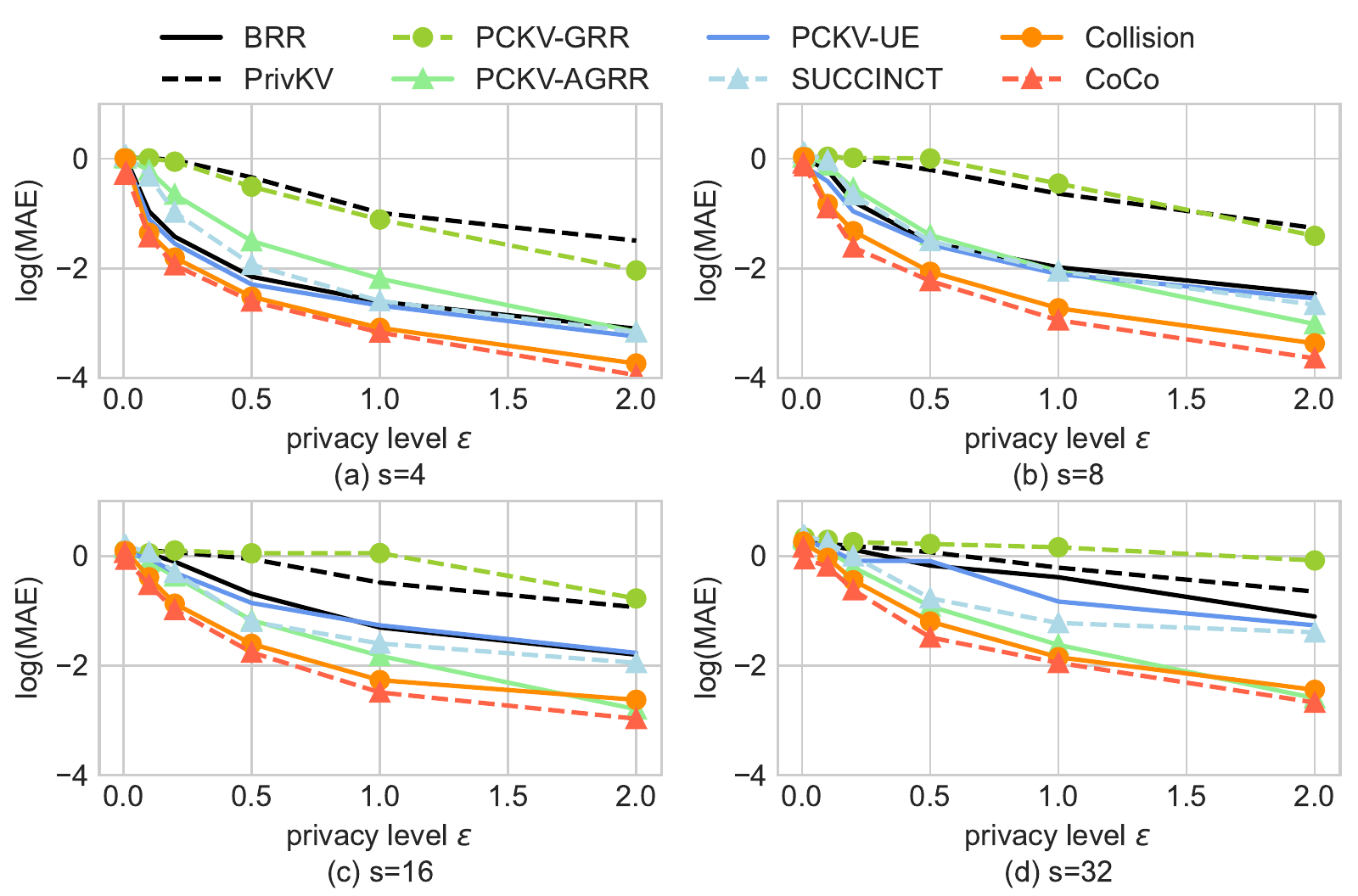}}
\vskip -0.1in
\caption{Mean estimation MAE results with post-processing on $n=100,000$ users with dimension $d=256$ when sparsity $s$ ranges from $4$ to $32$.}
\label{fig:mae256n100000mean}
\end{center}
\vspace*{-1.5em}
\end{figure}

\subsubsection{Effects of sparsity $s$}
Simulated with $n=100,000$ users and dimension $d=512$, the number of non-zero entries in numerical vectors varies from $4$ to $32$. The TVE results are listed in Figure \ref{fig:tve512n100000mean}. The Collision/CoCo mechanisms outperform existing approaches by about $30\%$ in almost all settings. When the numerical vector gets denser (i.e., $\frac{d}{s}$ gets smaller) and the privacy budget is large, the performance gap between PCKV-AGRR and Collision/CoCo decreases.

\begin{figure}[t]
\begin{center}
\centerline{\includegraphics[width=82mm]{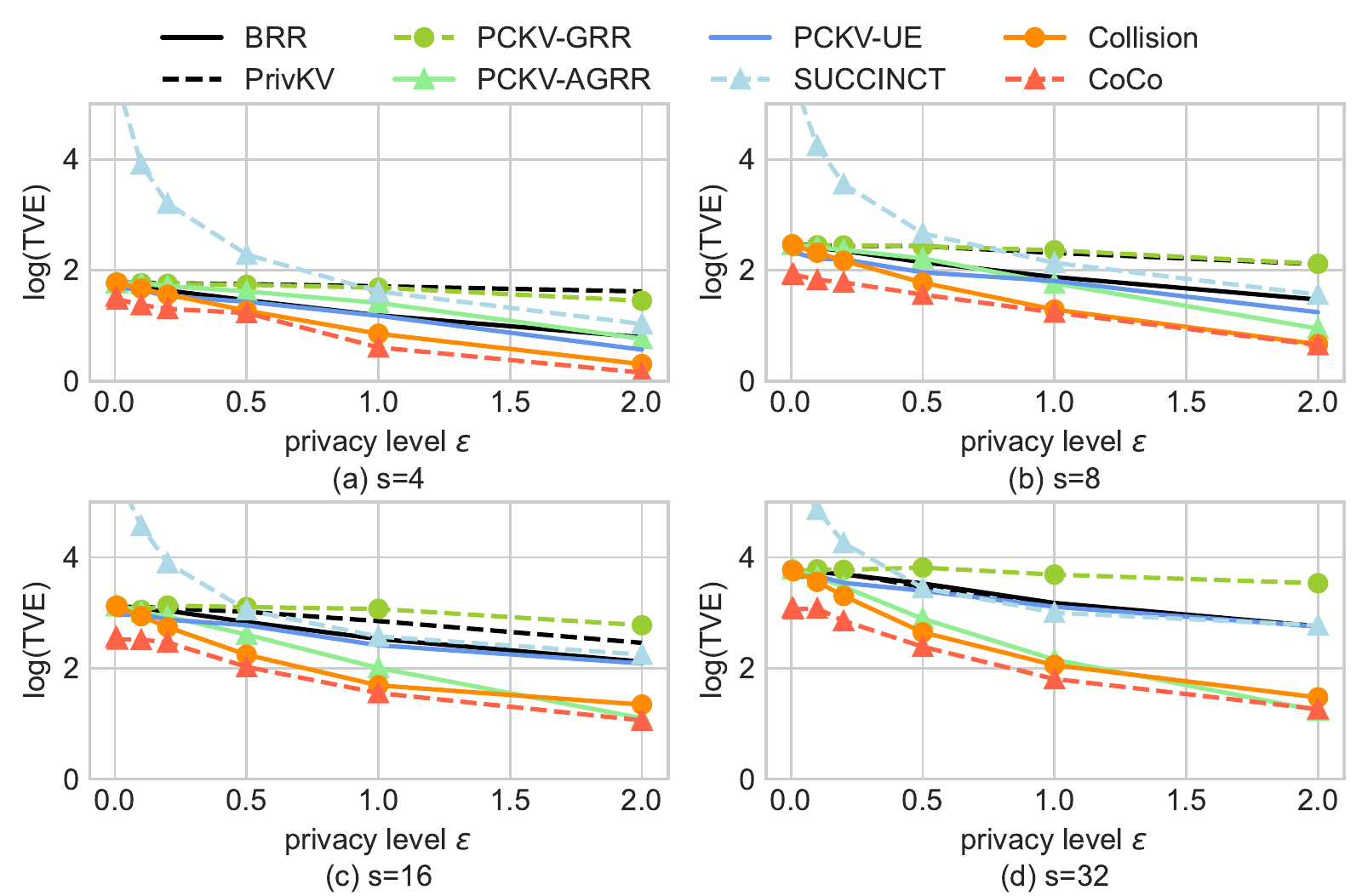}}
\vskip -0.1in
\caption{Mean estimation MAE results with post-processing on $n=100,000$ users with dimension $d=256$ when sparsity $s$ ranges from $4$ to $32$.}
\label{fig:tve256n100000mean}
\end{center}
\vspace*{-1.5em}
\end{figure}

\subsubsection{Effects of dimension $d$}
Simulated with $n=100,000$ users and dimension $d=512$, the results of TVE is shown in Figure \ref{fig:tve512n100000mean}. Compared to cases of $d=256$ (i.e., TVE results in Figure \ref{fig:tve256n100000mean}), it is easy to observe that the TVE grows roughly with $\sqrt{d}$.

\begin{figure}[t]
\begin{center}
\centerline{\includegraphics[width=82mm]{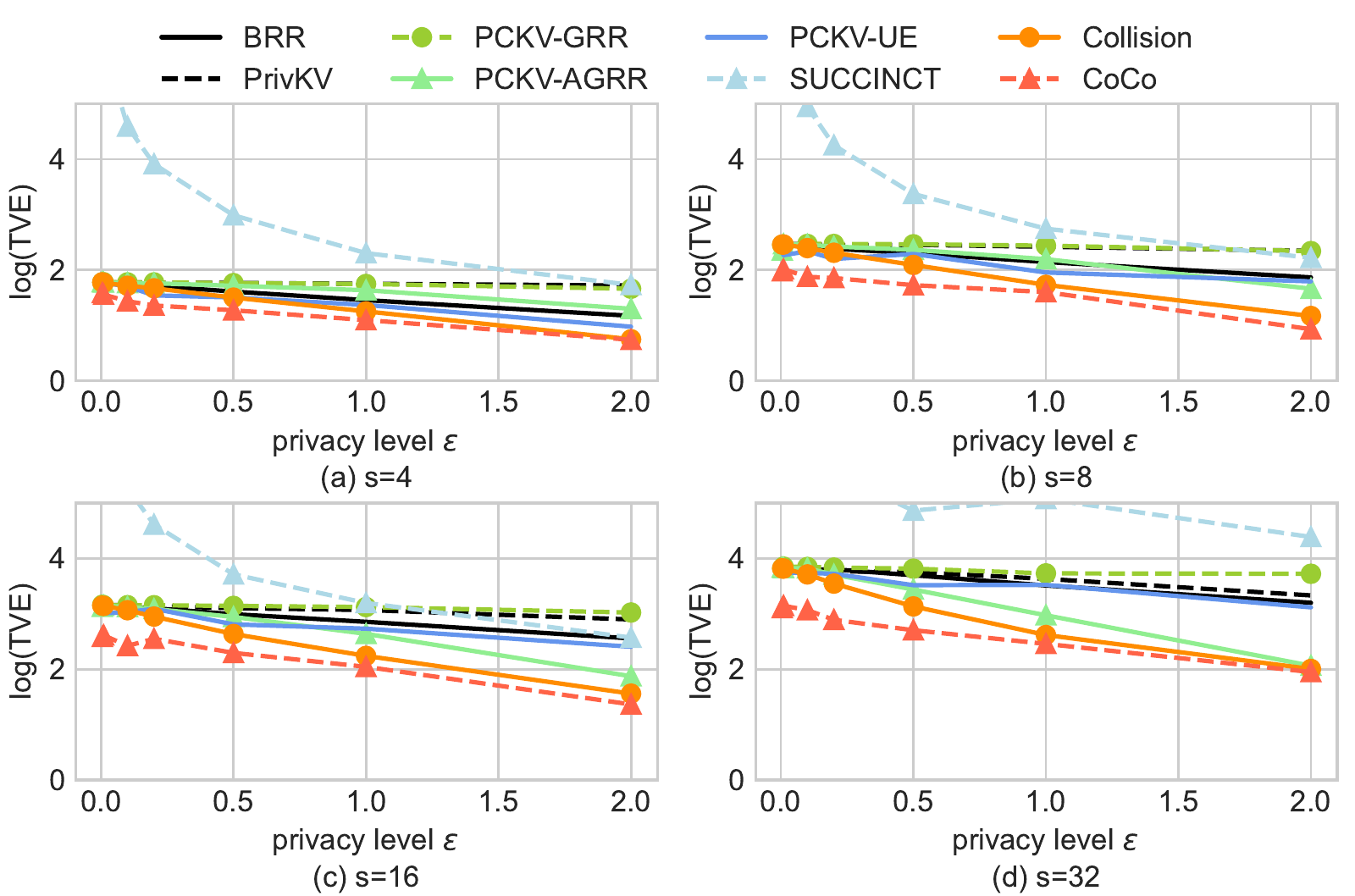}}
\vskip -0.1in
\caption{Mean estimation TVE results with post-processing on $n=100,000$ users with dimension $d=512$ when sparsity $s$ ranges from $4$ to $32$.}
\label{fig:tve512n100000mean}
\end{center}
\vspace*{-1.5em}
\end{figure}

\subsubsection{Effects of number of users $n$}
Simulated with $n=10,000$ users and dimension $d=256$, the TVE results are listed in Figure \ref{fig:tve256n10000mean}. Compared to the case of $n=100,000$ (i.e. Figure \ref{fig:tve256n100000mean}, the TVE is about $\sqrt{100000/10000}$ times larger.

\begin{figure}[t]
\begin{center}
\centerline{\includegraphics[width=82mm]{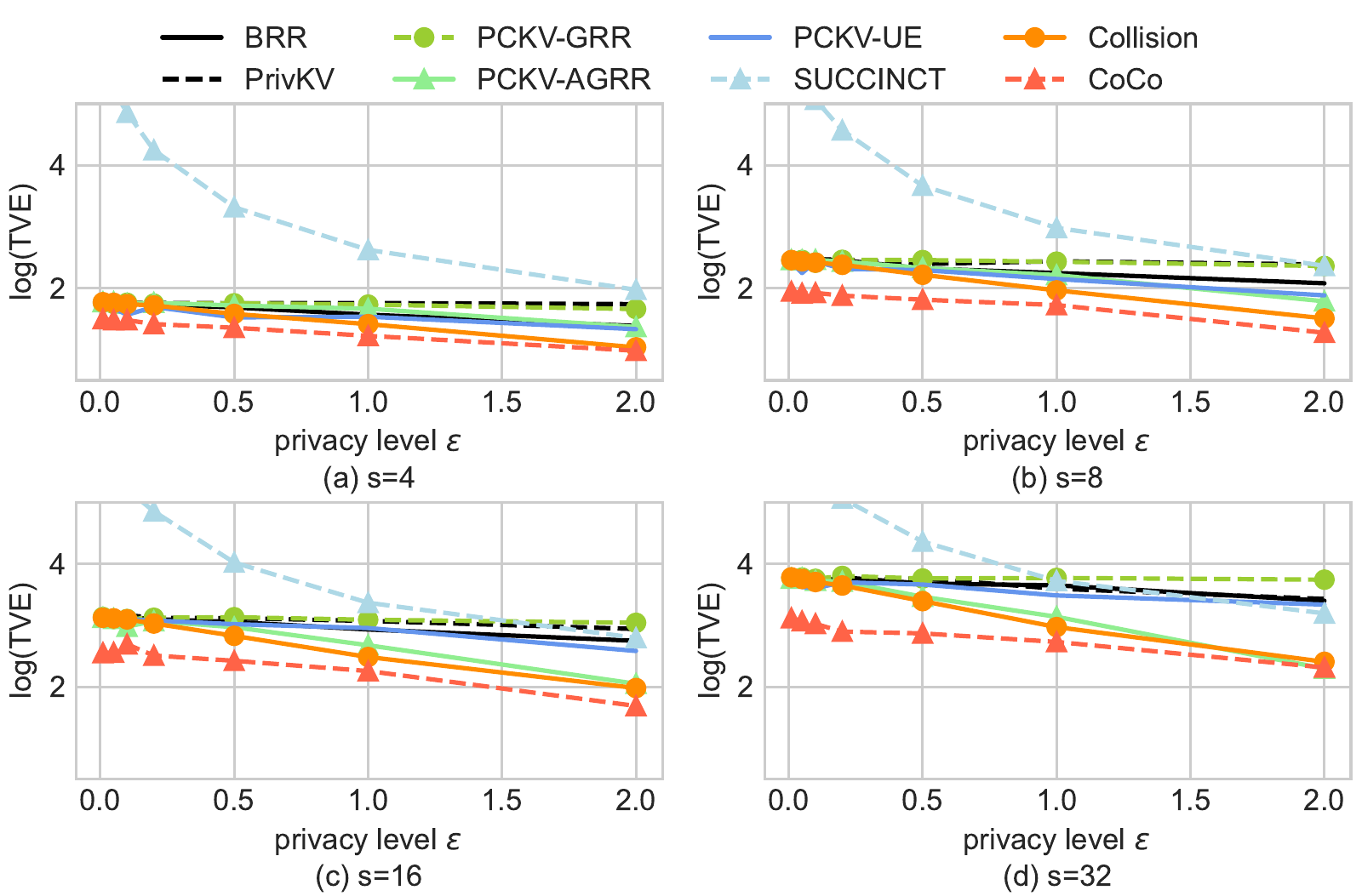}}
\vskip -0.1in
\caption{Mean estimation TVE results with post-processing on $n=10,000$ users with dimension $d=256$ when sparsity $s$ ranges from $4$ to $32$.}
\label{fig:tve256n10000mean}
\end{center}
\vspace*{-1.5em}
\end{figure}

\subsection{Experimental summary}
Through experimental evaluation, we conclude that the Collision mechanism outperforms existing approaches in all cases for frequency estimation (especially when $1\ll s\ll d$), and the CoCo mechanism further improve accuracy by about $15\%$ for mean estimation. Their performance gaps confirm our theoretical analyses on error bounds.

\vspace{0.6em}
\section{Conclusion}\label{sec:conclusion}
Within the local and shuffle model of differential privacy, this work has presented several \emph{simple yet optimal} results for the problem of numerical vector statistical estimation, which has its applications in federated learning and key-value data aggregation.
We provided tight minimax error bounds for locally private estimation on numerical vectors. Our proof relies on a novel decomposition technique for data domain with sparse structure and an application of the local private version of Assouad methods. Given that existing approaches are suffering gaps form the minimax error bound, we further design an optimal mechanism based on frequency estimation, and then give an efficient implementation with $O(s)$ or $O(\log s)$ computation/communication complexity. Specifically for mean estimation, we propose the CoCo mechanism, which utilizes the negative correlation in frequencies to further reduce estimation error. To break the error bound of LDP, we consider numerical vector estimation in the shuffle model, and derive tight privacy amplification bounds for proposed mechanisms.
Experimental results show $30\%$-$60\%$ error reduction of our proposed optimal mechanisms when compared with current approaches. 

\textbf{Future researches.} While this work studied numerical vector analyses in the single-message shuffle model, it is promising to further improve utility with multi-message protocols, at the cost of more communication overheads (e.g., tens of messages) per user.

\section*{Acknowledgements}
This work is extended from \cite{wang2021hiding} in the 30th International Joint Conference on Artificial Intelligence (IJCAI 2021). Shaowei Wang is supported by National Key Research and Development (R\&D) Program (Young Scientist Scheme No. 2022YFB3102400), National Natural Science Foundation of China (No.62102108), Natural Science Foundation of Guangdong Province of China (No.2022A1515010061), Guangzhou Basic and Applied Basic Research Foundation (No.202201010194, No.622191-098). This work is also supported by National Key Project of China (No. 2020YFB1005700), National Natural Science Foundation of China for Joint Fund Project (No. U1936218), and the Pazhou lab, Guangzhou, China. 


\bibliographystyle{IEEEtran}
\bibliography{main}

\clearpage
\nobalance
\appendices

\section{Proof of Minimax Lower Bounds}\label{app:minimax}
According to Lemma \ref{lemma:ldpassouad}, in order to derive a good lower bound, we need to construct a well hamming-separated class of distributions, and simultaneously minimize the maximum possible marginal distance. Our proof follows four steps:
\begin{enumerate} 
    \item[1)]{\edit{Constrain} numerical vector data to decomposable cases. Following analyzing procedures will utilize this decomposability for simplifying analysis on numerical vector into multiple categorical cases. Specifically, we assume that the dimensionality $d$ is divisible by the sparsity parameter $s$, that is, we can define an integer value $l:=\frac{d}{s}$. We can then separate $d$ dimensions to $s$ buckets, and the $a$-th bucket is $B_a=\{\mathbf{x}_{a\cdot l+1},...,\mathbf{x}_{a\cdot l+l}\}$ ($0\leq a\leq s-1$). As a special form of numerical vector data, we consider cases when each bucket $B_a$ has exactly $1$ non-zero entry and call such numerical vector data a decomposable one. Such decomposability, along with independence among buckets, allows us to simplify the proof.}
    \item[2)]{Construct $2\delta$-Hamming separation distributions. Follow standard procedure of Assouad method, we set \edit{$\mathcal{V}_a\in\{-1,1\}^{l}$} for each bucket $B_a$ ($a \in [0,s-1]$}) and define a hypercube as  \edit{$\mathcal{V}=\Pi_{a=0}^{s-1}\mathcal{V}_a$}. Fixing $\delta\in [0,1]$, for $\nu \in \mathcal{V}$, separately consider each bucket, we define \edit{$\theta_{\nu_a}\in \mathbb{R}^{2l}$} be the multinomial distribution for bucket $B_a$ as:
    $$\theta_{\nu_a}:= \frac{s}{2d}\mathbf{1}+\delta\frac{s}{2d}\begin{bmatrix}\mathcal{V}_a\\ -\mathcal{V}_a\end{bmatrix},$$
    where the $j$-th element of $\theta_{\nu_a}$ represents the probability that the $a\cdot l +j$-th element of the numerical vector is $1$ for $j \in [1:l]$, while the $(l+j)$-th element of $\theta_{\nu_a}$ represents the probability that the $a\cdot l +j$-th element of the numerical vector is $-1$ for $j \in [1:l]$.

    Assuming independence among buckets, we then define the probability distribution over the universe $\mathcal{X}^s$ as a product distribution \edit{$\Pi_{a=0}^{s-1}\mathcal{\theta}_{\nu_a}$}. The item distribution is hence  \edit{$\theta_\nu=[\theta_{\nu_0}\ \theta_{\nu_2}\ ...\ \theta_{\nu_{s-1}}]$.} For any estimator \edit{$\widehat{\theta}=[\widehat{\theta}_{\nu_0}\ \widehat{\theta}_{\nu_2}\ ...\ \widehat{\theta}_{\nu_{s-1}}]$}, by defining $\widehat{\nu}_a=sign(\widehat{\theta}_{\nu_a})$ for $a \in [0,s-1]$, we have lower bound on separation:
    \begin{equation}\label{eq:separation}
    \|\widehat{\theta}-\theta_\nu\|_2^2\geq \frac{\delta^2 s^2}{d^2}\sum_{j=1}^{l}\sum_{a=1}^s\llbracket\widehat{\nu}_{a_j}\neq \nu_{a_j}\rrbracket.
    \end{equation}
    
    \item[3)]{Bound the maximum distance of induced marginal distributions. We now turn to bounding sums of integrals $\int_{\mathcal{X}^s}\gamma(x)(d P_{+j}(x)-d P_{-j}(x))$, and prove following inequality:
    $$\sup_{\gamma\in \mathbb{B}_\infty(\mathcal{X}^s) }\sum_{j=1}^{d}(\int_{\mathcal{X}^s}\gamma(x)(d P_{+j}(x)-d P_{-j}(x)))^2\leq \frac{8\delta^2 s}{d}.$$
    Actually, by construction, $P_{+j}$ is a joint distribution $\Pi_{a=1}^s \big[\frac{s}{2d}\mathbf{1}+\frac{s\delta}{2d}[e_{j\mod{l}}^\top -e_{j\mod{l}}^\top]^\top\llbracket\lfloor j/l \rfloor = a\rrbracket\big] \in \Delta_{2l}$ and similarly for $P_{-j}$, where $e_j\in\{0,1\}^{l}$ denote the $j$-th standard basis vector. Due to the interleaving structure of the $s$-dimensional distribution $P_{+j}$ and $P_{+j}$, for any $\gamma \in [-1,1]^{2d}$, we have:
    $$\sum_{a=1}^s\sum_{j=1}^{l}(\int_{\mathcal{X}^s}\gamma(x)(d P_{+j}(x)-d P_{-j}(x)))^2 \leq \frac{8\delta^2 s}{d},$$
    that is, assigning $\gamma$ along one of the dimension maximizes the overall integral.}
    \item[4)]{Bound the minimax risks. Applying Lemma \ref{lemma:ldpassouad} and \edit{substituting} the hamming separation parameter $\delta$ as $\frac{\delta^2 s^2}{d^2}$, we have:
    $$\max_{\nu \in \mathcal{V}} \mathbb{E}_{P_\nu}[\|\widehat{\theta}-\theta_\nu\|_2^2]\geq \frac{\delta^2 s^2}{d}[1-(4n(e^\epsilon-1)^2\delta^2s/d^2)^\frac{1}{2}].$$
    By choosing the parameter $\delta^2$ at $\min\{1, d^2/(16n(e^\epsilon-1)^2s)\}$, we have the lower bound of:
    $$\mathfrak{M}_n(\theta(\mathcal{P}), \|\cdot\|_2^2, \epsilon)\geq \min\{\frac{s^2}{4d},\frac{d s}{64n(e^\epsilon-1)^2}\}.$$}
\end{enumerate}

\section{Proof of Mixture Property of Collision}\label{app:mixture}
Let $R_1$, $R_1'$, and $R_i$ denote the probability distributions of $\mathcal{R}(x_1)$, $\mathcal{R}(x_1')$, and $\mathcal{R}(x_i)$, respectively, where $i\in [2:n]$. In this paper, we establish the existence of mixture distributions by means of construction. Specifically, we define the probability distributions of $\mathcal{Q}_1$, $\mathcal{Q}_1'$, $\mathcal{Q}_1^*$, $\mathcal{Q}_2$, $\ldots$, and $\mathcal{Q}_n$ in the following manner:
\begin{equation*}
    \mathcal{Q}_1[H,z] :=\left\{
    \begin{array}{@{}lr@{}}
        \frac{R_1[H,z]-R_1'[H,z]}{(e^{\epsilon}-1)\beta}, & \text{if } R_1[H,z]> R_1'[H,z];\\
        0, & \text{else,}
    \end{array}
    \right. 
\end{equation*}
\begin{equation*}
    \mathcal{Q}_1'[H,z] :=\left\{
    \begin{array}{@{}lr@{}}
        \frac{R_1'[H,z]-R_1[H,z]}{(e^{\epsilon}-1)\beta}, & \text{if } R_1[H,z]> R_1'[H,z];\\
        0, & \text{else,}
    \end{array}
    \right. 
\end{equation*}
\begin{equation*}
    \mathcal{Q}_1^*[H,z] :=\frac{\min\{R_1'[H,z],R_1[H,z]\}}{1-\beta-e^{\epsilon}\beta}-\frac{ |R_1[H,z]-R_1'[H,z]|}{(e^{\epsilon}-1)(1-\beta-e^{\epsilon}\beta)};
\end{equation*}
\begin{equation*}
    \mathcal{Q}_i[H,z] :=
        \frac{R_i[H,z]-\beta(\mathcal{Q}_1[H,z]+\mathcal{Q}_1'[H,z])}{1-2\beta}.
\end{equation*}

To establish Equations (\ref{eq:mix1}), (\ref{eq:mix2}), and (\ref{eq:mix3}), we first demonstrate that $\mathcal{Q}_1$ and $\mathcal{Q}_1'$ are valid probability distributions. For $\mathcal{Q}_1$, we observe that the probability $\mathcal{Q}_1[H,z]$ is non-negative everywhere, and $\sum_{(H,z)\in \mathcal{H}\times\mathcal{Z}} \mathcal{Q}_1[H,z] = \frac{\sum_{(H,z)\in \mathcal{H}\times\mathcal{Z}} |R_1[H,z]- R_1'[H,z]|}{2(e^{\epsilon}-1)\beta}=1$. Similarly, we establish that $\mathcal{Q}_1'$ is a valid probability distribution.

Next, we establish that $\mathcal{Q}_1^*$ is a valid distribution and demonstrate that Equations \ref{eq:mix1} and \ref{eq:mix2} hold. Specifically, since $R_1'[H,z]\leq e^{\epsilon} R_1[H,z]$ and $R_1[H,z]\leq e^{\epsilon} R_1'[H,z]$, it follows that $\mathcal{Q}_1^*[H,z]$ is non-negative. Furthermore, we have $R_1[H,z]=e^\epsilon \beta \mathcal{Q}_1[H,z] + \beta \mathcal{Q}_1'[H,z]+(1-\beta-e^\epsilon \beta)\mathcal{Q}_1[H,z]$ for all $H,z$, which implies that $\mathcal{R}(x_1^0)=e^{\epsilon} \beta \mathcal{Q}_1 + \beta \mathcal{Q}_1'+(1-\beta-e^{\epsilon} \beta)\mathcal{Q}_1^*$ (and Equation \ref{eq:mix1} holds).

Since $\mathcal{Q}_1$ and $\mathcal{Q}_1'$ are valid distributions and $e^{\epsilon} \beta+ \beta+(1-\beta-e^{\epsilon} \beta)=1$, it follows that $\mathcal{Q}_1^*$ is a valid distribution. Similarly, we demonstrate that Equation \ref{eq:mix2} holds.

We demonstrate the validity of the distribution $\mathcal{Q}_i$ for $i\in[2:n]$ and the Equation $\eqref{eq:mix3}$. Since either $\mathcal{Q}_1[H,z]$ or $\mathcal{Q}_1'[H,z]$ is equal to zero, we have $\mathcal{Q}_1[H,z]+\mathcal{Q}_1'[H,z]=\max(\mathcal{Q}_1[H,z],\mathcal{Q}_1'[H,z])$. Utilizing the property of $\epsilon$-LDP of $\mathcal{R}$, we obtain $R_i[H,z]\geq \max(R_1[H,z],R_1'[H,z])/e^\epsilon\ge(e^\epsilon \beta\mathcal{Q}_1[H,z]+e^\epsilon \beta\mathcal{Q}_1'[H,z])/e^\epsilon\geq \beta\mathcal{Q}_1[H,z]+ \beta\mathcal{Q}_1'[H,z]$. Given that $\beta+\beta+(1-2\beta)=1$, we conclude that $\mathcal{Q}_i$ is a valid distribution. The Equation $\eqref{eq:mix3}$ follows directly from the definition of $\mathcal{Q}_i$.

\section{Proof of Mean Squared Error Formulas of CoCo}\label{app:cocomseformula}
For Equation (\ref{eq:nmfmse}), we consider two cases separately: when $\llbracket j_+ \in \mathbf{Y}_\mathbf{x}\rrbracket =1 \text{ or } \llbracket j- \in \mathbf{Y}_\mathbf{x}\rrbracket =1$, and when $\llbracket j+ \in \mathbf{Y}_\mathbf{x}\rrbracket =0\text{ and } \llbracket j- \in \mathbf{Y}_\mathbf{x}\rrbracket =0$. In the first case, the variable $\llbracket H(j+)=z\rrbracket +\llbracket H(j_-)=z\rrbracket $ in Algorithm \ref{alg:cocoestimator} is a Bernoulli variable of success rate $P_t+P_o$, which implies that $Var[\llbracket H(j_+)=z\rrbracket +\llbracket H(j_-)=z\rrbracket ]=(P_t+P_o)(1-P_t-P_o)$ and $Var[\widehat{\overline{\mathbf{x}}}{j}]=\frac{(P_t+P_o)(1-P_t-P_o)}{(P_t+P_o-2P_f)^2}$. In the second case, the variable $\llbracket H(j+)=z\rrbracket +\llbracket H(j_-)=z\rrbracket $ is a Bernoulli variable of success rate $2P_f$, which implies that $Var[\llbracket H(j_+)=z\rrbracket +\llbracket H(j_-)=z\rrbracket ]=(2P_f)(1-2P_f)$ and $Var[\widehat{\overline{\mathbf{x}}}_{j}]=\frac{(2P_f)(1-2P_f)}{(P_t+P_o-2P_f)^2}$

Considering Equation (\ref{eq:nmfmse}), we analyze two cases separately: (i) $\llbracket j_+ \in \mathbf{Y}_\mathbf{x}\rrbracket =1$ or $\llbracket j_- \in \mathbf{Y}_\mathbf{x}\rrbracket =1$, and (ii) $\llbracket j_+ \in \mathbf{Y}_\mathbf{x}\rrbracket =0$ and $\llbracket j_- \in \mathbf{Y}_\mathbf{x}\rrbracket =0$. In the first case, the random variable $\llbracket H(j_+)=z\rrbracket +\llbracket H(j_-)=z\rrbracket$ in Algorithm \ref{alg:cocoestimator} is a Bernoulli variable with a success rate of $P_t+P_o$. Therefore, the variance of $\llbracket H(j_+)=z\rrbracket +\llbracket H(j_-)=z\rrbracket$ is $Var[\llbracket H(j_+)=z\rrbracket +\llbracket H(j_-)=z\rrbracket ]=(P_t+P_o)(1-P_t-P_o)$, and the variance of $\widehat{\underline{\mathbf{x}}}{j}$ is $Var[\widehat{\underline{\mathbf{x}}}{j}]=\frac{(P_t+P_o)(1-P_t-P_o)}{(P_t+P_o-2P_f)^2}$. In the second case, $\llbracket H(j_+)=z\rrbracket +\llbracket H(j_-)=z\rrbracket$ is a Bernoulli variable with a success rate of $2P_f$, leading to a variance of $Var[\llbracket H(j_+)=z\rrbracket +\llbracket H(j_-)=z\rrbracket ]=(2P_f)(1-2P_f)$, and the variance of $\widehat{\underline{\mathbf{x}}}{j}$ is $Var[\widehat{\underline{\mathbf{x}}}{j}]=\frac{(2P_f)(1-2P_f)}{(P_t+P_o-2P_f)^2}$. In every $\mathbf{x}$, there are $s$ indices $j\in [d]$ satisfying the first case and $d-s$ indices satisfying the second case. Thus, the total error can be expressed as $\frac{s (P_t+P_o)(1-P_t-P_o)+(d-s)(2P_f)(1-2P_f)}{(P_t+P_o-2P_f)^2}$.

Consider Equation (\ref{eq:meanmse}) and three cases therein: $\llbracket j_+ \in \mathbf{Y}_\mathbf{x}\rrbracket =1$, $\llbracket j- \in \mathbf{Y}_\mathbf{x}\rrbracket =1$, $\llbracket j_b \in \mathbf{Y}_\mathbf{x}\rrbracket =0$ and $\llbracket j_b \in \mathbf{Y}_\mathbf{x}\rrbracket =0$. For the first case, the random variable $\llbracket H(j+)=z\rrbracket -\llbracket H(j_-)=z\rrbracket $ in Algorithm \ref{alg:cocoestimator} follows a probability distribution with the following probabilities:
\begin{equation*}
    \left\{
    \begin{array}{@{}lr@{}}
        \ \ \ 1,\ \ \ \ \text{with prob}\ \  P_t;\\
        \ \ \ 0,\ \ \ \ \text{with prob}\ \ 1-(e^\epsilon+1)/\Omega;\\
        -1,\ \ \ \ \text{with prob}\ \ P_o.\\
    \end{array}
    \right.
\end{equation*}
Therefore, the variance of the random variable $\llbracket H(j_+)=z\rrbracket -\llbracket H(j_-)=z\rrbracket $ is $Var[\llbracket H(j_+)=z\rrbracket -\llbracket H(j_-)=z\rrbracket ]=(P_t+P_o)-(P_t-P_o)^2$. Similarly, in the second case, the random variable $\llbracket H(j_+)=z\rrbracket -\llbracket H(j_-)=z\rrbracket $ follows a probability distribution with the following probabilities:
\begin{equation*}
    \left\{
    \begin{array}{@{}lr@{}}
        \ \ \ 1,\ \ \ \ \text{with prob}\ \  P_o;\\
        \ \ \ 0,\ \ \ \ \text{with prob}\ \ 1-(e^\epsilon+1)/\Omega;\\
        -1,\ \ \ \ \text{with prob}\ \ P_t.\\
    \end{array}
    \right.
\end{equation*}.
Thus, the variance of $\llbracket H(j_+)=z\rrbracket -\llbracket H(j_-)=z\rrbracket $ in this case is also given by $Var[\llbracket H(j_+)=z\rrbracket -\llbracket H(j_-)=z\rrbracket ]=(P_t+P_o)-(P_t-P_o)^2$. In the third case, the random variable $\llbracket H(j_+)=z\rrbracket -\llbracket H(j_-)=z\rrbracket $ follows a probability distribution with the following probabilities:
\begin{equation*}
    \left\{
    \begin{array}{@{}lr@{}}
        \ \ \ 1,\ \ \ \ \text{with prob}\ \  P_f;\\
        \ \ \ 0,\ \ \ \ \text{with prob}\ \ 1-2 P_f;\\
        -1,\ \ \ \ \text{with prob}\ \ P_f.\\
    \end{array}
    \right.
\end{equation*}

The variance of the difference between the indicator functions of $H(j_+)=z$ and $H(j_-)=z$ is equal to $2P_f$. Additionally, the variance of the estimator $\widehat{\overline{\mathbf{x}}}_{j}$ is $\frac{2P_f}{(P_t-P_o)^2}$. For each vector $\mathbf{x}$, there exist $s$ indices $j \in [d]$ that satisfy either the first or the second case, while $d-s$ indices satisfy the second case. Thus, the overall error can be expressed as $\frac{s((P_t+P_o)-(P_t-P_o)^2) +(d-s)(2P_f)}{(P_t-P_o)^2}$.

\section{Proof of Mean Squared Error Bounds of CoCo}\label{proof:cocomse}
For proving the Equation (\ref{eq:nmfmsebound}), we separately consider two formulas $\frac{(P_t+P_o)(1-P_t-P_o)}{(P_t+P_o-2P_f)^2}$ and $\frac{2P_f(1-2P_f)}{(P_t+P_o-2P_f)^2}$ in Equation (\ref{eq:nmfmse}). In the first formula, when $t=e^\epsilon s+5s$ and $\epsilon=O(1)$, we have $P_t+P_o=\frac{e^\epsilon+1}{s(e^\epsilon+1)+e^\epsilon s+4s}$ and $2P_f=\frac{2}{{s(e^\epsilon+1)+e^\epsilon s+4s}}$, thus $\frac{(P_t+P_o)(1-P_t-P_o)}{(P_t+P_o-2P_f)^2}\leq \frac{P_t+P_o}{(P_t+P_o-2P_f)^2} \leq \frac{c_1 s}{\epsilon^2}$ holds with some constant $c_1\in \mathbb{R}^+$ for any $s\in \mathbb{Z}^+$ and $0 < \epsilon=O(1)$. Similarly in the second formula,  $\frac{2P_f(1-2P_f)}{(P_t+P_o-2P_f)^2}\leq \frac{2P_f}{(P_t+P_o-2P_f)^2} \leq \frac{c_2 s}{\epsilon^2}$ holds with some constant $c_2\in \mathbb{R}^+$ for any $s\in \mathbb{Z}^+$ and $\epsilon=O(1)$. Therefore, the Equation (\ref{eq:nmfmse}) is now bounded by:
$$\frac{c_1 s+(d-s) c_2 s }{\epsilon^2}\leq O(\frac{d s}{\epsilon^2}).$$

For proving Equation (\ref{eq:meanmsebound}), we separately consider two formulas $\frac{(P_t+P_o)-(P_t-P_o)^2}{(P_t-P_o)^2}$ and $\frac{2P_f}{(P_t-P_o)^2}$ in Equation (\ref{eq:meanmse}). In the first formula, when $t=e^\epsilon s+s+2\geq 2s+2$ and $\epsilon=O(1)$, we have $P_{ow}\leq e^{-1}$, $P_t+P_o=\frac{e^\epsilon+1}{s(e^\epsilon+1)+e^\epsilon s+2}$ and $P_t-P_o=(1-P_{ow})   \frac{e^\epsilon-1}{s(e^\epsilon+1)+e^\epsilon s+2}$, thus  $\frac{(P_t+P_o)-(P_t-P_o)^2}{(P_t-P_o)^2}\leq \frac{P_t+P_o}{(P_t-P_o)^2} \leq \frac{c_3 s}{\epsilon^2}$ holds with some constant $c_3\in \mathbb{R}^+$ for any $s\in \mathbb{Z}^+$ and $\epsilon=O(1)$. Similarly in the second formula,  $\frac{2P_f}{(P_t-P_o)^2} \leq \frac{c_4 s}{\epsilon^2}$ holds with some constant $c_4\in \mathbb{R}^+$ for any $s\in \mathbb{Z}^+$ and $\epsilon=O(1)$. Therefore, the Equation (\ref{eq:meanmse}) is bounded by:
$$\frac{c_3 s+(d-s) c_4 s }{\epsilon^2}\leq O(\frac{d s}{\epsilon^2}).$$

\balance

\section{Proof of Mean Absolute Error Bounds of CoCo}\label{proof:cocomae}
Consider the $j$-th mean value $\overline{\mathbf{x}}_j$, according to Lemma \ref{lemma:cocomse}, the expectation of $\widehat{\overline{\mathbf{x}}}_j-\overline{\mathbf{x}}_j$ is $0$. Furthermore $\widehat{\overline{\mathbf{x}}}_j-\overline{\mathbf{x}}_j$ is the average of $n$ independent random variables $\frac{\llbracket H(j_+)=z]-[H(j_-)=z\rrbracket}{P_t-P_o}$, every of which lies in the range of:
 $$[\frac{-1}{P_t-p_o}, \frac{1}{P_t-P_o}].$$
When $\llbracket j_+ \in \mathbf{Y}_\mathbf{x}\rrbracket=1 \ or\  
 \llbracket j_- \in \mathbf{Y}_\mathbf{x}\rrbracket=1$, the variable has variation of $\frac{(P_t+P_o)-(P_t-P_o)^2}{(P_t-P_o)^2}$; when $\llbracket j_+ \in \mathbf{Y}_\mathbf{x}\rrbracket=0\ and\ \llbracket j_- \in \mathbf{Y}_\mathbf{x}\rrbracket=0$, the variable has variation of  $\frac{2P_f}{(P_t-P_o)^2}$. In both cases, since $P_{ow}\leq e^{-1}$ when $t\geq 2s+2$, we have $\frac{(P_t+P_o)-(P_t-P_o)^2}{(P_t-P_o)^2}\leq \frac{c}{s}$ and $\frac{2P_f}{(P_t-P_o)^2}\leq \frac{c_1 s}{\epsilon^2}$ for any $\epsilon=O(1), s\in \mathrm{R}^+$ with some constant $c_1\geq 0$.

According to the Bernstein inequalities on $n$ zero-mean bounded random variables \cite{uspensky1937introduction}, we have:
$$\mathbb{P}[|\widehat{\overline{\mathbf{x}}}_j-\overline{\mathbf{x}}_j|\geq \frac{a}{n}]\leq 2 \exp(-\frac{a^2/2}{n^2 Var[\widehat{\overline{\mathbf{x}}}_j-\overline{\mathbf{x}}_j]+a/(3 P_t-3 P_o)}).$$
Since $P_{ow}\leq e^{-1}$ when $t\geq 2s+2$, we have $1/(P_t-P_o)\leq \frac{c_2 s}{\epsilon}$ for any $\epsilon=O(1), s\in \mathbb{R}^+$ with some constant $c_1\geq 0$.

When $a\leq \frac{3 c_1 n}{c_2 \epsilon}$, we get $\mathbb{P}[|\widehat{\overline{\mathbf{x}}}_j-\overline{\mathbf{x}}_j|\geq \frac{a}{n}]\leq 2 \exp(-\frac{\epsilon^2 a^2/2}{2 c_1 n s}).$ Consequently, with probability $1-\beta$, we have $|\widehat{\overline{\mathbf{x}}}_j-\overline{\mathbf{x}}_j|\leq \sqrt\frac{2 c_1 s \log (2/\beta)}{\epsilon^2 n}$ (when $\sqrt\frac{2 c_1 s \log (2/\beta)}{\epsilon^2 n}\leq \frac{3 c_1 n}{c_2 \epsilon}$).

Now consider all $j \in [d]$, applying the union bound on tails probability $\frac{\beta}{d}$, we conclude that: with probability $1-\beta$, the $\max_{j=1}^d |\widehat{\overline{\mathbf{x}}}_j-\overline{\mathbf{x}}_j|\leq O( \sqrt\frac{s \log (d/\beta)}{\epsilon^2 n})$ holds (if $\sqrt\frac{2 c_1 s \log (2 d/\beta)}{\epsilon^2 n}\leq \frac{3 c_1 n}{c_2 \epsilon}$).


\end{document}